\documentclass[twoside,11pt]{entics} 
\usepackage{enticsmacro}
\usepackage{graphicx}
\usepackage[all]{xy}
\usepackage{amsmath,amssymb}
\newcommand\qedhere{\vspace{-0.9em}}
\usepackage{mdwlist}
\usepackage{xspace}
\usepackage{fancyvrb}
\usepackage{doi}
\sloppy

\newcommand\broadcast{\tf{bcst}}
\newcommand\echo{\tf{echo}}
\newcommand\ready{\tf{ready}}
\newcommand\deliver{\tf{dlvr}}

\newcommand\fvote{\f{vote}}
\newcommand\fdecide{\f{observe}}

\newcommand\figunderline[1]{\text{\rlap{\underline{\emph{#1}}}}}
\newcommand\figskip{\\}
\newcommand\rulefont[1]{\ensuremath{{\mathsf{(#1)}}}}

\newcommand\compressthislight[1]{{\hspace{.8pt}\raisebox{.5pt}{\scalebox{.85}{$#1$}}\hspace{.2pt}}}
\newcommand\compressthis[1]{\pmb{\compressthislight{#1}}}

\newcommand\tneg{{\pmb\neg}}

\newcommand\tbot{{\pmb\bot}}
\newcommand\botval{\text{\textonehalf}} 
\newcommand\teq{{\hspace{0.5pt}\pmb{\scalebox{0.7}[1]{=}}}}

\newcommand\tand{\mathrel{\pmb\wedge}}
\newcommand\tor{\mathbin{\pmb\vee}}
\newcommand\toplus{\mathbin{\pmb\oplus}}

\newcommand\predsymb{\ns{PredSymb}}

\newcommand\limp{\Longrightarrow}

\newcommand\tnotor{\mathbin{\rightarrow}}

\newcommand\timpc{\mathbin{\rightharpoonup}}

\newcommand\tall{{\compressthis{\forall}}}
\newcommand\texi{{\compressthis{\exists}}}
\newcommand\texiunique{{\compressthis{\existunique}}}
\newcommand\texiaffine{{\compressthis{\existaffine}}}
\newcommand\tvequal{=}

\newcommand\f[1]{\mathit{#1}}
\newcommand\tf[1]{\mathsf{#1}}

\newcommand\ns[1]{\mathsf{#1}}
\newcommand\plus{{+}}
\newcommand\minus{{\text{-}}}

\newcommand\lmodel{[\hspace{-0.2em}[}
\newcommand\rmodel{]\hspace{-0.2em}]}
\newcommand\model[1]{{\lmodel #1 \rmodel}}
\newcommand\modellabel[2]{{\lmodel #1 \rmodel}_{#2}}
\newcommand\modelm[1]{\modellabel{#1}{\mu}}
\newcommand\cent{\vdash}
\newcommand\ment{\vDash}
\newcommand\mentM{\ment_{\hspace{-2pt}\mu}}

\newcommand\nment{\mathrel{\not\vDash}}

\newcommand\Forall[1]{\forall #1.}
\newcommand\Exists[1]{\exists #1.}
\newcommand\ssm{{{:}\text{=}}}

\newcommand\existunique{\exists_{1}}
\newcommand\existaffine{\exists_{01}}

\usepackage{tikz}

\newcommand\points{\ns{Pnt}}
\newcommand\Val{\ns{Val}}

\newcommand\opens{\ns{Opn}}
\newcommand\opensne{\opens_{\hspace{-1pt}{\neq}\varnothing}}

\newcommand\weakmodusponens{weak modus ponens (Prop.~\ref{prop.mp.for.tnotor}(\ref{item.mp.for.tnotor}))\xspace}
\newcommand\strongmodusponens{strong modus ponens (Prop.~\ref{prop.mp.for.tnotor}(\ref{item.mp.for.timpc}))\xspace}
\newcommand\weakmodusponensnoref{weak modus ponens\xspace}

\newcommand\THREE{{\mathbf 3}}

\newcommand\tvT{{\mathbf t}}
\newcommand\tvF{{\mathbf f}}
\newcommand\tvB{{\mathbf b}}
\newcommand\modality[1]{#1\hspace{1.5pt}}
\newcommand\modT{\modality{\tf T}}
\newcommand\modTB{\modality{\tf T\hspace{-2.5pt}\tf B}}

\newcommand\modB{\modality{\tf B}}
\newcommand\modTF{\modality{\tf T\hspace{-2.5pt}\tf F}}
\newcommand\modF{\modality{\tf F}}
\newcommand\everyone{\modality{\Box}}
\newcommand\someone{\modality{\Diamond}}
\newcommand\someoneAll{\someone}
\newcommand\somewhere{\someone}
\newcommand\everyoneAll{\everyone}
\newcommand\everywhere{\everyone}

\usepackage{amssymb}
\makeatletter
\newcommand{\dotop}[1]{{\mathpalette\dotop@{#1}}}
\newcommand{\dotop@}[2]{%
  \vphantom{#2}%
  \ooalign{$\m@th#1\mathop#2$\cr\hidewidth$\m@th#1\cdot$\hidewidth\cr}%
}
\makeatother
\newcommand\jamiediamond{{\raisebox{-0.45pt}{\scalebox{1.35}[1]{$\lozenge$}}}}

\newcommand{\lddiamond}{\dotop{\jamiediamond}}
\newcommand{\ldsquare}{\raisebox{.7pt}{$\dotop{\square}$}}

\newcommand\Quorum{\ldsquare}
\newcommand\quorum{\Quorum}
\newcommand\QuorumBox{\Quorum}
\newcommand\Coquorum{\lddiamond}
\newcommand\Contraquorum{\Coquorum}
\newcommand\contraquorum{\Contraquorum}
\newcommand\CoquorumDiamond{\Coquorum}

\newcommand\correct[1]{\modTF\hspace{-2pt}[#1]}
\newcommand\incorrect[1]{\modB\hspace{-2pt}[#1]}
\newcommand\powerset{\f{powerset}}

\newcommand\deffont[1]{{\bfseries #1}}
\usepackage{slantsc}
\newcommand\theory[1]{\ensuremath{\text{\itshape\scshape #1}}\xspace}

\newcommand\ThyBB{\theory{ThyBB}}
\newcommand\ThyCA{\theory{ThyCA}}
\newcommand\ThyVote{\theory{ThyVote}}

\newenvironment{thrm}{\begin{theorem}}{\end{theorem}}
\newenvironment{lemm}{\begin{lemma}}{\end{lemma}}
\newenvironment{rmrk}{\begin{remark}}{\end{remark}}
\newenvironment{xmpl}{\begin{example}}{\end{example}}
\newenvironment{defn}{\begin{defnn}\upshape}{\end{defnn}}
\newenvironment{corr}{\begin{corollary}\upshape}{\end{corollary}}
\newenvironment{nttn}{\begin{nttnn}\upshape}{\end{nttnn}}

\volume{NN}			%
\begin{document}
\begin{frontmatter}
\vspace{-6em} %
\title{Declarative distributed algorithms as axiomatic theories in three-valued modal logic over semitopologies}
  \author{Murdoch J. Gabbay} %
  \address{{\normalfont\href{http://www.gabbay.org.uk}{gabbay.org.uk}}}
\begin{abstract} 
We illustrate how to formally specify distributed algorithms as declarative axiomatic theories in a modal logic, using as illustrative examples a simple voting protocol, a simple broadcast protocol (Bracha Broadcast), and a simple agreement protocol (Crusader Agreement).  
The methods scale well and have been used to find errors in a proposed industrial protocol.

The key novelty is to use modal logic to capture a declarative, high-level representation of essential system properties -- the logical essence of the algorithm -- while abstracting away from explicit state transitions of an abstract machine that implements it.
It is like the difference between specifying code in a functional or logic programming language, versus specifying code in an imperative one.
Thus we present axiomatisations of \emph{Declarative Bracha Broadcast} and \emph{Declarative Crusader Agreement}. 

A logical axiomatisation in the style we propose provides a precise, compact, human-readable specification that abstractly captures essential system properties, while eliding low-level implementation details; it is more precise than a natural language description, yet more abstract than source code or a logical specification thereof.
This creates new opportunities for reasoning about correctness, resilience, and failure, and could serve as a foundation for human- and machine verification efforts, design improvements, and even alternative protocol implementations. 

The proofs in this paper have been formalised in Lean~4.%

\end{abstract}
\begin{keyword}
Distributed algorithms, three-valued modal logic, semitopology, Declarative Bracha Broadcast, Declarative Crusader Agreement 
\end{keyword}
\tableofcontents
\end{frontmatter}
\section{Introduction}
\label{sect.intro}

\subsection{Introducing the declarative, axiomatic approach}

We offer a new approach to analysing distributed algorithms, by exhibiting them as axiomatic theories in a purpose-built three-valued modal logic (one with a `middle' truth-value $\tvB$ between $\tvT$ and $\tvF$) over a semitopology (like topology but without the condition that the intersection of two open sets is necessarily open)~\cite{gabbay:semdca,gabbay:semtad}.
This captures a declarative essence of the essential system properties.

Nonempty open sets of a semitopology represent quorums.
If a predicate returns $\tvT$ (true) or $\tvF$ (false) this reflects correct behaviour, and if it returns $\tvB$ (both / byzantine) this reflects faulty behaviour.

The effect is similar to the difference between an explicit for-loop to apply a function \verb+f+ to each element of a list \verb+l+, and just writing \verb+map f l+: we abstract declaratively to the essential logic of the algorithm.

This paper is designed to introduce the basic ideas of this new declarative / axiomatic approach.
These techniques can be applied to larger examples, including modern industrial-scale ones, but here we consider simpler algorithms so that we get a clearer view of the basic idea.
We consider Bracha Broadcast and Crusader Agreement: these are short but non-trivial examples of the genre; they illustrate our techniques; and this provides an accessible introduction to and exposition on this new way to apply logic to study distributed algorithms. 

In this paper, our examples (voting, broadcast, agreement) will be toy and/or textbook algorithms.
These toy examples will not, and are not intended to, showcase the full power of our proposed approach relative to standard distributed computing methods.
This is a feature, not a bug: we show how to attain simplicity via abstraction in logic, using familiar examples as case studies.
Attaining this simplicity is mathematically non-trivial, and we shall see that there is plenty of subtlety to discuss along the way (and much mathematical beauty, too).

The proofs in this paper have been formalised in Lean~4 and sources are freely available for download~\cite{rovira:hbbl4}.

As for scaling up, the techniques illustrated in this paper scale well. 
In fact, more complex algorithms create \emph{more} scope to benefit from the abstractions in this paper by eliding implementational detail:
\begin{enumerate*}
\item
In other work we apply these ideas to Paxos~\cite{gabbay:decasd}.\footnote{The logic and models needed to do this are different, reflecting that Paxos is a different algorithm, but the basic `declarative' principle is similar.}
\item
Since this paper was conceived, we scaled up declarative techniques to analyse \emph{and find errors in} a proposed industrial protocol called \emph{Heterogeneous Paxos}~\cite{sheff:hetp,sheff_et_al:LIPIcs.OPODIS.2020.5}.
The complexity of Heterogeneous Paxos %
had made it prohibitively hard to debug using traditional, imperative techniques. 
We completed a declarative analysis, found the error, and we used declarative techniques to design a replacement protocol~\cite{gabbay:hettrb}, whose correctness was formally verified in Lean~4~\cite{hart:hetbl4} by Hart.\footnote{Hart also worked on the original pseudocode/English specification.  He reported that he found the axiomatic/declarative specification and proofs much easier to work with.}
\end{enumerate*}

\emph{A note on terminology:}\ we will use `distributed algorithm' and `distributed protocol' synonymously.
The difference is in emphasis: if we write `algorithm' we imply an omniscient view of the system from the outside, whereas if we write `protocol' we imply a view of the system from a participant communicating with other participants according to the rules of the algorithm.
However, this difference in emphasis just reflects a switch in point of view.
The reader can safely identify both words with the same meaning throughout.

\subsection{Motivation: distributed algorithms}

Distributed algorithms are algorithms that run across multiple participants, such that the overall behaviour of the system emerges from their cooperation.
Contrast with a \emph{parallel} algorithm, which also runs across multiple participants but which would still make sense running (albeit more slowly) on a single participant.\footnote{A GPU rendering algorithm is a good example of a typical parallel algorithm.  It could run on a single-threaded CPU, and that would make perfectly good sense; it would just be too slow.
In contrast, a blockchain is a good example of an algorithm that is fundamentally distributed and not parallel: it could still run on just one single participant, but that would completely defeat the point of the blockchain, which is precisely to \emph{be} distributed across many participants.}
Distribution can provide benefits in resilience and scalability which could not be obtained any other way.

A fundamental challenge of distribution is that we cannot assume that all participants in the algorithm are correct: we need to be resilient against failure, and against hostile (also called \emph{byzantine}) behaviour.
Distributed algorithms are notoriously difficult to design, specify, reason about, and validate, because of the inherent difficulty of coordinating multiple participants who may be faulty, hostile, drifting in and out of network connectivity~\cite{sleepy,santoro:timnh}, and have different local clocks, states, agendas, etc.

This gives the field of distributed algorithms a very particular flavour.
Everything revolves around this central question: how does our algorithm guarantee prompt and reliable results for correct participants, when parts of the system may be faulty? 

Blockchain consensus algorithms are a good source of examples, both for distributed algorithms and for how things can go wrong.\footnote{This work was motivated by analysis of blockchain systems.} 
Solana (\url{https://solana.com}) has experienced instances where its consensus mechanism contributed to temporary network halts~\cite{yakovenko2018solana,shoup2022poh}; Ripple's (\url{https://ripple.com}) XRP consensus protocol has been analysed for conditions in which safety and liveness may not be guaranteed~\cite{DBLP:conf/opodis/Amores-SesarCM20};
and Avalanche's (\url{https://www.avax.network}) consensus protocol has faced scrutiny regarding its assumptions and performance in adversarial scenarios~\cite{DBLP:conf/opodis/Amores-SesarCT22}.
Even mature blockchains like Ethereum are prone to this: e.g. \emph{balancing attacks}~\cite{DBLP:conf/sp/NeuTT21} and \emph{ex-ante reorganisation} (\emph{reorg}) attacks~\cite{DBLP:conf/fc/Schwarz-Schilling22} exploit vulnerabilities in Ethereum's current protocol.
Even as this paper was prepared, XRP consensus experienced a slowdown and stoppage, which was addressed in a GitHub commit.\footnote{\href{https://cryptodamus.io/en/articles/news/xrpl-meltdown-64-minute-outage-xrp-ledger-s-february-4th-2025-halt-decoded}{cryptodamus.io/en/articles/news/xrpl-meltdown-64-minute-outage-xrp-ledger-s-february-4th-2025-halt-decoded} (\href{https://web.archive.org/web/20251224082741/https://cryptodamus.io/en/articles/news/xrpl-meltdown-64-minute-outage-xrp-ledger-s-february-4th-2025-halt-decoded}{permalink}) and \href{https://github.com/XRPLF/rippled/pull/5277}{github.com/XRPLF/rippled/pull/5277} (\href{https://wayback-api.archive.org/web/20251224083444/https://github.com/XRPLF/rippled/pull/5277}{permalink}).}

Thus, subtle flaws and/or ambiguities in protocol design have significant consequences, emphasising the need for rigorous analysis and well-defined specifications in this complex and safety-critical field.

Meanwhile in the real world, such analysis and specification may be abbreviated or skipped altogether: that is, in industrial practice a protocol may go from an initial idea written in English or pseudocode to an implementation, without a rigorous analysis.
This is because such analysis would be just too expensive.
Part of the motivation for this work is, via logic, to provide a precise, compact, nontrivial, yet (relatively) lightweight methodology for specifying and reasoning about distributed protocols. 
Returning to the example above of an explicit for-loop versus a map function, the idea is to attain compact proofs by being as declarative and static as possible, while preserving the essence of the algorithm. 

\subsection{Map of the paper}

\begin{itemize*}
\item
We give a simple voting algorithm in Section~\ref{sect.illustrative.example}.
This gives us an opportunity to introduce semitopologies and three-valued logic.
\item
We define the syntax of a modal logic in Section~\ref{sect.logic}: this is the formal language which we will use for our axioms and correctness properties.
\item
We study Bracha Broadcast in Section~\ref{sect.bb}.
\item
We study Crusader Agreement in Section~\ref{sect.crusader.agreement}.
\item
Finally, we conclude with Section~\ref{sect.conclusions}.
\end{itemize*}

\section{Motivation and an illustrative example: voting}
\label{sect.illustrative.example}

We will warm up by studying a voting algorithm. 
This is simpler than Bracha Broadcast but is still enough to invoke much of our machinery, including the three-valued logic and semitopologies.\footnote{We defer a treatment of quantifiers and formal logical syntax to the case of Bracha Broadcast.}
The axiomatisation is in Figure~\ref{fig.simple}.
In Example~\ref{xmpl.vote} we describe the protocol informally:

\subsection{Protocol description}
 
\begin{xmpl}
\label{xmpl.vote}
Fix $f\geq 1$ and consider a set $\points$ of $3f\plus 1$ \deffont{participants}. 
\begin{itemize*}
\item
Call a subset $Q\subseteq\points$ a \deffont{quorum} when $\#Q\geq 2f\plus 1$; 
\item
call $Q$ a \deffont{coquorum} when $\#Q\leq f$; and 
\item
call $Q$ a \deffont{contraquorum} or \deffont{blocking set} when $\#Q\geq f\plus 1$.
\end{itemize*}
A protocol to hold a ballot to choose between \emph{true} $\tvT$ and \emph{false} $\tvF$ proceeds as follows:
\begin{enumerate*}
\item
Each participant $p\in\points$ \deffont{broadcasts} to all participants a \deffont{vote} message carrying a payload $\tvF$ or $\tvT$. 
\item
Then:
\begin{enumerate*}
\item
If $p$ receives vote-$\tvT$ messages from a quorum of participants, then $p$ deems that the ballot has succeeded and $p$ \deffont{observes} $\tvT$.
\item
If $p$ receives vote-$\tvF$ messages from a quorum of participants, then $p$ deems that the ballot has succeeded and $p$ \deffont{observes} $\tvF$.
\item
If $p$ receives no quorum of messages voting for $\tvT$ and no quorum of messages voting for $\tvF$, then $p$ deems that the ballot has failed, and $p$ observes no value.
\end{enumerate*}
\end{enumerate*} 
\end{xmpl}

\begin{rmrk}
\label{rmrk.failure.assumptions.voting}
We need to choose our failure assumptions: 
\begin{itemize*}
\item
Do messages always arrive?
\item
Can any participants crash? 
\item
Do all participants follow the algorithm, or could some be dishonest/faulty/byzantine/hostile (= not follow the algorithm)?
\end{itemize*}
There is no right or wrong here: we just need to be clear about our assumptions.\footnote{%
For convenient reference, I include here a brief survey of some of the terminology used in the distributed systems and blockchain literature.

Call a participant in a distributed algorithm \emph{correct} or \emph{honest} when they follow the protocol of the algorithm.
These are synonymous, but the former suggests that we view the participant as a machine, whereas the latter suggests a participant with agency to choose whether to misbehave (e.g. in a blockchain system).
Call a participant that does not follow the protocol \emph{faulty} (for machines) or \emph{byzantine} or \emph{adversarial} or \emph{hostile} (for participants with agency). 

Types of fault include \emph{crash fault} (stops sending messages), \emph{byzantine fault} (may deviate arbitrarily from the protocol), or \emph{omission fault} (observes the protocol when live, but might choose to fake a crash).  

In real life, it is also possible to play timing games (e.g. acting at the last possible moment).
These are all `honest' in a technical sense, though not necessarily in a social or legal sense; e.g. %
\emph{front running} in trading is illegal.  That is out of scope for this paper. 
}

For this exercise our failure assumptions will be as follows: 
\begin{itemize*}
\item
message-passing is reliable (messages always arrive; they do not get lost or delayed); 
\item
participants never crash; 
\item
participants always send messages when they should; however, some participants may be dishonest and send vote-$\tvT$ messages to some participants and vote-$\tvF$ messages to others, contrary to the protocol which specifies they should broadcast the \emph{same} vote value to \emph{all} participants.
\end{itemize*}
This might lead to a situation in which honest participants disagree about which vote won the ballot.
We want to make sure that this is impossible; or to be more precise, we want to know what conditions are sufficient to ensure Agreement (this property is also called Consistency): 
\begin{quote}
\emph{all honest participants observe the same value, if they observe any value at all}.
\end{quote}
We can prove Agreement/Consistency under an assumption that there is a quorum $Q$ of honest participants, 
i.e. that there are at most $f$ dishonest participants: 
Observe by a counting argument that %
any two quorums $Q_1,Q_2\subseteq\points$ of votes (each of which contains at least $2f\plus 1$ participants) must intersect $Q$ on at least one participant $q\in Q\cap Q_1\cap Q_2$.
Since $q\in Q$, it is honest. 
This means it is impossible for a participant $p$ to observe $\#Q_1\geq 2f\plus 1$ vote messages for $\tvT$, while at the same time \emph{another} participant $p'$ observes $\#Q_2\geq 2f\plus 1$ vote messages for $\tvF$, because one participant is honest and votes either for $\tvT$ or $\tvF$.
\end{rmrk}

We now show how to declaratively model the protocol and the proof of Agreement.
First, we will need two ingredients: a notion of \emph{truth}, which we define in Subsection~\ref{subsect.3}; and a notion of \emph{quorum}, which we define in Subsection~\ref{subsect.semitopology}.

\subsection{Three-valued logic}
\label{subsect.3}

\newenvironment{ttarray}{\begin{array}{c@{\ \ }c c c}}{\end{array}}
\newenvironment{tarray}{\begin{array}{c@{\hspace{-0pt}}c}}{\end{array}}

\begin{figure} %
$$
\hspace{-1em}
\begin{array}{c}
\begin{array}{c@{\quad}c@{\quad}c@{\quad}c@{\quad}c@{\quad}c@{\quad}c@{\quad}c@{\quad}c@{\quad}c@{\quad}c}
\begin{ttarray}
p\tand q & \tvT & \tvB & \tvF
\\
\tvT &  \tvT & \tvB & \tvF
\\
\tvB &  \tvB & \tvB & \tvF 
\\
\tvF &  \tvF & \tvF & \tvF 
\end{ttarray}
&
\begin{ttarray}
p\tor q & \tvT & \tvB & \tvF
\\
\tvT &  \tvT & \tvT & \tvT
\\
\tvB &  \tvT & \tvB & \tvB
\\
\tvF &  \tvT & \tvB & \tvF
\end{ttarray}
&
\begin{ttarray}
p\toplus q & \tvT & \tvB & \tvF
\\
\tvT &  \tvF & \tvB & \tvT
\\
\tvB &  \tvB & \tvB & \tvB
\\
\tvF &  \tvT & \tvB & \tvF
\end{ttarray}
&
\begin{ttarray}
p\tnotor q & \tvT & \tvB & \tvF
\\
\tvT &  \tvT & \tvB & \tvF
\\
\tvB &  \tvT & \tvB & \tvB
\\
\tvF &  \tvT & \tvT & \tvT
\end{ttarray}
&
\begin{ttarray}
p \timpc q & \tvT & \tvB & \tvF
\\
\tvT       & \tvT & \tvF & \tvF
\\
\tvB       & \tvT & \tvB & \tvB
\\
\tvF       & \tvT & \tvT & \tvT
\end{ttarray}
\end{array}
\\[8ex]
\begin{array}{c@{\quad}c@{\quad}c@{\quad}c@{\quad}c@{\quad}c@{\quad}c@{\quad}c@{\quad}c@{\quad}c@{\quad}c}
\begin{tarray}
\tneg p
\\
\tvT & \tvF
\\
\tvB & \tvB
\\
\tvF & \tvT
\end{tarray}
&
\begin{tarray}
\modT p 
\\
\tvT & \tvT
\\
\tvB & \tvF
\\
\tvF & \tvF
\end{tarray}
&
\begin{array}{cc}
\modF p 
\\
\tvT & \tvF
\\
\tvB & \tvF
\\
\tvF & \tvT
\end{array}
&
\begin{tarray}
\modTB p  
\\
\tvT & \tvT
\\
\tvB & \tvT
\\
\tvF & \tvF
\end{tarray}
&
\begin{tarray}
\modTF p  
\\
\tvT & \tvT
\\
\tvB & \tvF
\\
\tvF & \tvT
\end{tarray}
&
\begin{tarray}
\modB p
\\
\tvT & \tvF
\\
\tvB & \tvT
\\
\tvF & \tvF
\end{tarray}
\end{array}
\end{array}
$$
\emph{Vertical axis above indicates values of $p$; horizontal axis (if nontrivial) indicates values of $q$.}
\caption{Truth-tables for three-valued connectives on $\THREE=\{\tvT,\tvB,\tvF\}$ (Definition~\ref{defn.THREE}(\ref{item.THREE.connectives}))}
\label{fig.3}
\end{figure}

Our notion of truth will be three-valued:
\begin{defn}
\label{defn.THREE}
\leavevmode
\begin{enumerate*}
\item
Let $\THREE=(\tvF<\tvB<\tvT)$ be a lattice of \deffont{truth-values} with elements 
\begin{itemize*}
\item
$\tvF$ (false), which is less than 
\item
$\tvB$ (Both/Byzantine/Broken/amBivalent), which is less than 
\item
$\tvT$ (true).
\end{itemize*}
\item\label{item.THREE.connectives}
In Figure~\ref{fig.3} we define:
\begin{itemize*}
\item
a unary connective $\tneg$ (\deffont{negation}), and 
\item
binary connectives $\tand$ (\deffont{conjunction}), $\tor$ (\deffont{disjunction}), $\toplus$ (\deffont{exclusive-or}), $\tnotor$ (\deffont{weak implication}), and $\timpc$ (\deffont{strong implication}), %
\item
modalities $\modT$, $\modB$, $\modF$, $\modTB$, and $\modTF$.\footnote{We call $\tneg$ a `connective' and (for example) $\modT$ a `modality'.  This is just a matter of tradition: structurally, they are both just unary operators.}
\end{itemize*} 
\item\label{item.THREE.pointwise}
Suppose $X$ is a set.
We may lift connectives from $\THREE$ to the function-space $X\to\THREE$ in the natural \deffont{pointwise} definition.
Thus: if $f,f':X\to\THREE$ then we may use notation as follows: 
$$
\begin{array}{r@{\ }l}
\tneg f =& \lambda x.\tneg f(x)
\\
f\tand f' =& \lambda x.f(x)\tand f'(x)
\\
f\tor f' =& \lambda x.f(x)\tor f'(x)
\end{array}
$$
\item\label{item.valid}
If $\f{tv}\in\THREE$ then write $\ment\f{tv}$ for the judgement `$\f{tv}\in\{\tvT,\tvB\}$' and read this as $\f{tv}$ is \deffont{valid}.\footnote{There is a standard distinction in logic between a truth-value being \emph{true} (which means: it is equal to $\tvT$) and a logical judgement being \emph{valid} (which may mean `a true assertion about a model' or `it is derivable in some derivation system' or `it evaluates to true in a given context').
In two-valued logic this distinction is less prominent, because a predicate may be judged to be valid when it evaluates to $\tvT$ (possibly in the context of some valuation and/or interpretation).  But here, we have three truth-values, and a predicate will be judged to be valid when it evaluates to $\tvT$ \emph{or} $\tvB$.}
Write $\nment\f{tv}$ for `$\f{tv}=\tvF$' (this happens precisely when $\neg(\ment\f{tv})$) and read this as $\f{tv}$ is \deffont{invalid} or is \deffont{false}.
\end{enumerate*}
\end{defn}

\begin{rmrk}
\label{rmrk.3.core}
Note that the connectives in Definition~\ref{defn.THREE}(\ref{item.THREE.connectives}) and Figure~\ref{fig.3} are not minimal.
We can express them all from $\tneg$, $\tand$, and $\modT$ as per Figure~\ref{fig.3.core}.
\end{rmrk}

\begin{figure}
$$
\begin{array}{r@{\ }l@{\qquad}r@{\ }l} %
\f{tv}\tor\f{tv}' = &\tneg(\tneg\f{tv}\tand\tneg\f{tv}')
&
\f{tv}\toplus\f{tv}' = &(\f{tv}\tand\tneg\f{tv}')\tor(\tneg\f{tv}\tand\f{tv}')
\\
\f{tv}\tnotor\f{tv}' = &(\tneg\f{tv})\tor\f{tv}'
&
\f{tv}\timpc\f{tv}' = &\f{tv}\tnotor\modT\f{tv}'
\\
\modF\f{tv} = &\modT\tneg\f{tv}
&
\modTB\f{tv} = &\tneg\modF\f{tv}
\\
\modTF\f{tv} = &\modT\f{tv}\tor\modF\f{tv}
&
\modB\f{tv} = &\tneg\modTF\f{tv}
\end{array}
$$
\caption{Connectives derived from $\tneg$, $\tand$, and $\modT$ (Remark~\ref{rmrk.3.core})}
\label{fig.3.core}
\end{figure}

The implicational structure of $\THREE$ is rich (it supports sixteen implications~\cite[note~5, page~22]{arieli:idepl}).
For us, the weak implication $\tnotor$ and strong implication $\timpc$ will suffice, and we have: 
\begin{prop}
\label{prop.mp.for.tnotor}
\label{prop.tand.tor}
For truth-values $\f{tv},\f{tv}'\in\THREE$, we have:
\begin{enumerate*}
\item\label{item.tand.tor.tv}
$\ment\tvT$ and $\ment\tvB$ and $\nment\tvF$.
\item\label{item.tand.tor.tand}
$\ment\f{tv}\tor\f{tv}'$ if and only if ${\ment\f{tv}}\lor{\ment\f{tv}'}$, and 
\\
$\ment\f{tv}\tand\f{tv}'$ if and only if ${\ment\f{tv}}\land{\ment\f{tv}'}$.
\item
$(\f{tv}\tnotor\f{tv}')\tvequal(\tneg\f{tv}\tor\f{tv}')$ and
$(\f{tv}\timpc\f{tv}')\tvequal(\f{tv}\tnotor\modT\f{tv}')$.
\item\label{item.mp.for.tnotor}
$\ment\f{tv}\tnotor\f{tv}'$ if and only if 
$\f{tv}\tvequal\tvT$ implies $\f{tv}'\in\{\tvT,\tvB\}$.
We call this \deffont{weak modus ponens}. 
\item\label{item.mp.for.timpc}
$\ment\f{tv}\timpc\f{tv}'$ if and only if 
$\f{tv}\tvequal\tvT$ implies $\f{tv}'\tvequal\tvT$. 
We call this \deffont{strong modus ponens}. 
\item\label{item.para.em}
$\ment\f{tv}\tor\tneg\f{tv}$.
In jargon, validity satisfies \deffont{excluded middle} ($p$-or-not-$p$ is valid).
\item\label{item.para.para}
$\ment\f{tv}\tand\tneg\f{tv}$ if and only if $\f{tv}=\tvB$ if and only if $\ment\modB\f{tv}$.\footnote{This means that our notion of validity is \deffont{paraconsistent}, so $\phi$ and not-$\phi$ may both be valid (\cite{arieli:idepl}, or see the nice survey in~\cite{priest:parl}). Note that in our setting $\phi$ and not-$\phi$ cannot be simultaneously \emph{true}, since $\tneg\tvT=\tvF$.} 
\item\label{item.para.ment}
$\ment\f{tv}$ if and only if $\modTF \f{tv}\tvequal\modT\f{tv}$. 
\item\label{item.para.de.Morgan}
$\f{tv}\leq \f{tv}'$ if and only if $\tneg\f{tv}'\leq\tneg\f{tv}$.
\end{enumerate*}
\end{prop}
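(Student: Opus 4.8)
The plan is to treat every clause as a finite assertion over the three-element set $\THREE$ and settle it by case analysis, but organised so that the structural clauses are proved once and then reused for the rest. First I would record three background facts read directly off Figure~\ref{fig.3}: that $\tand$ and $\tor$ are respectively the meet ($\min$) and join ($\max$) of the chain $\tvF<\tvB<\tvT$; that $\tneg$ is the order-reversing involution that fixes $\tvB$ and swaps $\tvT$ with $\tvF$; and that, by Definition~\ref{defn.THREE}(\ref{item.valid}), $\ment\f{tv}$ holds exactly when $\f{tv}\neq\tvF$. Clause~(\ref{item.tand.tor.tv}) is then immediate, and these three facts do most of the remaining work.

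Next I would dispatch the lattice-flavoured clauses. Clause~(\ref{item.tand.tor.tand}) follows because $\f{tv}\tor\f{tv}'=\max(\f{tv},\f{tv}')\neq\tvF$ exactly when some operand is $\neq\tvF$, and dually $\f{tv}\tand\f{tv}'=\min(\f{tv},\f{tv}')\neq\tvF$ exactly when both are. Clause~(\ref{item.para.de.Morgan}) is just the statement that $\tneg$ reverses the chain order, which is immediate since $\tneg$ is an order-reversing bijection. Clauses~(\ref{item.para.em}) and~(\ref{item.para.para}) then reduce to clause~(\ref{item.tand.tor.tand}): excluded middle holds because at least one of $\f{tv},\tneg\f{tv}$ is $\neq\tvF$ (indeed if $\f{tv}=\tvF$ then $\tneg\f{tv}=\tvT$); and $\ment\f{tv}\tand\tneg\f{tv}$ requires both $\f{tv}\neq\tvF$ and $\tneg\f{tv}\neq\tvF$, which forces $\f{tv}\neq\tvF$ and $\f{tv}\neq\tvT$, i.e.\ $\f{tv}=\tvB$. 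Since $\modB$ returns $\tvT$ at $\tvB$ and $\tvF$ elsewhere (never $\tvB$), $\ment\modB\f{tv}$ likewise coincides with $\f{tv}=\tvB$, giving the triple equivalence.

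The implicational and modal clauses I would close by inspecting the relevant columns. The third clause's two identities are direct equational checks: $(\f{tv}\tnotor\f{tv}')\equiv(\tneg\f{tv}\tor\f{tv}')$ is the observation that $\max(\tneg\f{tv},\f{tv}')$ reproduces the $\tnotor$-table, after which $(\f{tv}\timpc\f{tv}')\equiv(\f{tv}\tnotor\modT\f{tv}')$ reduces to checking that replacing the right argument by $\modT\f{tv}'$ (which collapses $\tvB$ to $\tvF$) reproduces the $\timpc$-table. For modus ponens I would read off where each implication returns $\tvF$: the $\tnotor$-table is $\tvF$ only at the pair $(\tvT,\tvF)$, so $\ment\f{tv}\tnotor\f{tv}'$ fails exactly when $\f{tv}=\tvT$ and $\f{tv}'=\tvF$, which is clause~(\ref{item.mp.for.tnotor}); the $\timpc$-table is $\tvF$ exactly at $(\tvT,\tvB)$ and $(\tvT,\tvF)$, i.e.\ whenever $\f{tv}=\tvT$ and $\f{tv}'\neq\tvT$, giving clause~(\ref{item.mp.for.timpc}). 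Finally clause~(\ref{item.para.ment}) is a three-row comparison: the columns for $\modTF$ and $\modT$ agree at $\tvT$ and $\tvB$ and differ only at $\tvF$ (where $\modTF$ gives $\tvT$ but $\modT$ gives $\tvF$), so $\modTF\f{tv}\equiv\modT\f{tv}$ holds exactly on the valid values.

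I do not expect a genuine obstacle, since the whole statement is finitary. The only points requiring care are to keep the biconditionals in clauses~(\ref{item.mp.for.tnotor}),~(\ref{item.mp.for.timpc}), and~(\ref{item.para.para}) as equivalences rather than one-way implications (both directions fall straight out of the exhaustive tables), and to respect the overloading of $\equiv$ flagged in Definition~\ref{defn.THREE}: in the third clause and clause~(\ref{item.para.ment}) it is used both as the identity connective of Figure~\ref{fig.3} and as a meta-level assertion of equality, so each occurrence should be read as `returns $\tvT$', which by that footnote coincides with genuine equality of truth-values.
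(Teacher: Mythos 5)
Your proposal is correct and takes essentially the same approach as the paper, whose entire proof is ``By inspection of the truth-tables in Figure~\ref{fig.3}''; you simply organise that inspection via the (accurate) observations that $\tand,\tor$ are meet and join on the chain $\tvF<\tvB<\tvT$, that $\tneg$ is the order-reversing involution, and that validity means ``not $\tvF$''. All the individual table-checks you describe are right, so nothing further is needed.
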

\begin{proof}
By inspection of the truth-tables in Figure~\ref{fig.3}.
\end{proof}

\subsection{Semitopologies}
\label{subsect.semitopology}

\begin{figure}
$$
\begin{array}{r@{\quad}l@{\qquad}r@{\quad}l}
\rulefont{Observe?}&
\fdecide(p) \tnotor \Quorum\fvote
&
\rulefont{Observe\tneg?}&
\tneg\fdecide(p) \tnotor \Quorum\tneg\fvote
\\
\rulefont{Observe!}&
\Quorum\fvote\timpc\fdecide(p)
&
\rulefont{Observe\tneg!}&
\Quorum\tneg\fvote\timpc\tneg\fdecide(p)
\\
\rulefont{Correct}&
\Quorum(\modTF\circ\fvote)
&
\rulefont{3twined}&
(\Quorum f\tand\Quorum f')\leq\Contraquorum(f\tand f')
\end{array}
$$

Above: $\circ$ denotes function composition; $f$ and $f'$ range over arbitrary functions in $\points\to\THREE$ and $p$ ranges over elements of $\points$.
\ $\leq$ refers to the lattice ordering on $\THREE$, namely $\tvF<\tvB<\tvT$.
\ $\Quorum$ and $\Contraquorum$ have type $(\points\to\THREE)\to\THREE$ here, which is why $\f{observe}$ is applied to $p$ above and $\f{vote}$ is not.

$\f{vote}(p)=\tvT$ means `$p$ voted for $\tvT$'; $\f{vote}(p)=\tvF$ means `$p$ voted for $\tvF$'; and $\f{vote}(p)=\tvB$ means `$p$ sent conflicted messages about its vote'.
Similarly for $\f{observe}$.
\caption{Specification of a simple voting protocol (Definition~\ref{defn.voting})}
\label{fig.simple}
\end{figure}

We define our universe of participants.
This has a topological flavour:
\begin{defn}
\label{defn.semitopology}
A \deffont{semitopology}~\cite{gabbay:semdca,gabbay:semtad} is a pair $(\points,\opens)$ of 
\begin{enumerate*}
\item\label{item.semitopology.points}
a nonempty\footnote{Cf. Remark~\ref{rmrk.nonempty.P}.}
set of \deffont{points} $\points$ -- which we may call \deffont{participants}, because we may think of them as participants running a distributed algorithm -- and 
\item\label{item.semitopology.opens}
\deffont{open sets} $\opens\subseteq\powerset(\points)$, such that $\points\in\opens$ and $\opens$ is closed under arbitrary (possibly empty) unions: if $\mathcal O\subseteq\opens$ then $\bigcup\mathcal O\in\opens$.
\end{enumerate*}
Write $\opensne=\{O\in\opens\mid O\neq\varnothing\}$.
We may call $O\in\opensne$ a \deffont{quorum}.
Note that we assumed $\points\neq\varnothing$ so that also $\opensne\neq\varnothing$ (i.e. some quorum exists).
\end{defn}

\begin{figure}
$$
\begin{array}{r@{\ }l@{\qquad}r@{\ }l@{\qquad}r@{\ }l@{\qquad}r@{\ }l}
\everywhere f\tvequal &\bigwedge_{p\in \points} f(p) 
&
\somewhere f\tvequal &\bigvee_{p\in \points} f(p) 
\\
\Quorum f\tvequal &\bigvee_{O\in\opensne} \bigwedge_{p\in O} f(p) 
&
\Coquorum f \tvequal&\bigwedge_{O\in\opensne} \bigvee_{p\in O} f(p) . 
\end{array}
$$ 
Above, $\everywhere,\somewhere,\Quorum,\Contraquorum:(\points\to\THREE)\to\THREE$ and we read $\everywhere f$ as \emph{everywhere}-$f$, $\somewhere f$ as \emph{somewhere}-$f$, 
$\Quorum f$ as \emph{quorum}-$f$, and $\Coquorum f$ as \emph{contraquorum}-$f$. 
\caption{Modalities (Definition~\ref{defn.semitopology.logic})}
\label{fig.sl.modalities}
$$
\begin{array}{r@{\ }l@{\qquad}r@{\ }l}
\tneg(f \tand f')\tvequal&(\tneg f\tor\tneg f')
&
\tneg(f \tor f')\tvequal&(\tneg f\tand \tneg f')
\\
\tneg\someone(\tneg f)\tvequal&\everyone f
&
\tneg\everyone(\tneg f)\tvequal&\someone f
\\
\tneg\Contraquorum (\tneg f)\tvequal&\Quorum f
&
\tneg\Quorum(\tneg f)\tvequal&\Contraquorum f
\end{array}
$$
Above: $f,f':\points\to\THREE$\ and\ $\tand,\tor:(\THREE\to\THREE)^2\to(\THREE\to\THREE)$ are defined pointwise as per Definition~\ref{defn.THREE}(\ref{item.THREE.pointwise});\ $\everywhere,\somewhere,\Quorum,\Contraquorum:(\points\to\THREE)\to\THREE$;\ innermost $\tneg$ has type $(\points\to\THREE)\to(\points\to\THREE)$ (it is defined on $f$ and $f'$ pointwise), and outermost $\tneg$ has type $\THREE\to\THREE$. 
\caption{De Morgan dualities (Lemma~\ref{lemm.dmc})}
\label{fig.dmc}
\end{figure}

A semitopology naturally gives rise to a modal logical structure, as follows:
\begin{defn}
\label{defn.semitopology.logic}
Suppose $(\points,\opens)$ is a semitopology and $f:\points\to\THREE$ is a function, which we can think of as a unary predicate (a three-valued one, taking truth-values in $\THREE$).
Define functions 
$$
\everywhere,\somewhere,\Quorum,\Contraquorum:(\points\to\THREE)\to\THREE
$$
--- where we read 
$\everywhere f$ as `\deffont{everywhere}-$f$', $\somewhere f$ as `\deffont{somewhere}-$f$', 
$\Quorum f$ as `\deffont{quorum}-$f$', and $\Coquorum f$ as `\deffont{contraquorum}-$f$' 
--- 
as per Figure~\ref{fig.sl.modalities}.
\end{defn}

\begin{lemm}
\label{lemm.dmc}
We note some \emph{de Morgan} 
dualities: 
\begin{enumerate*}
\item\label{item.dmc.1}
Suppose $\points$ is a set and $f,f':\points\to\THREE$. 
Then we have the equivalences in Figure~\ref{fig.dmc}.
\item\label{item.dmc.2}
Suppose $\f{tv}\in\THREE$.
Then $\tneg\modT\tneg\f{tv}\tvequal\modTB\f{tv}$ and $\tneg\modTB\tneg\f{tv}\tvequal\modT\f{tv}$.
\end{enumerate*}
\end{lemm}
\begin{proof}
By routine arguments on lattices and truth-tables.
\end{proof}

\begin{rmrk}
\label{rmrk.contraquorum}
Semitopologies~\cite{gabbay:semdca,gabbay:semtad} were designed to generalise the quorum systems like Example~\ref{xmpl.vote} (expositions are in~\cite{DBLP:journals/dc/MalkhiR98,DBLP:journals/dc/AlposCTZ24}):
\emph{quorum} corresponds to \emph{nonempty open set}; \emph{coquorum} (the set complement to a quorum) corresponds to \emph{non-total closed set}; and \emph{contraquorum} or \emph{blocking set} (a set that intersects every quorum) corresponds to \emph{dense set}. 

There are some nice benefits to working with a semitopology as per Definition~\ref{defn.semitopology}, rather than with a concrete structure as per Example~\ref{xmpl.vote}:
\begin{enumerate*}
\item
`$O\in\opensne$' is shorter, and more mathematically rich, than `a set $O$ having $2f\plus 1$ elements'.
\item 
It notes something that is arguably obvious in retrospect, but which does not appear to have been noted before semitopologies: that certain very basic concepts in the distributed systems literature --- including quorums and blocking sets --- are topological in nature --- corresponding e.g. to nonempty open sets and dense sets.\footnote{While the observation `quorum system'=`nonempty open set of a semitopology' appears to be novel, there is pedigree for making other kinds of connections between (algebraic) topology and distributed systems.  We discuss this, with references, in the Conclusions (Section~\ref{sect.conclusions}).} 
\item
As Definition~\ref{defn.semitopology.logic} and Lemma~\ref{lemm.dmc} suggest, and the rest of this paper substantiates, we can exploit the connection between topology and logic to reason axiomatically on distributed algorithms.
\end{enumerate*}
\end{rmrk}

\begin{rmrk}
$\modT$ and $\modTB$ commute with all monotone connectives: for $\tf M\in\{\modT,\modTB\}$ and $\f{tv},\f{tv}'\in\THREE$ we have $\tf M(\f{tv}\tand\f{tv}')\tvequal((\tf M\f{tv})\tand\tf M\f{tv}')$ and similarly for $\tor$, $\someone$, $\everyone$, $\Quorum$, and $\Contraquorum$.

The reader can also easily check facts like $\modT\modTB\f{tv}=\modT\f{tv}=\modTB\modT\f{tv}$, and $\modT\modB\f{tv}=\modB\f{tv}$, and so on.

We may rewrite using these properties without comment henceforth.  
\end{rmrk} 

We warm up with an easy lemma that connects $\everyone$ with $\quorum$:
\begin{lemm}
\label{lemm.everyone.and.quorum}
Suppose $(\points,\opens)$ is a semitopology and $f,f':\points\to\THREE$.
Then 
$$
(\everyone f\tand \quorum f') \leq \quorum(f\tand f').
$$
(Recall that $\tvF<\tvB<\tvT$ in $\THREE$ as a lattice.  The $\leq$ above refers to this lattice ordering.)

As corollaries:
\begin{enumerate*}
\item\label{item.everyone.and.quorum.1}
$\ment\everyone f\tand\quorum f'$ implies $\ment\quorum (f\tand f')$.
\item\label{item.everyone.and.quorum.2}
$\ment\everyone f\tand\quorum(\modTF{\circ} f)$ implies $\ment\modT\quorum f$.
\end{enumerate*}
\end{lemm}
\begin{proof}
By routine lattice calculations from Definition~\ref{defn.semitopology.logic}.
Or, using topological language: if every $p\in\points$ satisfies $f(p)$, and if there exists $O'\in\opensne$ such that every $p\in O'$ satisfies $f'(p)$, then clearly every $p\in O'\in\opensne$ satisfies $f(p)\tand f'(p)$.
The first corollary follows by routine reasoning, and the second follows taking $f'=\modTF\circ f$.
\end{proof}

A characteristic fact connecting $\Quorum$, $\Contraquorum$, and $\someone$ is this:
\begin{lemm}
\label{lemm.semi.char}
Suppose $(\points,\opens)$ is a semitopology and $f,f':\points\to\THREE$.
Then 
$$
(\Quorum f\tand\Contraquorum f')\leq \someone(f\tand f') .
$$ 
As corollaries:
\begin{enumerate*}
\item\label{item.semi.char.1}
$\ment\Quorum f\tand\Contraquorum f'$ implies $\ment\someone(f\tand f')$.
\item\label{item.semi.char.2}
$\ment\Contraquorum f'$ implies $\ment\someone f'$.
\item\label{item.semi.char.TF}
$\ment\quorum(\modTF\circ f')$ and $\ment\Contraquorum f'$ implies $\ment\modT\someone f'$ (equivalently: $\ment\someone\modT f'$).
\end{enumerate*}
\end{lemm}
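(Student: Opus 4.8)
The plan is to prove the displayed inequality first and then read off the three corollaries. The essential observation is that $\THREE$ is a finite chain, hence a complete distributive lattice (indeed a frame), so $\tand$ distributes over arbitrary joins $\bigvee$; combined with the fact that a nonempty open set witnessing $\Quorum f$ may itself be fed as an instance to the universally-quantified meet defining $\Contraquorum f'$, this lets us locate a single point at which both $f$ and $f'$ are large.

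Concretely, writing $A_O=\bigwedge_{p\in O}f(p)$, I would first use distributivity to push the meet with $\Contraquorum f'$ through the join defining $\Quorum f$, obtaining $\Quorum f\tand\Contraquorum f'\equiv\bigvee_{O\in\opensne}\bigl(A_O\tand\Contraquorum f'\bigr)$, so it suffices to bound each summand. Here is the crux: since $O$ itself lies in $\opensne$, instantiating the meet $\Contraquorum f'\equiv\bigwedge_{O'\in\opensne}\bigvee_{p\in O'}f'(p)$ at $O'=O$ gives $\Contraquorum f'\leq\bigvee_{p\in O}f'(p)$. Monotonicity of $\tand$ and a second use of distributivity then yield $A_O\tand\Contraquorum f'\leq A_O\tand\bigvee_{q\in O}f'(q)\equiv\bigvee_{q\in O}\bigl(A_O\tand f'(q)\bigr)$, and since $A_O\leq f(q)$ for every $q\in O$ each term is $\leq f(q)\tand f'(q)\leq\someone(f\tand f')$. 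Taking joins over $O$ finishes the inequality. I expect this instantiation-at-$O$ step to be the main (and essentially the only) obstacle: it is exactly the semitopological content that a quorum and a contraquorum must share a point, repackaged as a lattice manipulation.

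For the corollaries I would argue as follows. Validity is upward closed (if $\f{tv}\leq\f{tv}'$ and $\ment\f{tv}$ then $\ment\f{tv}'$, since $\ment$ means $\geq\tvB$), so corollary~\ref{item.semi.char.1} is immediate from the inequality together with Proposition~\ref{prop.tand.tor}(\ref{item.tand.tor.tand}). For corollary~\ref{item.semi.char.2} I would take $f=\lambda p.\tvT$: then $\Quorum f\equiv\tvT$ (as $\opensne$ is nonempty) and $f\tand f'\equiv f'$, so corollary~\ref{item.semi.char.1} specialises to the claim. For corollary~\ref{item.semi.char.TF} the key is the truth-table identity $(\modTF\f{tv})\tand\f{tv}\equiv\modT\f{tv}$, checked by inspection on $\{\tvT,\tvB,\tvF\}$. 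Taking $f=\modTF\circ f'$ gives $f\tand f'\equiv\modT\circ f'$, and since the two hypotheses combine to $\ment\Quorum(\modTF\circ f')\tand\Contraquorum f'$ by Proposition~\ref{prop.tand.tor}(\ref{item.tand.tor.tand}), corollary~\ref{item.semi.char.1} delivers $\ment\someone\modT f'$. Finally $\someone\modT f'\equiv\modT\someone f'$ because $\modT$ commutes with $\someone$ (noted above), which gives the parenthetical equivalent form.
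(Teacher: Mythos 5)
Your argument is correct and is essentially the paper's proof: the paper dismisses the main inequality as ``routine lattice calculations from Definition~\ref{defn.semitopology.logic}'' (with the same topological gloss that an open set of $f$s must meet a dense set of $f'$s), and your distributivity-plus-instantiation-at-$O$ computation is exactly those calculations written out, while your three corollaries use the same specialisations ($f=\lambda p.\ttop$ for part~\ref{item.semi.char.2}, $f=\modTF\circ f'$ for part~\ref{item.semi.char.TF}) as the paper. The only added value is that you make the identity $(\modTF\f{tv})\tand\f{tv}\equiv\modT\f{tv}$ and the commutation of $\modT$ with $\someone$ explicit, which the paper leaves implicit.
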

\begin{proof}
By routine lattice calculations from Definition~\ref{defn.semitopology.logic}.
Or, using topological language: 
if we have an open set of $f$s and a dense set of $f'$s, then there must be some point with both $f$ and $f'$.
Part~\ref{item.semi.char.1} follows; part~\ref{item.semi.char.2} follows from part~\ref{item.semi.char.1} setting $f=\lambda p.\tvT$ and recalling from Definition~\ref{defn.semitopology} that $\opensne\neq\varnothing$ so that the universal quantification in $\Coquorum f$ (Definition~\ref{defn.semitopology.logic})
cannot be vacuously satisfied; and part~\ref{item.semi.char.TF} follows from part~\ref{item.semi.char.1} taking $f=\modTF\circ f'$. 
\end{proof}

\begin{rmrk}
\label{rmrk.nonempty.P}
In Definition~\ref{defn.semitopology}(\ref{item.semitopology.points}) we assume that the set of points is nonempty, disallowing the \emph{empty semitopology} $(\varnothing,\{\varnothing\})$.
In~\cite[Definition~1.2.2]{gabbay:semdca} and~\cite[Definition~1.1.2]{gabbay:semtad} we do not impose this condition, i.e. we allow the empty semitopology.
There are two reasons for this difference:
\begin{enumerate*}
\item
In~\cite{gabbay:semdca,gabbay:semtad} we are studying semitopologies \emph{in the abstract}, so we want to be as general as possible.
Here, we use semitopologies specifically to model quorum systems. 

A distributed system with no participants is just not interesting for our purposes here; there is literally nothing to talk about and in particular this semitopology would have no quorums (nonempty open sets).
In context, this possibility would be pathological and possibly confusing, so we outlaw it by assuming that the set of participants is nonempty.
\item
Related to the previous point, but more technically, admitting the empty semitopology would mean that $\Coquorum(\lambda p.\tvF)$ could equal $\tvT$.
Mathematically that is perfectly respectable, but it would slightly complicate the statement of Lemma~\ref{lemm.semi.char}(\ref{item.semi.char.2}); we would need to explicitly add a condition `for nonempty semitopology'.\footnote{Thanks to Jan Mas Rovira for noting this subtlety.}
It is simpler and neater to just insist that all semitopologies are nonempty.
\end{enumerate*} 
\end{rmrk}

\begin{rmrk}
In Definition~\ref{defn.semitopology} we define semitopologies, which (as per Remark~\ref{rmrk.contraquorum}) abstract the notion of \emph{quorum system}.
Definition~\ref{defn.semitopology}(\ref{item.semitopology.opens}) insists that the open sets of a semitopology be closed under unions, but in fact we will not use that condition in the rest of this paper!
It is there because it is inherited from the ambient semitopological mathematics; for the purposes of \emph{this} paper, we could drop it and the mathematics would work just as well.

To couch this in fancy jargon: we could work with just a basis of the semitopology, where a \deffont{basis} of a semitopology $(\points,\opens)$ is a subset $\mathcal O\subseteq\opens$ that generates all of $\opens$ by arbitrary set unions.

However, in a deeper sense closure under unions is very much still present:
\begin{enumerate*}
\item
In the literature, quorum systems used for voting, Bracha Broadcast, and Crusader Agreement are traditionally what we can call \emph{threshold-based}; meaning that they have the form `a quorum is a set with more than $t$ elements' for some threshold $t$, where usually $t=n*f+1$ for parameters $n$ and $f$ (see Example~\ref{xmpl.vote}, Subsection~\ref{subsect.3-twined}, and Example~\ref{xmpl.n-wtined}(\ref{item.nf+1})).

Threshold-based quorum systems \emph{are} closed under arbitrary unions.
Indeed, they satisfy the stronger property of being \emph{up-closed} (if $O\in\opens$ and $O\subseteq O'\subseteq\points$ then $O'\in\opens$).
Thus the semitopologies of Definition~\ref{defn.semitopology} as written are a \emph{de facto} generalisation of the quorum systems as used in the literature relevant to the examples in this paper. 
\item
Related to the previous point, the form of our axioms and correctness properties will be such that they are functionally insensitive to the difference between a full semitopology and a basis of that semitopology.
Specifically, all conditions on the size of a set of participants are positive, and we will not see any negative condition in an axiom or correctness property that a set of participants should `not be too big'.\footnote{This is also arguably part of why quorum systems in the literature are up-closed; since up-closure is functionally invisible to the protocols, a threshold quorum system is perfectly sufficient and it is unintuitive, and not worth the bother, to insist on quorums having `exactly $t$ elements'.}
In this sense, the reason that we do not use the closure under unions condition from Definition~\ref{defn.semitopology}(\ref{item.semitopology.opens}) explicitly, is because it is so fundamental that it is actually structurally built in to the form of the axioms and correctness properties.

This positivity criterion may be a useful static (i.e. syntactic) criterion for what makes a good axiom system, but we leave thinking about that to future work (see also Subsection~\ref{subsect.implementation}).
\end{enumerate*}
To sum up, closure under unions is not explicitly used in this paper, but it lurks in the background: in the form of concrete quorum systems used in the literature; in the ambient semitopological metatheory; in the fact that threshold-based quorum systems satisfy this property anyway; \emph{and} also (I would argue) it is implicit in the sense that our axioms and correctness properties are constituted precisely to work as well with a full semitopology as with a basis of that semitopology, and this seems likely to be an important well-behavedness criterion for protocols.
\end{rmrk}

\subsection{3-twined semitopologies}
\label{subsect.3-twined}

\subsubsection{Definition, discussion, and examples}

\begin{defn}
\label{defn.3twined}
For $n\geq 0$, call a semitopology $(\points,\opens)$ \deffont{$n$-twined} when any $n$ nonempty open sets have a nonempty intersection~\cite[Section~5]{gabbay:decasd}.
That is: if $O_1,\dots,O_n\in\opensne$ then $\bigcap_{1\leq i\leq n} O_i\neq\varnothing$.

In this paper we will care most about 3-twined semitopologies; semitopologies such that any three nonempty open sets have a nonempty intersection. 
\end{defn}

\begin{rmrk}
Indexing the open sets from $1$ in Definition~\ref{defn.3twined} even though we only insist that $n\geq 0$, is not a typo.
By convention, $\bigcap\varnothing=\points$, so a semitopology is $0$-twined just when $\points$ is nonempty; as per Remark~\ref{rmrk.nonempty.P} $\points$ is assumed nonempty, so every (nonempty) semitopology is $0$-twined.
The reader can also check that every semitopology is $1$-twined.

So the $n$-twined condition starts getting more structured from $2$-twined onwards.

The terminology `3-twined' follows the semitopologies literature: it generalises the notion of \emph{intertwined} space considered in~\cite{gabbay:semdca,gabbay:semtad} (intertwined = $2$-twined) and see also the Section on $n$-twined semitopologies in~\cite{gabbay:decasd}.
However, the general concept is well-known in the literature, albeit not phrased systematically and in (semi)topological terms. 
For example, being 3-twined is called the $Q^3$ property in~\cite[Definition 2]{DBLP:journals/dc/AlposCTZ24}. %
\end{rmrk}

\begin{rmrk}
We take a moment to motivate 3-twinedness: why is this of such interest to distributed systems?
Note that:
\begin{itemize*}
\item
A typical step in a distributed protocol has the general form: 
\begin{quote}
``Wait until you see a quorum $Q$ (= nonempty open set) of participants perform some action $A$, and then perform action $B$''.
\end{quote}
For example, the \rulefont{BrDeliver?/!} and \rulefont{BrReady?/!} axiom-pairs in Figure~\ref{fig.bb} have \emph{precisely} this form; and other axioms are variations on it.
\item
A typical correctness assumption is that there exists a quorum $C$ of correct (= honest / non-faulty) participants (see for example \rulefont{BrCorrect} and \rulefont{BrCorrect'}).
\end{itemize*}
Now consider two participants $p$ and $p'$ running some protocol $X$, and suppose that:
\begin{enumerate*}
\item
$p$ sees a quorum $Q$ of participants perform action $A$ and so according to $X$ is ready to perform action $B$, and 
\item
$p'$ sees a quorum $Q'$ of participants perform action $A'$ and so according to $X$ is ready to perform action $B'$. 
\end{enumerate*}
If the semitopology is 3-twined, then there exists some $q\in Q\cap Q'\cap C$. 
This means that $p$ and $p'$ know that there exists some $q$ who is honest and has performed actions $A$ and $A'$; they do not know who $q$ is, but they know that $q$ exists.

This can give $p$ and $p'$ confidence to progress to perform actions $B$ and $B'$ respectively, safe in the knowledge that $A$ and $A'$ are compatible, in the sense that there was at least one honest participant who was willing to do both.

If this sounds trivial, it is not.
For example $A$ might be `vote for value $v$' and $A'$ might be `vote for $v'$', and perhaps the protocol insists that (honest) participants should only vote for one possibility.
Then in this case, $p$ and $p'$ know that $v=v'$ --- even though they do not know who the honest participants are, and even though $p$ and $p'$ may not even know about each other. 
In a bit more detail, 3-twinedness gives us this intuitive property:
\begin{quote}
every honest participant knows that some honest participant exists in any pairwise quorum intersection, 
\end{quote}
and as per our example %
above, this in and of itself may be enough confidence for individual participants like $p$ and $p'$ (even without speaking to one another) to progress with confidence just on the basis of observing a relevant quorum. 

This in a nutshell is the content of Theorem~\ref{thrm.3twined.logic}, Corollary~\ref{corr.3twined.cologic}, and axiom \rulefont{3twined} in Figure~\ref{fig.simple}, and furthermore, Remark~\ref{rmrk.3twined.logic.back} proves that this intuition is (in a sense made mathematical in that Remark) not only sound but also complete.
\end{rmrk}

The semitopology in Example~\ref{xmpl.vote} is the canonical example of a 3-twined semitopology, but trivially an $n$-twined semitopology is $n'$-twined for every $n'\leq n$, and examples of $n$-twined semitopologies for arbitrary $n$ are very easy to generate, for example as follows:
\begin{xmpl}
\label{xmpl.n-wtined}
We indicate two schemas by which the reader can generate $n$-twined semitopologies.
For $n\geq 3$, any of these could be used in the development to follow, with no change to the proofs (since all we will require is for our semitopology to be 3-twined):
\begin{enumerate*}
\item\label{item.nf+1}
Suppose $f\geq 0$ and $n\geq 1$ and let $\points$ be any set with cardinality $n{*}f\plus 1$.
Endow $\points$ with a semitopological structure by letting $\opens$ be the set containing the empty set, and containing any set with cardinality at least $(n\minus 1){*}f+1$.
Then the reader can easily check that $(\points,\opens)$ is an $n$-twined semitopology.
\item
Fix a logical theory $\Phi$ in first-order logic\footnote{A \emph{logical theory} is a set of closed predicates (having no free variables; they can mention bound variables), which we call \emph{axioms}.} and fix a first-order logic predicate $\phi(x,y)$ with (at most) two free variables $x$ and $y$, and assume the following property: 
$$
\Phi\cent \mathit{n\text{-}twined}_\phi
\quad\text{where}\quad
\mathit{n\text{-}twined}_\phi=\Forall{y_1,\dots,y_n}\Exists{x}(\phi(x,y_1)\land\dots\land\phi(x,y_n)).
$$
Here $\cent$ is the standard symbol for derivability, so $\Phi\cent\mathit{n\text{-}twined}_\phi$ means that we can prove that for every $y_1,\dots,y_n$ there exists $x$ such that $\phi(x,y_1)$,\dots,$\phi(x,y_n)$ are all simultaneously true.
By Soundness of first-order logic, it follows that $\mathit{n\text{-}twined}_\phi$ is valid in every model of $\Phi$.

Now fix some model $M$ of $\Phi$.
This means that every axiom in $\Phi$ is valid in $M$, so that by Soundness and our assumption that $\Phi\cent\mathit{n\text{-}twined}_\phi$, also $M\ment\mathit{n\text{-}twined}_\phi$.
Here $\ment$ is the standard symbol for model-theoretic validity, so $M\ment\mathit{n\text{-}twined}_\phi$ means that $\mathit{n\text{-}twined}_\phi$ \emph{is actually valid} of $M$.

Give (the underlying set of) $M$ a semitopological structure by letting open sets be the empty set, the whole set, or arbitrary unions of sets of the form $\{x\in M \mid {M\ment\phi(x,y)}\}$ where $y\in M$.
By construction this makes $M$ into an $n$-twined semitopology, since $\mathit{n\text{-}twined}_\phi$ ensures that any $n$ nonempty open sets \emph{actually do} intersect at some $x$.

Now this example might seem like just an absurdly abstract way to generate $n$-twined semitopologies, and in a sense it is, but in another sense it is much more than that: it is \emph{the} way to generate $n$-twined semitopologies.

A choice of $\Phi$ and $\phi$ such that $\Phi\cent\mathit{n\text{-}twined}_\phi$, specifies a class of $n$-twined semitopologies, and conversely, to the extent that any practical construction of $n$-twined semitopologies can be expressed in first-order logic, \emph{every} such construction %
can be so expressed.\footnote{In principle we could imagine using more powerful logics, but every quorum system that I have seen in the literature so far is clearly a first-order construction.} 

So the difference between this example and (say) the grid quorum systems in~\cite[Section~4]{senn2025asymmetricgridquorumsystems} is just how specific we are about picking $\Phi$ and $\phi$. 
\end{enumerate*}
Because we are using an axiomatic approach, we do not need to be any more specific about the underlying model than the axioms require.
In what follows, we can and we will just reason assuming an arbitrary 3-twined semitopology.
This is all the detail that the axioms require, and so that will be all that we need.
\end{xmpl}

\subsubsection{The fundamental logical characterisation of being 3-twined, and its corollaries}

Reading Definition~\ref{defn.3twined} literally, we can render it as $(\Quorum f \tand \Quorum f' \tand \Quorum f'')\leq \someone(f\tand f'\tand f'')$, for any $f,f',f'':\points{\to}\THREE$.
However, this is not always the most convenient logical form of the property. 
Arguably, the fundamental logical property of a 3-twined semitopology is this:
\begin{thrm}
\label{thrm.3twined.logic}
Suppose $(\points,\opens)$ is 3-twined and $f,f':\points{\to}\THREE$.
Then
$$
(\Quorum f \tand \Quorum f')\leq \Contraquorum(f\tand f').
$$
As a corollary 
$$
\ment\Quorum f \tand \Quorum f'\quad\text{implies}\quad \ment\Contraquorum(f\tand f').
$$
(A converse to this result also holds: see Remark~\ref{rmrk.3twined.logic.back}.)
\end{thrm}
\begin{proof}
$\Quorum f\tand\Quorum f'=\tvT$ implies that there exist $O,O'\in\opensne$ on which $f$ and $f'$ respectively take the value $\tvT$.
By the 3-twined property, $O\cap O'$ intersects every other $O''\in\opensne$, thus $\Contraquorum(f\tand f')=\tvT$.

$\Quorum f\tand\Quorum f'=\tvB$ implies that there exist $O,O'\in\opensne$ on which $f$ and $f'$ respectively take values in $\{\tvT,\tvB\}$.
By the 3-twined property $O\cap O'$ intersects every other $O''\in\opensne$, thus $\Contraquorum(f\tand f')\in\{\tvT,\tvB\}$.
\end{proof}

Corollary~\ref{corr.3twined.cologic} is a (rather elegant) dual rephrasing of Theorem~\ref{thrm.3twined.logic} and will be useful later:
\begin{corr}
\label{corr.3twined.cologic}
Suppose $(\points,\opens)$ is 3-twined and $f,f':\points\to\THREE$.
Then
$$
\quorum(f\tor f') \leq (\contraquorum f \tor \contraquorum f').
$$
As a corollary, 
$$
\ment\quorum(f\tor f') \quad\text{implies}\quad \ment\contraquorum f \tor \contraquorum f'.
$$
\end{corr}
\begin{proof}
By Theorem~\ref{thrm.3twined.logic} 
$(\quorum\tneg f \tand \quorum\tneg f')\leq \contraquorum(\tneg f\tand \tneg f')$.
By Proposition~\ref{prop.tand.tor}(\ref{item.para.de.Morgan}) $\tneg\contraquorum(\tneg f\tand \tneg f')\leq \tneg(\quorum\tneg f \tand \quorum\tneg f')$. 
Simplifying using Lemma~\ref{lemm.dmc}(\ref{item.dmc.1}) we obtain
$\quorum(f\tor f')\leq (\contraquorum f \tor \contraquorum f')$ as required. 
\end{proof}

\begin{rmrk}
It might be helpful to sum up some highlights of the maths so far.
Suppose $(\points,\opens)$ is a semitopology and $f,f':\points\to\THREE$.

Suppose further that $\ment\quorum(\modTF\circ f)$, meaning that $f$ is correct (= $\tvT$ or $\tvF$, not $\tvB$) on a quorum (= nonempty open set) of participants.
(It is a typical failure assumption in the distributed systems literature that there exists a quorum of correct participants.)

Then we have seen that:
\begin{enumerate*}
\item
$\ment\everyone f$ implies $\ment\modT\quorum f$, by Lemma~\ref{lemm.everyone.and.quorum}(\ref{item.everyone.and.quorum.2}).
\item
If $(\points,\opens)$ is 3-twined then $\ment\quorum f$ implies $\ment\contraquorum(\modT{\circ} f)$ and equivalently $\ment\modT\contraquorum f$, by Theorem~\ref{thrm.3twined.logic}.
\item
$\ment\contraquorum f$ implies $\ment\someone(\modT{\circ} f)$ and equivalently $\ment\modT\someone f$, by Lemma~\ref{lemm.semi.char}(\ref{item.semi.char.TF}).
\item
$\ment\quorum f\tand\contraquorum(\modT{\circ} f)$ implies $\ment\modT\someone f$, from Lemma~\ref{lemm.semi.char}(\ref{item.semi.char.1}) (taking $f'$ in that Lemma to be $\modT{\circ}f$).
\item
If $(\points,\opens)$ is 3-twined then $\ment\Quorum f \tand \Quorum f'$ implies $\ment\Contraquorum(f\tand f')$ and $\ment\quorum(f\tor f')$ implies $\ment\contraquorum f \tor \contraquorum f'$, by Theorem~\ref{thrm.3twined.logic} and Corollary~\ref{corr.3twined.cologic}.
\end{enumerate*}
\end{rmrk}

\begin{rmrk}
\label{rmrk.3twined.logic.back}
For completeness we mention that the reverse implication in Theorem~\ref{thrm.3twined.logic} also holds (though we will not need it here): if for every $f,f':\points\to\THREE$ we have $(\Quorum f \tand \Quorum f')\leq \Contraquorum(f\tand f')$, then $(\points,\opens)$ is 3-twined.  

The argument is as follows:
Suppose we have three nonempty open sets $O,O',O''\in\opensne$.
Following Definition~\ref{defn.3twined} we must show that $O\cap O'\cap O''\neq\varnothing$, and our assumption is that for any functions $f,f':\points\to\THREE$ we have $(\Quorum f\tand\Quorum f')\leq\Contraquorum(f\tand f')$.
We choose $f$ to map $p\in O$ to $\tvT\in\THREE$ and $p\not\in O$ to $\tvB\in\THREE$, and we choose $f'$ to map $p\in O'$ to $\tvT$ and $p\not\in O'$ to $\tvB$.\footnote{Why $\tvB$ and not $\tvF$?  We could just as well use $\tvF$, but the benefit of using $\tvB$ is that $f$ and $f'$ are also \emph{continuous} to $\THREE$ with its natural semitopological structure as developed in~\cite{gabbay:semdca,gabbay:semtad}.  In this paper we do not develop continuity, but in a wider context it is nice to have an `if-and-only-if' not just with respect to the set of all $f,f':\points\to\THREE$, but also specifically for the continuous ones.} 
By the construction of $\Quorum$ in Definition~\ref{defn.semitopology.logic}, $\Quorum f=\tvT=\Quorum f'$, so by our assumption $\Contraquorum(f\tand f')=\tvT$.
By the construction of $\Contraquorum$ in Definition~\ref{defn.semitopology.logic}, it must be that $O\cap O'\cap O'''$ is nonempty for \emph{every} $O'''\in\opensne$, and in particular $O\cap O'\cap O''$ is nonempty as required.
\end{rmrk}

\subsection{A simple voting protocol}

We are now ready to consider our first (very simple) protocol:
\begin{defn}
\label{defn.voting}
Call a tuple $\mu=((\points,\opens),\fvote,\fdecide)$, where 
\begin{itemize*}
\item
$(\points,\opens)$ is a(n arbitrary) semitopology and 
\item
$\fvote,\fdecide:\points\to\THREE$ are functions, 
\end{itemize*}
a \deffont{model}.
We call a model $\mu$ a \deffont{model} (of \ThyVote) when the axioms in Figure~\ref{fig.simple} are valid in $\mu$.
\end{defn}

\begin{rmrk}
We take a minute to unpack what the notation in Figure~\ref{fig.simple} is intended to mean:
\begin{enumerate*}
\item
$\fvote(p)=\tvT$ indicates that $p$ voted for $\tvT$.
$\fvote(p)=\tvF$ indicates that $p$ voted for $\tvF$.
$\fvote(p)=\tvB$ indicates that $p$ (is hostile because it) sent vote messages for \emph{both} $\tvT$ and $\tvF$.\footnote{%
There is no notion here of `not voted yet', because there is no notion of time!
$\fvote$ just records the votes that a participant casts, and our failure assumptions assume that \emph{some} vote is made, as per Remark~\ref{rmrk.failure.assumptions.voting}.
Different assumptions would yield different axioms.}
\item
$\fdecide(p)=\tvT$ indicates that $p$ observed a quorum of vote-$\tvT$ messages -- some of which may be from hostile participants; $p$ cannot know.
Similarly for $\fdecide(p)=\tvF$. 
$\fdecide(p)=\tvB$ indicates that $p$ observed no quorum of messages either way.
\item
$\fvote$ and $\fdecide$ may look like messages (`vote', `observe'), but they are functions of type $\points\to\THREE$ (the reader used to higher-order logic may wish to call them \emph{predicates}).

The intuitive meaning of $\fvote(p)$ and $\fdecide(p)$ is `participant $p$ voted/observed during a run of the protocol'.\footnote{Normally we would vote for or observe a value, but in this simple protocol we leave the values out.}
But the mathematical meaning is just applying a function to a $p\in\points$ to get a truth-value in $\THREE$. 
\end{enumerate*}
\end{rmrk}

\begin{rmrk}
We discuss how the axioms in Figure~\ref{fig.simple} correspond to the protocol in Example~\ref{xmpl.vote}:
\begin{enumerate*}
\item
\rulefont{Observe?} is a \deffont{backward rule}: it expresses that \emph{if} a participant $p$ observes $\tvT$, then $p$ must have received a quorum of vote-$\tvT$ messages.

We use weak implication $\tnotor$ (not strong implication $\timpc$) in \rulefont{Observe?} because a quorum may include some dishonest (hostile) participants who sent dishonest messages; i.e. vote-$\tvT$ to some participants and vote-$\tvF$ to others.
We represent the combination of vote-$\tvT$ and vote-$\tvF$ messages from a hostile participant $q$ by setting $\fvote(q)=\tvB$. 
Thus, if $\fdecide(p)=\tvT$ it does not follow that $\Quorum\fvote=\tvT$; it only follows that $\Quorum\fvote\in\{\tvT,\tvB\}$.
This highlights that $\fvote(p)$ in our declarative model is a single \emph{truth-value}, not a \emph{message} or a set of messages.
\item
\rulefont{Observe\tneg?} just repeats \rulefont{Observe?} for the case that $p$ receives a quorum of vote-$\tvF$.
\item
\rulefont{Observe!}/\rulefont{Observe\tneg!} are \deffont{forward rules}: they express that if an honest quorum votes for $\tvT$/$\tvF$, then every $p$ observes $\tvT$/$\tvF$.

Why did we use strong implication here, not weak implication?
We could use either, but we make a design choice: since (in our protocol) participants do not report their $\fdecide$ to anyone else, there is no useful sense in which lying would be meaningful.
So we use the strong implication.
\item
\rulefont{Correct} asserts that some quorum of participants are honest: they vote $\tvT$ or $\tvF$, and not $\tvB$.
\item
\rulefont{3twined} expresses the characteristic intersection property of a 3-twined semitopology, as per Theorem~\ref{thrm.3twined.logic}.
For this protocol we will use it once, setting $f=\fvote$ and $f'=\tneg\fvote$.
\end{enumerate*}
\end{rmrk}

\begin{rmrk}
At the level of $\fvote$, $\tvB$ denotes a hostile participant sending mixed votes to different participants. 
At the level of $\fdecide$, $\tvB$ just indicates that a participant has observed no quorum of participants voting for $\tvT$ or voting for $\tvF$.

This use of $\tvB$ to mean two things, depending on the context, is a feature, not a bug.
It illustrates a fact about truth-values in $\THREE$: they are just data, which we -- via our choice of axioms and intended interpretation -- can interpret as we choose and as is mathematically convenient.\footnote{An example from real life.  If my wife asks me ``You didn't forget to put the rubbish out, did you?'' and I return a value ``Yes'', does this mean ``Yes, I did forget!'' or ``Yes, I did put the rubbish out!''?  The ``Yes'' here is raw data; its meaning exists in the (hopefully shared) understanding of the people having the conversation.

In the same way, $\tvB$ is just data, and is meaningless until we -- \emph{we} -- use it in axioms and interpret those axioms.
} 
\end{rmrk}

\begin{rmrk}
Note in Figure~\ref{fig.simple} that \rulefont{Observe?} has a dual \rulefont{Observe\tneg?}, and \rulefont{Observe!} has a dual \rulefont{Observe\tneg!} -- but \rulefont{Correct} has no dual axiom \rulefont{Correct\tneg}.
Why is this so?
Because \rulefont{Correct} is self-dual: it is a fact of the truth-tables in Figure~\ref{fig.3} that $\modTF\circ\fvote = \modTF\circ\tneg\circ\fvote$, so that $\Quorum(\modTF\circ\fvote)\tvequal\Quorum(\modTF\circ\tneg\circ\fvote)$.
\end{rmrk}

\begin{lemm}
$\ment \someone\fdecide\tnotor\Quorum\fvote$,\ $\ment \Quorum\fvote\timpc \everyone\fdecide$,\  
$\ment \someone\tneg\fdecide\tnotor\Quorum\tneg\fvote$ and
$\ment \Quorum\tneg\fvote\timpc \everyone\tneg\fdecide$. 
\end{lemm}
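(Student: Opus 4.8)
The plan is to obtain each of the four \emph{aggregated} judgements from its \emph{pointwise} counterpart in Figure~\ref{fig.simple}, by combining the modus ponens characterisations of $\tnotor$ and $\timpc$ with the fact that $\THREE$ is the chain $\tvF<\tvB<\tvT$. The single arithmetic fact I would rely on is that, over this chain, a join $\someone g\equiv\bigvee_{p\in\ns P}g(p)$ equals $\tvT$ exactly when some $g(p)\equiv\tvT$, while a meet $\everyone g\equiv\bigwedge_{p\in\ns P}g(p)$ equals $\tvT$ exactly when every $g(p)\equiv\tvT$ (this holds irrespective of the cardinality of $\ns P$, since the relevant sup/inf is taken in a finite chain of values).

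For $\ment\someone\fdecide\tnotor\Quorum\fvote$ I would invoke \weakmodusponens, reducing the goal to the implication $\someone\fdecide\equiv\tvT\limp\ment\Quorum\fvote$. The hypothesis supplies a point $p$ with $\fdecide(p)\equiv\tvT$; applying axiom \rulefont{Observe?} at that $p$ and then weak modus ponens once more yields $\ment\Quorum\fvote$, as required. For $\ment\Quorum\fvote\timpc\everyone\fdecide$ I would dually invoke \strongmodusponens, reducing the goal to $\Quorum\fvote\equiv\tvT\limp\everyone\fdecide\equiv\tvT$; assuming $\Quorum\fvote\equiv\tvT$, axiom \rulefont{Observe!} together with strong modus ponens gives $\fdecide(p)\equiv\tvT$ for \emph{every} $p$, whence $\everyone\fdecide\equiv\tvT$ by the meet characterisation above.

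I would prove the remaining two judgements identically, substituting $\tneg\fdecide$ for $\fdecide$ and $\tneg\fvote$ for $\fvote$, and invoking \rulefont{Observe\tneg?} and \rulefont{Observe\tneg!} in place of \rulefont{Observe?} and \rulefont{Observe!}. No real obstacle arises; the only point needing care is the chain-lattice bookkeeping, and in particular the alignment whereby $\someone$ sits on the antecedent of the weak implication (where weak modus ponens needs only \emph{some} summand to equal $\tvT$) while $\everyone$ sits on the consequent of the strong implication (where strong modus ponens must establish $\tvT$ at \emph{every} point). It is exactly this placement of $\someone$ on the left of $\tnotor$ and of $\everyone$ on the right of $\timpc$ that lets the pointwise axioms lift verbatim.
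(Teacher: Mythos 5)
Your proposal is correct and follows essentially the same route as the paper's own proof: reduce each judgement via \weakmodusponens or \strongmodusponens, unpack $\someone$/$\everyone$ as join/meet over $\ns P$, and apply \rulefont{Observe?}/\rulefont{Observe!} pointwise (the paper proves only the first two parts and leaves the $\tneg$ cases as symmetric, exactly as you do). Your explicit remark that the sup/inf characterisations hold because $\THREE$ is a finite chain is a small extra care the paper elides.
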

\begin{proof}
The reasoning is very easy and we prove just the first two parts.
\begin{enumerate*}
\item
By \weakmodusponens it suffices to show that $\ment\modT\someone\fdecide$ implies $\ment\Quorum\fvote$.
So suppose $\ment\modT\someone\fdecide$.
Using the meaning of $\someone$ from Definition~\ref{defn.semitopology.logic}, there exists $p\in\points$ such that $\fdecide(p)=\tvT$, and so $\ment\modT\fdecide(p)$.
Then using \weakmodusponensnoref and \rulefont{Observe?} we have $\ment\Quorum\fvote$.
\item
By \strongmodusponens and the meaning of $\everyone$, it suffices to show that $\ment\modT\Quorum\fvote$ implies $\ment\modT\fdecide(p)$ for every $p\in\points$.
But this is immediate from \rulefont{Observe!}.
\qedhere\end{enumerate*}
\end{proof}

We can now prove it is impossible for one participant to observe $\tvT$ and another to observe $\tvF$:
\begin{prop}[Agreement]
\label{prop.vote.correctness}
If $\mu$ is a model of \ThyVote
then 
$\nment\someone\modT\fdecide \tand \someone\modT\tneg\fdecide$.
\end{prop}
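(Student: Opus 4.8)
The plan is to argue by contradiction. First I would note that $\nment\phi$ asserts $\phi=\tvF$, and that both conjuncts $\someone\modT\fdecide$ and $\someone\modT\tneg\fdecide$ are guarded by $\modT$ and so take values in $\{\tvT,\tvF\}$; a $\someone$ (a join) of such values again lies in $\{\tvT,\tvF\}$. Hence the conjunction is $\tvF$ unless both conjuncts equal $\tvT$, and it suffices to derive a contradiction from the two assumptions $\ment\someone\modT\fdecide$ and $\ment\someone\modT\tneg\fdecide$.

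Next I would extract witnesses. From $\ment\someone\modT\fdecide$ and the meaning of $\someone$ in Definition~\ref{defn.semitopology.logic} there is $p\in\ns P$ with $\fdecide(p)=\tvT$; likewise from $\ment\someone\modT\tneg\fdecide$ there is $p'\in\ns P$ with $\tneg\fdecide(p')=\tvT$, i.e. $\fdecide(p')=\tvF$. Applying \weakmodusponensnoref (Proposition~\ref{prop.mp.for.tnotor}(\ref{item.mp.for.tnotor})) to the backward rules \rulefont{Observe?} and \rulefont{Observe\tneg?} at $p$ and $p'$ then yields $\ment\Quorum\fvote$ and $\ment\Quorum\tneg\fvote$, hence $\ment\Quorum\fvote\tand\Quorum\tneg\fvote$ by Proposition~\ref{prop.tand.tor}(\ref{item.tand.tor.tand}).

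Now I would bring in the semitopological structure. By the \rulefont{3twined} axiom, i.e. Theorem~\ref{thrm.3twined.logic}, with $f=\fvote$ and $f'=\tneg\fvote$, validity of $\Quorum\fvote\tand\Quorum\tneg\fvote$ gives $\ment\Contraquorum(\fvote\tand\tneg\fvote)$. Separately \rulefont{Correct} gives $\ment\Quorum(\modTF\circ\fvote)$. Feeding these two facts into Lemma~\ref{lemm.semi.char}(\ref{item.semi.char.1}) with $f=\modTF\circ\fvote$ and $f'=\fvote\tand\tneg\fvote$ produces $\ment\someone\big((\modTF\circ\fvote)\tand\fvote\tand\tneg\fvote\big)$: some participant $q$ witnesses validity of $(\modTF\fvote(q))\tand\fvote(q)\tand\tneg\fvote(q)$.

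The crux --- and the one spot where any real computation is needed --- is that this last witness cannot exist. By Proposition~\ref{prop.tand.tor}(\ref{item.para.para}), $\fvote(q)\tand\tneg\fvote(q)$ is valid exactly when $\fvote(q)=\tvB$; but the truth-table for $\modTF$ gives $\modTF\tvB=\tvF$, so when $\fvote(q)=\tvB$ the pointwise conjunction $(\modTF\fvote(q))\tand\fvote(q)\tand\tneg\fvote(q)$ equals $\tvF$, and it is $\tvF$ at every other $q$ too. Thus $\someone$ of it is $\tvF$, contradicting its validity, and the proof is complete. Conceptually this is exactly the counting argument of Remark~\ref{rmrk.failure.assumptions.voting} rendered inside the logic: the honest quorum guaranteed by \rulefont{Correct} must meet the blocking set of conflicted voters forced by the two observations, yet no participant can be simultaneously honest (voting $\tvT$ or $\tvF$) and byzantine (having sent $\tvB$). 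I expect the main obstacle to be not any single step but the bookkeeping, namely arranging the two backward rules, the 3-twined intersection, and the correctness assumption so that they all converge on the same witness $q$.
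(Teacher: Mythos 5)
Your proof is correct and follows essentially the same route as the paper's: extract witnesses from the two $\someone\modT$ assumptions, apply \rulefont{Observe?}/\rulefont{Observe\tneg?} to get $\ment\Quorum\fvote\tand\Quorum\tneg\fvote$, use \rulefont{3twined} and then \rulefont{Correct} with Lemma~\ref{lemm.semi.char}(\ref{item.semi.char.1}) to produce an impossible witness. The only cosmetic difference is that the paper first rewrites $\fvote\tand\tneg\fvote$ as $\modB\fvote$ via Proposition~\ref{prop.tand.tor}(\ref{item.para.para}) before concluding $\modB\fvote\tand\modTF\fvote\equiv\tvF$, whereas you keep the raw conjunction and do the same truth-table computation at the end.
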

\begin{proof}
Suppose $\ment\someone\modT\fdecide\tand\someone\modT\tneg\fdecide$. 
Then there exist $p,p'\in\points$ such that $\fdecide(p)=\tvT$ and $\fdecide(p')=\tvF$.
Using \rulefont{Observe?} and \rulefont{Observe\tneg?} $\ment\Quorum\fvote\tand\Quorum\tneg\fvote$. 
By \rulefont{3twined} $\ment\Contraquorum(\fvote\tand\tneg\fvote)$, so using 
Proposition~\ref{prop.tand.tor}(\ref{item.para.para}) $\ment\Contraquorum\modB\fvote$.
Using \rulefont{Correct} and Lemma~\ref{lemm.semi.char}(\ref{item.semi.char.1}) $\ment\someone(\modB\fvote\tand\modTF\fvote)$.
But $(\modB\fvote\tand\modTF\fvote) \tvequal \tvF$, and $\ment\someone\tvF$ is impossible.
\end{proof}

\begin{rmrk}
A more traditional proof for Proposition~\ref{prop.vote.correctness} would work concretely with the set of quorums from Example~\ref{xmpl.vote}, and prove Agreement by counting and calculations (see for example Lemmas~1 to~4 used in the proof of Theorem~1 from~\cite[page~135]{bracha:asybap}). 

Proposition~\ref{prop.vote.correctness} is not that proof: the semitopology is abstract, except for satisfying the axioms in Figure~\ref{fig.simple}.
The reasoning is logical in flavour and follows the structure of the axioms.
This is simple, general, and it captures the logical essence of what the concrete counting proofs are actually doing.
\end{rmrk}

\begin{rmrk}
The failure model is encoded in the axioms -- typically this shows up in the strength of the implications we use.
For example: our failure model in Remark~\ref{rmrk.failure.assumptions.voting} assumed a reliable network, which is reflected by the use of \emph{strong implication} in the forward rules \rulefont{Observe!} and \rulefont{Observe\tneg!} in Figure~\ref{fig.simple}.

To model an unreliable network, in which messages sent might not arrive, we would weaken \rulefont{Observe!} to $\Quorum\fvote\tnotor\fdecide$, and similarly for \rulefont{Observe\tneg!}.
This reflects that a quorum of honest participants vote for $\tvT$, but the network fails to transmit their messages so a participant $p$ (even if honest) may remain unable to observe and so set $\fdecide(p)=\tvB$.
\end{rmrk}

\section{The modal logic}
\label{sect.logic}

The above concludes our discussion of the simple voting protocol.
We are now ready to start studying Bracha Broadcast.
We will need two more ingredients: 
\begin{enumerate*}
\item
Our voting example was binary (vote-$\tvT$ or vote-$\tvF$).
For Bracha Broadcast we assume an arbitrary nonempty set of values $\Val$, over which we will need three-valued quantifiers like unique existence $\texiunique:\Val\to\THREE$.
We study this in Subsection~\ref{subsect.existence}.
\item
Our voting example worked directly with semantics; we made assertions directly about functions like $\fvote:\points\to\THREE$.
For Bracha Broadcast it is better to build the syntax and semantics of a (simple) logic; we will have predicate-symbols like $\tf{echo}$, which will be interpreted as functions like $\varsigma(\tf{echo}):\points\to\Val\to\THREE$.
We study this in Subsection~\ref{subsect.logic}.
\end{enumerate*}
 
\subsection{(Unique/affine) existence in three-valued logic}
\label{subsect.existence}

Distributed algorithms often require an implementation to make choices, e.g. `respond only to the first message you receive'.
In axioms, we model this using unique existence and affine existence:
\begin{defn}
\label{defn.unique.exist}
Suppose $\Val$ is a set and $f:\Val\to\THREE$ is a function.
\begin{enumerate*}
\item\label{item.3.equality}
Write ${\teq}:\Val\to\Val\to\THREE$ for \deffont{$\THREE$-equality} which is such that $v\teq v' = \tvT$ if $v=v'$ and $v\teq v' = \tvF$ if $v\neq v'$.\footnote{I hope typography will distinguish between $\teq$ (the function that returns a truth-value in $\THREE$) and $=$ (equality in the discourse of this paper).  
} 
\item
Define \deffont{existence} $\texi f$,\ \deffont{affine existence} $\texiaffine f$ (there exist zero or one) and \deffont{unique existence} $\texiunique  f$ (there exists precisely one)\footnote{The standard symbol for unique existence is $\exists!$.  I am not aware of a standard symbol for affine existence.
\emph{Regular expressions} use $?$ to denote zero or one matches, so I would like to write affine existence as $\exists?$.  Unfortunately we are already using $!$ and $?$ to name the forwards and backward rules, and I found it visually confusing to also place $!$ and $?$ in the logical syntax.  So we write $\existunique$ and $\existaffine$ instead.} %
as per Figure~\ref{fig.ex}.
\end{enumerate*}
\end{defn}

\begin{figure}
$$
\textstyle
\begin{array}{r@{\ }l}
\texi f =& \bigvee_{v\in\Val} f(v)
\\
\texiaffine f =& \bigwedge_{v,v'{\in}\Val}(f(v)\tand f(v')) \tnotor v \teq v'
\\
\texiunique f =&(\texi f) \tand (\texiaffine f) 
\end{array}
$$
Above: $\Val$ is a set and $f:\Val\to\THREE$ is a function; $=$ on the left is definitional equality defining the LHS to mean the RHS (this $=$ is \emph{not} an operation on $\THREE$); $\tnotor$ in the clause for $\texiaffine f$ is the weak implication from Figure~\ref{fig.3}, and $\bigvee$ and $\bigwedge$ denote join (least upper bound) and meet (greatest lower bound) in $\THREE$-the-lattice $\tvF<\tvB<\tvT$ from Definition~\ref{defn.THREE} (so $\tnotor$, $\bigvee$, $\bigwedge$ \emph{are} operations on $\THREE$); and $\teq:\Val\to\Val\to\THREE$ in the clause for $\texiaffine f$ is the $\THREE$-equality from Definition~\ref{defn.unique.exist}(\ref{item.3.equality}).
\caption{Existence, affine existence, and unique existence (Definition~\ref{defn.unique.exist})}
\label{fig.ex}
\end{figure}

\begin{rmrk}
\label{rmrk.unpack.texiunique}
Note that $\texi$, $\texiunique$, and $\texiaffine$ return truth-values in $\THREE$, whereas the usual Boolean quantifiers -- write them $\exists$, $\exists_{1}$, and $\exists_{01}$ -- return truth-values in $\{\bot,\top\}$.
So even if `$\texiaffine$' looks like `$\exists_{01}$', these are different creatures.

Therefore, continuing the notation of Definition~\ref{defn.unique.exist}, we spell out cases to check how the Definition works:
\begin{enumerate*}
\item
If $\Exists{v\neq v'\in\Val}f(v)= \tvT = f(v')$, then 
$\texiunique f =\tvF=\texiaffine f$.
\item
If $\exists_{1}v\in\Val.f(v)=\tvT$, and $\Forall{v'\in\Val}f(v')\in\{\tvT,\tvF\}$, then 
$\texiunique f=\tvT=\texiaffine f$.
\item
If $\exists_{1}v\in\Val.f(v)=\tvT$, and $\Exists{v'\in\Val}f(v')=\tvB$, then 
$\texiunique f=\tvB=\texiaffine f$.
\item\label{item.explain.existsunique.4}
If $\neg\Exists{v\in\Val}f(v)=\tvT$ and $\existunique v'\in\Val.f(v')=\tvB$, then 
$\texiunique f=\tvB$ and $\texiaffine f=\tvT$. 
\item\label{item.two.b}
If $\neg\Exists{v\in\Val}f(v)=\tvT$ and $\exists v,v'\in\Val.v\neq v'\land f(v)=\tvB=f(v')$, then 
$\texiunique f=\tvB=\texiaffine f$.

\noindent
\emph{If $\Forall{v\in\Val}f(v)=\tvB$, then $\texiunique f=\tvB$ -- note that this is \emph{not} $\tvF$.
This illustrates that in our three-valued setting, `unique existence' is not the same as `unique non-false value'.}
\item
If $\Forall{v\in\Val}f(v)=\tvF$, then
$\texiunique f=\tvF$ and $\texiaffine f=\tvT$.
\end{enumerate*}
\end{rmrk}

Our proofs will use unique/affine existence via Proposition~\ref{prop.unique.affine.existence}: 
\begin{prop}
\label{prop.unique.affine.existence}
Suppose $V$ is a set and $f:V\to\THREE$ is a function and $v,v'\in V$.
Then:
\begin{enumerate*}
\item\label{item.unique.affine.existence.01}
The following are equivalent:
\begin{enumerate*}
\item\label{item.unique.affine.existence.01.a}
$\ment\texiaffine f$.
\item
$\texiaffine f\in\{\tvT,\tvB\}$.
\item
$\existaffine v{\in} V.(\ment\modT f(v))$.
\item
$\existaffine v{\in} V.f(v)=\tvT$.
\item\label{item.unique.affine.existence.01.e}
$\Forall{v,v'{\in}V}f(v)=\tvT=f(v') \limp v=v'$.
\end{enumerate*}
\item\label{item.unique.affine.existence.11}
The following are equivalent:
\begin{enumerate*}
\item
$\ment\texiunique f$.
\item
$\exists v{\in} V.(\ment f(v))\ \land \ {\ment\texiaffine f}$.
\end{enumerate*}
\item\label{item.unique.affine.existence.01.T}
The following are equivalent:
\begin{enumerate*}
\item
$\ment\modT\texiaffine f$.
\item
$\existaffine v{\in} V.f(v)\in\{\tvT,\tvB\}$.
\end{enumerate*}
\item\label{item.unique.affine.existence.1}
The following are equivalent:
\begin{enumerate*}
\item
$\ment\modT\texiunique f$.
\item
$\existunique v{\in} V.f(v)=\tvT$ and $\forall v{\in}V.f(v)\neq\tvB$.
\end{enumerate*}
\item\label{item.unique.affine.existence.01.implies}
If $\ment\texiaffine f$ or $\ment\texiunique f$, and $\ment\modT(f(v)\tand f(v'))$,\ then $v=v'$.
\end{enumerate*}
\end{prop}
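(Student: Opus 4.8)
The plan is to reduce every clause to the behaviour of meets and joins over the three-element chain $\tvF<\tvB<\tvT$, combined with \weakmodusponensnoref and a direct reading of the truth-tables in Figure~\ref{fig.3}. The one elementary fact I would isolate first is this: for a family $(x_i)_i$ of elements of $\THREE$, the meet $\bigwedge_i x_i$ equals $\tvF$ exactly when some $x_i=\tvF$, lies in $\{\tvT,\tvB\}$ exactly when every $x_i\in\{\tvT,\tvB\}$, and equals $\tvT$ exactly when every $x_i=\tvT$ --- and dually for joins. Applying this to $\texiaffine f=\bigwedge_{v,v'}(f(v)\tand f(v'))\tnotor(v\teq v')$ and to $\texi f=\bigvee_v f(v)$ converts each global statement into a statement about the individual conjuncts or disjuncts, after which everything reduces to a finite check on the three inputs $f(v)$, $f(v')$, and $v\teq v'\in\{\tvT,\tvF\}$.

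I would treat the ``validity'' clauses (parts~\ref{item.unique.affine.existence.01}, \ref{item.unique.affine.existence.11}, and~\ref{item.unique.affine.existence.01.implies}) first, since they only concern being $\neq\tvF$. For part~\ref{item.unique.affine.existence.01}, (a) iff (b) is the definition of $\ment$, while (c) iff (d) iff (e) is immediate from the $\modT$-table ($\ment\modT\f{tv}$ iff $\f{tv}=\tvT$) and from unfolding $\existaffine$. The substantive link (b) iff (e) comes from the lattice fact: $\ment\texiaffine f$ iff every conjunct is $\neq\tvF$; the diagonal pairs ($v=v'$) are automatically $\tvT$ since $v\teq v'=\tvT$ and $p\tnotor\tvT=\tvT$, so only the off-diagonal pairs matter, and there \weakmodusponensnoref turns ``conjunct $\neq\tvF$'' into ``$f(v)=\tvT=f(v')$ implies $v=v'$'', which is~(\ref{item.unique.affine.existence.01.e}). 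Part~\ref{item.unique.affine.existence.11} then follows by unfolding $\texiunique f=\texi f\tand\texiaffine f$, using Proposition~\ref{prop.tand.tor}(\ref{item.tand.tor.tand}) (validity distributes over $\tand$) and the join-reading of $\texi f$. Part~\ref{item.unique.affine.existence.01.implies} follows from part~\ref{item.unique.affine.existence.01}: $\ment\modT(f(v)\tand f(v'))$ gives $f(v)=\tvT=f(v')$, and $\ment\texiaffine f$ (which $\ment\texiunique f$ also entails, by the same distributivity) gives clause~(\ref{item.unique.affine.existence.01.e}), forcing $v=v'$.

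For the ``truth'' clauses (parts~\ref{item.unique.affine.existence.01.T} and~\ref{item.unique.affine.existence.1}), which concern $\ment\modT(-)$, I would use $\ment\modT\f{tv}\liff\f{tv}=\tvT$ to replace the goal by ``$\texiaffine f=\tvT$'' resp. ``$\texiunique f=\tvT$'', and then apply the $\tvT$-case of the lattice fact (all conjuncts/disjuncts equal $\tvT$). Part~\ref{item.unique.affine.existence.1} is the clean one: $\texiunique f=\texi f\tand\texiaffine f=\tvT$ forces both $\texi f=\tvT$ (so a genuine $\tvT$ witness exists) and $\texiaffine f=\tvT$ (so at most one value is non-$\tvF$), and together these pin down exactly one $\tvT$ and no $\tvB$, which is~(b).

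I expect part~\ref{item.unique.affine.existence.01.T} to be the step needing the most care. Reducing $\texiaffine f=\tvT$ to ``every conjunct $=\tvT$'' and reading the $\tnotor$-table on the off-diagonal gives the condition ``for all $v\neq v'$, $f(v)\tand f(v')=\tvF$'', i.e. \emph{at most one value is non-$\tvF$}; yet a lone $\tvB$-valued point (all others $\tvF$) satisfies this while violating clause~(b), which demands \emph{no} $\tvB$. This is exactly the boundary case discussed in Remark~\ref{rmrk.unpack.texiunique}(\ref{item.explain.existsunique.4}), where $\texiaffine f$ on an isolated $\tvB$ is taken to be $\tvB$; under that convention the conjunct analysis yields ``no conjunct $\tvB$ and none $\tvF$'', i.e. ``no $\tvB$ and at most one $\tvT$'', which is~(b). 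I would therefore fix this convention explicitly, checking it against the lone-$\tvB$ configuration, before committing to the per-conjunct reduction. Reassuringly, part~\ref{item.unique.affine.existence.1} is immune to the choice, since there the extra factor $\texi f$ already fails to equal $\tvT$ in that configuration.
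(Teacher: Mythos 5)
Your approach is the same as the paper's: the paper's entire proof is ``unpack Definition~\ref{defn.unique.exist} and check the possibilities in Remark~\ref{rmrk.unpack.texiunique}'', and your reduction to per-conjunct checks over the chain $\tvF<\tvB<\tvT$ is exactly that check written out. Your treatments of parts~\ref{item.unique.affine.existence.01}, \ref{item.unique.affine.existence.11}, \ref{item.unique.affine.existence.1} and~\ref{item.unique.affine.existence.01.implies} are correct.

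Your worry about part~\ref{item.unique.affine.existence.01.T} is not merely a point needing care: the configuration you name is a genuine counterexample to part~\ref{item.unique.affine.existence.01.T} under the literal reading of Definition~\ref{defn.unique.exist}. Take $f(v_0)=\tvB$ and $f(v)=\tvF$ for all $v\neq v_0$. Every diagonal conjunct of $\texiaffine f$ is $x\tnotor\tvT=\tvT$, and every off-diagonal conjunct is $(f(v)\tand f(v'))\tnotor\tvF=\tvF\tnotor\tvF=\tvT$ because at least one argument is $\tvF$; so $\texiaffine f=\tvT$ and clause~(a) holds, while clause~(b) fails since $f(v_0)=\tvB$. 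The lone $\tvB$ can never produce a $\tvB$ conjunct: off the diagonal its partner is $\tvF$, which kills the antecedent, and on the diagonal the consequent is $\tvT$. So the ``convention'' you propose to adopt (that $\texiaffine f=\tvB$ here, as listed in Remark~\ref{rmrk.unpack.texiunique}(\ref{item.explain.existsunique.4})) is not a reading of the definition but an amendment to it; the paper's own italic note in that Remark concedes that the definition ``arguably'' gives $\tvT$, and its claim that the Proposition is unaffected either way is wrong for part~\ref{item.unique.affine.existence.01.T} specifically (though harmless in practice, since part~\ref{item.unique.affine.existence.01.T} is never cited later in the paper). Your instinct to pin the convention down before committing to the per-conjunct reduction is exactly right; the honest fixes are either to weaken clause~(b) of part~\ref{item.unique.affine.existence.01.T} to ``at most one $v$ with $f(v)\neq\tvF$'', or to change the definition of $\texiaffine$. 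Part~\ref{item.unique.affine.existence.1} is, as you say, immune, since $\texi f=\tvB\neq\tvT$ in this configuration.
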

\begin{proof}
We unpack Definition~\ref{defn.unique.exist} and check the possibilities in Remark~\ref{rmrk.unpack.texiunique}.
\end{proof}

\subsection{A simple modal logic}
\label{subsect.logic}

\begin{figure}
$$
\hspace{-5em}\begin{array}{l@{\quad}r@{\ }l}
\text{Terms} & t ::=& a \mid v \quad (a\in\ns{VarSymb},\ v\in\Val)
\\
\text{Predicates} & \phi::=& \tbot \mid \tneg\phi \mid \phi\tand\phi \mid t\teq t \mid \everyone\phi \mid \Quorum\phi 
\\
&&\phantom{\tbot} \mid \texi a.\phi \mid \texiaffine a.\phi \mid \modTF\phi \mid \tf P(t) \quad (\tf P\in\predsymb)
\\[2ex]
\text{Sugar} &
\begin{array}[t]{r@{\ }l@{\ \ }r@{\ }l@{\ \ }r@{\ }l}
\someone\phi =& \tneg\everyone\tneg\phi
&
\Contraquorum\phi =& \tneg\Quorum\tneg\phi
&
\phi\tor\phi' =& \tneg(\tneg\phi\tand\tneg\phi')
\\
\modB\phi =& \tneg\modTF\phi
&
\modT\phi =& \phi\tand\modTF\phi
&
\phi\toplus\phi' =& (\phi\tand\tneg\phi')\tor(\tneg\phi\tand\phi') \hspace{-5em}
\\
\phi\tnotor \phi' =& \tneg\phi\tor\phi'
&
\phi\timpc\phi' =& \phi\tnotor\modT\phi'
\\
\tall a.\phi =& \tneg\texi a.\tneg\phi
&
\texiunique a.\phi =& \texiaffine a.\phi\tand \texi a.\phi
\\
\correct{\tf P_1,\dots,\tf P_n} =& 
\rlap{$\tall a.(\modTF\tf P_1(a)\tand\dots\tand\modTF\tf P_n(a))\qquad (\tf P_1,\dots,\tf P_n\in\predsymb)$}
\\
\incorrect{\tf P_1,\dots,\tf P_n} =& 
\rlap{$\tall a.(\modB\tf P_1(a)\tand\dots\tand\modB\tf P_n(a))$} 
\end{array}
\hspace{-24.5em}
\end{array}
$$
\caption{Syntax of a simple modal logic (Definition~\ref{defn.logic})}
\label{fig.syntax}
\end{figure}

\begin{defn}
\label{defn.logic}
We fix an infinite set of \deffont{variable symbols} $a,b\in\ns{VarSymb}$. 
We assume a nonempty set of \deffont{values} $v\in\Val$ and a set of \deffont{predicate symbols} $\tf P\in\predsymb$.
Specifically:
\begin{enumerate*}
\item\label{item.bb.sig}
For our example of Bracha Broadcast in Definition~\ref{defn.ThyBB}, we set $\Val$ to be any nonempty set and we take
$\predsymb =\{\broadcast, \echo, \ready, \deliver\}$.
\item\label{item.ca.sig}
For our example of Crusader Agreement in Definition~\ref{defn.ca.algorithm}, we set $\Val=\{0,\botval,1\}$ and we take 
$\predsymb =\{\tf{input}, \echo_1, \echo_2, \tf{output} \}$.
\end{enumerate*}
We then define \deffont{terms} $t$, \deffont{predicates} $\phi$, and syntactic sugar as in Figure~\ref{fig.syntax}.

Write $\phi[a\ssm v]$ for \deffont{substitution} of $v$ for $a$; this is the result of replacing every unbound $a$ in $\phi$ with $v$.
\end{defn}

\begin{nttn}
\label{nttn.abuse.notation}
We may abuse notation in two ways:
\begin{enumerate*}
\item\label{item.abuse.notation.1}
We may write $v$ for quantified variables, e.g. writing $\texi v.\echo(v)$ instead of $\texi a.\echo(a)$.
We may also write an (implicitly) universally quantified variable as $v$ instead of as $a$; e.g. we do this in Figures~\ref{fig.bb.valid} and~\ref{fig.ca.valid}.

The reader is welcome to mentally rewrite $v$ back to $a$, or to decree that the $v$ in Figures~\ref{fig.bb.valid} and~\ref{fig.ca.valid} is actually a variable symbol.\footnote{This kind of overloading between object-level variable symbols and discourse-level variable symbols is not uncommon in mathematical discourse; e.g. if asked what `$x$' is in $\forall x\in\mathbb N.x=x$ the answer `it's a number' would be considered just as valid as `it's a variable symbol denoting a number'.  For extra pedant-points, note that $a$ and $b$ are arguably not actually variable symbols; they are arguably discourse-level variables ranging over distinct object-level variable symbols.  

As my set-theory teacher once advised me when I asked if $x$ in the course was a set, a variable denoting a set, or a variable denoting a variable denoting a set: don't think too hard about this, for that way lies madness.}
Meaning should be quite clear.
\item\label{item.abuse.notation.2}
We may also elide quantified variables entirely, in the style of higher-order logic.
E.g. we may write $\texi \echo$ for $\texi a.\echo(a)$, or $\texiaffine\someone\deliver$ for $\texiaffine a.\someone\deliver(a)$.
We will only do this where it is unambiguous how to fill in the relevant variable symbol if required.
\end{enumerate*}
\end{nttn}

\begin{figure}
$$
\begin{array}{r@{\ }l@{\quad}r@{\ }l@{\quad}r@{\ }l}
\modelm{\tbot}(p)=&\tvF
&
\modelm{\tneg\phi}(p)=&\tneg\modelm{\phi}(p)
\\
\modelm{\phi\tand\phi'}(p) =& \modelm{\phi}(p)\tand\modelm{\phi'}(p)
\\
\modelm{\everyone\phi}(p) =& \everyone \modelm{\phi}
&
\modelm{\Quorum\phi}(p) =& \Quorum \modelm{\phi}
\\
\modelm{\texi a.\phi}(p) =& \texi \lambda v.\modelm{\phi[a\ssm v]}(p)
&
\modelm{\texiaffine a.\phi}(p) =& \texiaffine \lambda v.\modelm{\phi[a\ssm v]}(p)
\\
\modelm{\modTF\phi}(p) =& \modTF(\modelm{\phi}(p))
\\
\modelm{\tf P(v)}(p) =& \varsigma(\tf P)(p)(v)
\ \ \rlap{$(\tf P\in\predsymb)$} 
\\
\modelm{v\teq v'}(p)=& \begin{cases} \tvT & v=v' \\ \tvF & v\neq v' \end{cases} 
\end{array}
$$ 
\ \\
$$
\begin{array}{l@{\ \ }c@{\ \ }l@{\qquad}l@{\ \ }c@{\ \ }l}
p\mentM\phi &\text{means}&\modelm{\phi}(p)\in\{\tvT,\tvB\}
&
\mentM\phi &\text{means}&\Forall{p{\in}\points}(p\mentM\phi)
\\
\mentM\Phi &\text{means}&\Forall{\phi{\in}\Phi}({\mentM\phi})
&
\Phi\mentM\phi &\text{means}&{\mentM\Phi} \limp {\mentM\phi}
\end{array}
$$
Above: $\phi$ is a closed predicate (no free variables; it may contain values).
$\texi$ and $\texiaffine$ to the left of $=$ are syntax from Figure~\ref{fig.syntax}, and to the right of $=$ they are functions as per Figure~\ref{fig.ex}.
Similarly $\everyone$ and $\quorum$ to the left of $=$ are syntax, and to the right of $=$ they are functions as per Definition~\ref{defn.semitopology.logic}.
The $\teq$ in $v\teq v$ on the left is the symbol in our predicate syntax from Figure~\ref{fig.syntax}; the other `equals' symbols are normal mathematical equality. 
\caption{Denotation for a simple modal logic (Definition~\ref{defn.sem})}
\label{fig.sem}
\end{figure}

\begin{defn}
\label{defn.sem}
\leavevmode
\begin{enumerate*}
\item
Given a semitopology $(\points,\opens)$, an \deffont{interpretation} $\varsigma:\predsymb\to\points\to\Val\to\THREE$ is an assignment to each $\tf P\in\predsymb$ of a function $\varsigma(\tf P):\points\to\Val\to\THREE$.
\item
A \deffont{model} $\mu = (\Val,(\points,\opens),\varsigma)$ consists of a set of values, a semitopology, and an interpretation.\footnote{In practice we might wish to include $\predsymb$ explicitly in $\mu$, but note that this is mathematically redundant, in the sense that $\predsymb$ can be recovered as the domain of $\varsigma$.}
\item
Given a model $\mu$, and a closed predicate $\phi$ (no free variable symbols; it may contain values), and $p\in\points$, we define notions of 
\begin{itemize*}
\item
\deffont{denotation} $\modelm{\phi}:\points\to\THREE$ and 
\item
\deffont{validity} $p\mentM\phi$ and $\mentM\phi$ 
\end{itemize*}
as per Figure~\ref{fig.sem}.
\item
An \deffont{axiom} is a closed predicate $\phi$.\footnote{The terminology reflects intent: if we point at a closed predicate $\phi$ and say `This $\phi$ is an axiom!', it carries a connotation that we may intend to restrict to models in which $\phi$ is \emph{valid}.} 
Call an axiom $\phi$ \deffont{valid} in $\mu$ when $\mentM\phi$.
Call a set of axioms $\theory{Thy}$ a \deffont{theory}.
\item
Write $\mentM\theory{Thy}$ when $\Forall{\phi{\in}\theory{Thy}}(\mentM\phi)$ (i.e. the axioms of \theory{Thy} are all valid in $\mu$), in which case we call $\mu$ a \deffont{model of} \theory{Thy}.
\item
Write $\theory{Thy}\mentM\phi$ when $\mentM\theory{Thy}$ implies $\mentM\phi$ (i.e. if $\mu$ is a model of \theory{Thy}, then $\phi$ is valid in $\mu$).
\end{enumerate*}
\end{defn}

\begin{nttn}
In what follows, we may use the fact that $p\mentM \someone\phi$ is equivalent to $\mentM\someone\phi$ without comment, and similarly for the other modalities $\everyone$, $\Quorum$, and $\Contraquorum$, and for compound predicates like $\someone\phi\tnotor\everyone\phi$. 
This pattern will be quite common in the proofs that follow, because many of our axioms have the form $\phi\tnotor\Quorum\phi'$, so when we use \weakmodusponens we may assume $p\mentM\modT\phi$ and deduce $\mentM\Quorum\phi'$ (not $p\mentM\Quorum\phi'$); the $p$ is still there in some sense, but it no longer matters.
\end{nttn}

\begin{rmrk}
We should check that the choice of sugared symbols in Figure~\ref{fig.syntax} matches up with what we would expect from the truth-tables in Figure~\ref{fig.3} and the denotations in Figure~\ref{fig.sem}.
 
Thus (for example) in Figure~\ref{fig.sem} we fix the denotations $\model{\tneg\phi}$ and $\model{\phi\tand\phi'}$, and in Figure~\ref{fig.syntax} we define 
$\phi\tor\phi'$ as sugar for $\tneg(\tneg\phi\tand\tneg\phi')$ and 
$\phi\toplus\phi'$ as sugar for $(\phi\tand\tneg\phi')\tor(\tneg\phi\tand\phi')$ and
$\phi\tnotor\phi'$ as sugar for $\tneg\phi\tor\phi'$ and 
$\modT\phi'$ as sugar for $\phi'\tand\modTF\phi'$ and 
$\phi\timpc\phi'$ as sugar for $\phi\tnotor\modT\phi'$.

So if we unpack $\model{\text{-}}$ on $\tor$, $\toplus$, $\tnotor$, $\modT$, and $\timpc$ as \emph{sugar}, do we arrive at the truth-tables for $\tor$, $\toplus$, $\tnotor$, $\modT$, and $\timpc$ respectively in Figure~\ref{fig.3}?

We do.
Checking this is completely straightforward. 
\end{rmrk}

\section{Declarative Bracha Broadcast}
\label{sect.bb}

\subsection{Definition of the theory}

\begin{rmrk}
\label{rmrk.high-level.bb}
Bracha Broadcast is a classic broadcast algorithm~\cite{bracha:asybap}.
An English description of the algorithm is given in Figure~\ref{fig.bb.english}.
It is designed to allow a group of participants to agree on at most one value, in the presence of some hostile participants. 
\end{rmrk}

\begin{figure}[t]
\noindent\textbf{Protocol description.}
\begin{enumerate*}
\item
A designated \emph{sender} participant sends messages to all participants, \deffont{broadcasting} a value $v$.
Message-passing is reliable, so messages always arrive.
\item
If a participant receives a broadcast $v$ message from the sender -- it is assumed that every participant knows who the sender is, and other participants cannot forge the sender's signature -- it sends an \deffont{echo} $v$ to all other participants.

Each participant will only echo \emph{once}, so if it receives two broadcast messages with different values then it will only echo one of them. 
\item\label{item.bb.contraquorum}
If a participant receives a quorum of echo messages for a value $v$, or a contraquorum of ready messages for a value $v$, then it sends messages to all participants declaring itself \deffont{ready} with $v$.
\item
If a participant receives a quorum of ready messages for a value $v$, then the participant \deffont{delivers} $v$.
\end{enumerate*}
\noindent\textbf{Failure assumptions.}
Our failure assumption will be that a quorum of participants follows the algorithm, and a coquorum (the complement of a quorum, i.e. a nontotal closed set) of participants may not follow the algorithm (more on this terminology in Remark~\ref{rmrk.contraquorum}).
For instance, a hostile participant may broadcast or echo multiple values $v$. 
\caption{Informal (English) description of Bracha Broadcast (Remark~\ref{rmrk.high-level.bb})}
\label{fig.bb.english}
\end{figure}

\begin{figure}[t]
$$
\begin{array}[t]{l@{\quad}r@{\ }l}
\figunderline{Backward rules}
\figskip
\rulefont{BrDeliver?}&
&\deliver(a) \tnotor \QuorumBox\ready(a)
\figskip
\rulefont{BrReady?}&
&\ready(a) \tnotor \QuorumBox\echo(a)
\figskip
\rulefont{BrEcho?}&
&\echo(a) \tnotor \someone\broadcast(a) 
\figskip
\figskip
\figunderline{Other rules}
\figskip
\rulefont{BrEcho01}&
&\texiaffine \echo
\figskip
\rulefont{BrBroadcast1}&
&\texiunique \someone\broadcast
\end{array}
\qquad
\begin{array}[t]{l@{\quad}r@{\ }l}
\figunderline{Forward rules}
\figskip
\rulefont{BrDeliver!}&
&\QuorumBox\ready(a) \tnotor \deliver(a) 
\figskip
\rulefont{BrReady!}&
&\QuorumBox\echo(a) \tnotor \ready(a)
\figskip
\rulefont{BrEcho!}&
&\someone\broadcast(a) \tnotor \texi \echo
\figskip
\rulefont{BrReady!!}&
&\CoquorumDiamond\ready(a) \tnotor \ready(a)
\figskip
\figskip
\rulefont{BrCorrect}&
&\QuorumBox\correct{\ready}\tand\QuorumBox\correct{\echo}
\figskip
\rulefont{BrCorrect'}&
&\correct{\tf P}\tor\incorrect{\tf P}
\ (\tf P{\in}\{\ready,\echo\})
\figskip
\rulefont{BrCorrect''}&
&\everyone\correct{\broadcast}\tor\everyone\incorrect{\broadcast}
\end{array}
$$
Free $a$ above are assumed universally quantified (i.e. there is an invisible $\tall a$ at the start of any axiom with a free $a$).
The axioms above assume predicate symbols $\predsymb =\{\broadcast, \echo, \ready, \deliver\}$, as per Definition~\ref{defn.logic}.
We use Notation~\ref{nttn.abuse.notation}(\ref{item.abuse.notation.2}) to elide quantified variable symbols in \rulefont{BrEcho01} and \rulefont{BrBroadcast1}. 
\caption{$\ThyBB$: axioms of Declarative Bracha Broadcast (Definition~\ref{defn.ThyBB})}
\label{fig.bb}
\end{figure}

\begin{figure}[t]
$$
\begin{array}[t]{l@{\quad}r@{\ }l@{\qquad}l}
\rulefont{BrValidity}&&
\ThyBB\mentM\someone\broadcast(v)\tnotor\everyone\deliver(v)
&\text{(Proposition~\ref{prop.bb.validity})}
\\
\rulefont{BrNoDup}&&
\ThyBB\mentM\texiaffine\deliver
&\text{(Proposition~\ref{prop.no.duplication})}
\\
\rulefont{BrIntegrity}&&
\ThyBB\mentM\deliver(v) \tnotor \someoneAll\broadcast(v)
&\text{(Proposition~\ref{prop.integrity})}
\\
\rulefont{BrConsistency}&&
\ThyBB\mentM\texiaffine\someone\deliver %
&\text{(Proposition~\ref{prop.consistency.bb})}
\\
\rulefont{BrTotality}&&
\ThyBB\mentM\someoneAll\deliver(v) \tnotor \everyoneAll\deliver(v)
&\text{(Proposition~\ref{prop.bb.totality})}
\end{array}
$$
Above, we use Notation~\ref{nttn.abuse.notation}(\ref{item.abuse.notation.2}) to elide quantified variable symbols in \rulefont{BrNoDup} and \rulefont{BrConsistency}. 
\caption{Correctness properties for $\ThyBB$ (Remark~\ref{rmrk.bb.goals})}
\label{fig.bb.valid}
\end{figure}

\begin{defn}
\label{defn.ThyBB}
Define \deffont{Declarative Bracha Broadcast} to be the theory (set of axioms) $\ThyBB$ in Figure~\ref{fig.bb}.
Axioms are taken to be universally quantified over any free variables (there is only one: $a$).
As per Definition~\ref{defn.logic}(\ref{item.bb.sig}), we set $\Val$ to be any nonempty set and $\predsymb =\{\broadcast, \echo, \ready, \deliver\}$.
\end{defn}

\begin{rmrk}
\label{rmrk.model}
A fundamental difference between a logical theory and an algorithm is: 
\begin{itemize*}
\item
With a theory, we are presented with a model and we ask ``Are the axioms valid in this model?''.
\item
With an algorithm, we start from nothing and we construct a model by following the algorithm.
\end{itemize*}
So in this paper, we assume a model is just presented to us, and our business is to check whether it validates axioms.
All of the complexity inherent in studying an implementation -- i.e. a machine to compute such a model -- is elided by design.
\end{rmrk}

\begin{rmrk}
As for our voting example from Figure~\ref{fig.simple}, the axioms in Figure~\ref{fig.bb} are of three kinds:
\begin{enumerate}
\item
\emph{Backward rules} have names ending with $?$ and intuitively represent reasoning of the form ``if \emph{this} happens, then \emph{that} must have happened''.

The reader might find it helpful to think of these as \emph{safety} or \emph{correctness rules}, but `backward rule' is more neutral and descriptive.
Correctness is a theorem \emph{about} a protocol.
A protocol might fail to be correct and still be described using well-formed backward rules; they just specify a protocol for which the desired correctness property does not hold.
\item
\emph{Forward rules} have names ending with $!$ and intuitively represent forward reasoning of the form ``if \emph{this} happens, then \emph{that} must happen''. 

The reader might find it helpful to think of these as \emph{liveness rules}, but `forward rule' is more neutral and descriptive.
A protocol can fail to be live, yet still be described using forward rules.
\item
\emph{Other rules} reflect particular conditions of the algorithms, e.g. having to do with failure assumptions.
\end{enumerate}
\end{rmrk}

\begin{rmrk}[Discussion of the axioms]
We consider the axioms in turn.
Each is implicitly universally quantified over $a$ if required; so the $a$ in \rulefont{BrDeliver?} represents `any value $v\in\Val$'.
The truth-values $\tvT$ and $\tvF$ represent honest behaviour, and $\tvB$ represents dishonest behaviour. 
Each axiom is evaluated at an arbitrary $p\in\points$: %
\begin{enumerate*}
\item
\rulefont{BrDeliver?} says: if $\deliver(v)$ is $\tvT$ at $p$ then for a quorum of participants $\ready(v)$ is $\tvT$ or $\tvB$.

This reflects that in the algorithm as outlined in Remark~\ref{rmrk.high-level.bb}, if $p$ honestly delivers $v$ then a quorum of participants must have sent ready messages for $v$, though some of those messages may have been sent by dishonest participants.
\item
\rulefont{BrReady?} says: if $\ready(v)$ is $\tvT$ at $p$ then for a quorum of participants $\echo(v)$ is $\tvT$ or $\tvB$.
\item
\rulefont{BrEcho?} says: if $\echo(v)$ is $\tvT$, then somebody $\broadcast(v)$ with $\tvT$ or $\tvB$.
\item
\rulefont{BrDeliver!} says: if $\ready(v)$ is $\tvT$ for a quorum of participants, then $\deliver(v)$ is $\tvT$ or $\tvB$. 

This reflects that in the algorithm, if a quorum of honest participants send ready-$v$ messages and if $p$ is honest, then $p$ will send a deliver-$v$ message. 
\item
\rulefont{BrReady!} says: if $\echo(v)$ is $\tvT$ for a quorum of participants, then $\ready(v)$ is $\tvT$ or $\tvB$. 
\item
\rulefont{BrEcho!} says: if $\broadcast$ is $\tvT$ for some participant and some value, then $\echo$ is $\tvT$ or $\tvB$ for some value, though not necessarily the same one. 

This reflects that in the algorithm, if somebody honestly broadcasts some value and if $p$ is honest, then $p$ will echo some value.
The fact that this will be the \emph{same} value (if both parties are honest) is a Lemma that follows using \rulefont{BrEcho?}; see Lemma~\ref{lemm.bb.forward}(\ref{item.bb.forward.echo'}).
\item
\rulefont{BrReady!!} says: if $\ready(v)$ is $\tvT$ for a \emph{contraquorum} of participants, then $\ready(v)$ is $\tvT$ or $\tvB$.
This reflects the disjunct `or a contraquorum of ready messages' in Remark~\ref{rmrk.high-level.bb}(\ref{item.bb.contraquorum}).
\item
\rulefont{BrEcho01} says that $\echo(v)$ is $\tvT$ for at most one value $v$.
It asserts nothing about byzantine behaviour: $\echo(v)$ can be $\tvB$ for as many values as we like, reflecting the fact that dishonest participants may deviate from the protocol.
\item
\rulefont{BrBroadcast1} says that there is \emph{precisely} one value $v$ such that some participant $\tvT$-broadcasts $v$ (so a model in which two participants $\tvT$-broadcast distinct values would \emph{not} satisfy this condition, whereas a model in which all participants $\tvT$-broadcast the same value and $\tvF$-broadcast all other values, would) \emph{or} there is \emph{at least} one value $v$ such that some participant $\tvB$-broadcasts $v$.
This is what $\texiunique$ means in a three-value setting, as discussed in Remark~\ref{rmrk.unpack.texiunique} and Proposition~\ref{prop.unique.affine.existence}.
\item
\rulefont{BrCorrect} asserts that there are quorums of honest participants for $\ready$ and $\echo$.
The literature tends to assume something slightly stronger, that there is one quorum of honest participants for all predicates.
We will only need this weaker assumption for our proofs.
\item
\rulefont{BrCorrect'} asserts that a participant $p$ is either entirely honest for $\ready$ and $\echo$, in that it returns $\tvT$ or $\tvF$ for every value, or it is entirely dishonest, in that it returns $\tvB$ for every value.
This accurately reflects an informal assumption that (for a given predicate) a participant is either honest or dishonest -- dishonesty is not attached to a particular message, but to the \emph{participant} who sent it. 

\rulefont{BrCorrect''} asserts that either for all participants the $\broadcast$ predicate returns $\tvT$ or $\tvF$ on every $v$, or for all participants it returns $\tvB$.
This reflects that we do not designate a unique sender:
\end{enumerate*}
\end{rmrk}

\begin{rmrk}
\label{rmrk.unique.sender}
Implementations of Bracha Broadcast assume a unique \emph{sender} participant whose identity is known to all participants and whose identity cannot be faked.
The sender's task is to pick a unique value $v$ and broadcast $v$ (and, if the sender is honest, only $v$) to all participants.

In our logic, we model this with a straightforward axiom $\texiunique a.\someone\broadcast(a)$.
Our logic is three-valued, and as per Proposition~\ref{prop.unique.affine.existence}(\ref{item.unique.affine.existence.01.implies}) this means that 
\begin{itemize*}
\item
there is at most one $v$ such that there exists some participant that $\tvT$-sends $v$, but 
\item
there may be many values $v_1,\dots$ such that for each $v_i$ there exists some participant that $\tvB$-sends $v_i$.
\end{itemize*}
We return to this in Remark~\ref{rmrk.discuss.send}, after we have a few more proofs.
\end{rmrk}

\begin{rmrk}
Formally, $\mentM\ThyBB$ means that $\mu$ is a model and the axioms of $\ThyBB$ are valid of $\mu$.
Intuitively, $\mentM\ThyBB$ is intended to mean ``$\mu$ represents a run of Bracha Broadcast'', such that: 
\begin{itemize*}
\item
$p\mentM\broadcast(v)$ means ``$p$ broadcast $v$ at some point in the run''. 
\item
$p\mentM\echo(v)$ means ``$p$ echoed $v$ at some point in the run''. 
\item
$p\mentM\ready(v)$ means ``$p$ declared itself ready for $v$ at some point in the run''. 
\item
$p\mentM\deliver(v)$ means ``$p$ delivered $v$ at some point in the run''. 
\end{itemize*}
Note however that these are just intuitions.
The model is just a mathematical structure; $\broadcast(v)$, $\echo(v)$, $\ready(v)$, and $\deliver(v)$ are not messages, they are predicates.
As per Remark~\ref{rmrk.model} we do not assume anything about where $\mu$ came from and in particular, there is no \emph{a priori} guarantee that $\mu$ was \emph{actually generated} by a run of an actual implementation of the Bracha Broadcast algorithm.
$\mu$ is just a structure of which we assume that $\mentM\ThyBB$.
This abstraction is a feature, not a bug; $\mu$ is intended to be abstract.

Proving (or to use a more suggestive term: \emph{formally verifying}) that a \emph{particular} concrete implementation of Bracha Broadcast \emph{actually does} lead to models that satisfy $\ThyBB$, is the familiar problem of checking that an implementation satisfies its specification.
Consider for example proving that a \emph{particular} implementation of a smart contract \emph{actually does} satisfy a token specification~\cite{gabbay_et_al:OASIcs.FMBC.2021.2}, or that a \emph{particular} implementation of addition \emph{actually does} add numbers.%
\footnote{Note that this can be more subtle than it sounds, even on an apparently innocuous example like addition on numbers.  For example: is 65535+1 equal to 65536, or 0, or an overflow error?  
Perhaps 65535 is not a number at all (it is not if our notion of number is 8-bit integers). 
So the answer depends on what kind of `addition' we mean, on what kind of `number'.
Behind the deceptively innocuous English phrase `a \emph{particular} implementation of addition' can lie many possible formal meanings; signed integer of various bitlengths, unsigned integers, floating point datatypes, arbitrary precision integers, finite field elements, real numbers, complex numbers, quaternions, \dots.  

At this point, we see that careful abstraction and logical specification are not an indulgence but a necessary default, and two specifications being different does not necessarily mean that one of them must be wrong; they may be capturing different formal entities that mere English may gloss over and/or struggle to faithfully express.}
So $\ThyBB$ is a way to declaratively specify what it is about an implementation that makes it be an implementation \emph{of} Bracha Broadcast, and to the extent that we may agree that $\ThyBB$ does indeed capture this intuition in formal axiomatic form, $\ThyBB$ can claim to \emph{be} `Declarative Bracha Broadcast'.
\end{rmrk}

\subsection{Correctness properties for Bracha Broadcast} 
\label{subsect.bb.correctness.properties}

\subsubsection{Statement of the correctness properties}

\begin{rmrk}
\label{rmrk.fix.model.bb}
For the rest of this section, we fix a model (Definition~\ref{defn.sem}) 
$$
\mu=(\Val,(\points,\opens),\varsigma)
$$ 
such that the underlying semitopology $(\points,\opens)$ is 3-twined (Definition~\ref{defn.3twined}), so that we have Theorem~\ref{thrm.3twined.logic}.
\end{rmrk}

\begin{rmrk}
\label{rmrk.bb.goals}
As promised (see end of Remark~\ref{rmrk.high-level.bb}), we will now use \ThyBB in Figure~\ref{fig.bb} to derive predicates corresponding to the following standard correctness properties for Bracha Broadcast, per Figure~\ref{fig.bb.valid}.
We consider them in turn:
\begin{enumerate*}
\item
\textbf{Validity:} \emph{If a correct $p$ broadcasts a value $v$, then every correct $p'$ delivers $v$.} 
\\
This becomes $\ThyBB\mentM\someone\broadcast(v)\tnotor\everyone\deliver(v)$. 
See Proposition~\ref{prop.bb.validity}.
\item
\textbf{No duplication:} \emph{Every correct $p$ delivers at most one value.}
\\
This becomes $\ThyBB\mentM\texiaffine\deliver$.
See Proposition~\ref{prop.no.duplication}.
\item
\textbf{Integrity:}\quad 
\emph{If some $p$ delivers a message $m$ with correct sender $p'$, then $m$ was broadcast by $p'$.}
\\
This becomes $\ThyBB\mentM\deliver(v) \tnotor \someoneAll\broadcast(v)$.
See Proposition~\ref{prop.integrity}.
\item
\textbf{Consistency:} \emph{If correct $p$ delivers $v$ and another correct $p'$ delivers $v'$ then $v{=}v'$.}
\\
This becomes $\ThyBB\mentM\texiaffine\someone\deliver$. 
See Proposition~\ref{prop.consistency.bb}.
\item
\textbf{Totality:} \emph{If correct $p$ delivers $v$, then every correct $p'$ delivers $v$.}
\\
This becomes $\ThyBB\mentM\someoneAll\deliver(v) \tnotor \everyoneAll\deliver(v)$.
See Proposition~\ref{prop.bb.totality}.
\end{enumerate*}
The English statements above are (in order) adapted from properties BCB1, BCB2, BCB3, and BCB4 from Module~3.11, and BRB5 from Module~3.12 of~\cite{cachinbook}.
\end{rmrk}

\begin{rmrk}
We can make a few observations about the correctness properties above:
\begin{enumerate*}
\item
The logical statements are concise, precise, and (arguably) clear.
\item
The rendering of Consistency $\texiaffine\someone\deliver$ in \rulefont{BrConsistency} subsumes that for No Duplication $\texiaffine\deliver$ in \rulefont{BrNoDup}.

This is less visible in the English because Consistency talks about a `correct $p$' and `another correct $p'$', which might be taken to indicate that $p\neq p'$.
In our logic, 
it is easier to unify Consistency and No Duplication into a single predicate and proof.
\item
The English statement of Integrity was hard to parse, at least for me.
Contrast with the predicate $\deliver(v) \tnotor \someoneAll\broadcast(v)$, which is clear and precise.
\end{enumerate*}
\end{rmrk}

\subsubsection{Validity: ``if a correct $p$ broadcasts $v$, then every correct $p'$ delivers $v$''}

\begin{lemm}
\label{lemm.send}
Suppose $\mentM\ThyBB$.
Then precisely one of the following holds:
\begin{enumerate*}
\item\label{item.send.1}
$(\forall p\in\points.p\mentM\correct{\broadcast})\,\tand\,\existunique v\in\Val.\forall p\in\points.(p\mentM \modT\someone\broadcast(v))$,\ or
\item\label{item.send.2}
$\forall v\in\Val,p\in\points.(p\mentM \modB\broadcast(v))$.
\end{enumerate*}
\end{lemm}
\begin{proof}
\rulefont{BrCorrect''} asserts that \emph{either} for all participants the $\broadcast$ predicate returns $\tvT$ or $\tvF$ on every $v$, \emph{or} for all participants it returns $\tvB$.
In the latter case we have the latter (second) disjunct above, and in the former case
using \rulefont{BrBroadcast1} we obtain the former (first) disjunct.

Furthermore, these two disjuncts are mutually exclusive by non-emptiness of $\points$, as per Definition~\ref{defn.semitopology}(\ref{item.semitopology.points}) (cf. Remark~\ref{rmrk.nonempty.P}).
\end{proof}

\begin{rmrk}
\label{rmrk.discuss.send}
Lemma~\ref{lemm.send} requires some discussion.
Our description of Bracha Broadcast in Remark~\ref{rmrk.high-level.bb} assumes a designated sender with an unforgeable signature, who might be byzantine.
This is an accurate description of the implementation, but logically we do not care that there is one sender.
There could be \emph{two} senders, so long as (if honest) they broadcast identical values.

Implementationally we guarantee this by designating a unique sender participant.
In the axiomatisation, it is convenient (and more general) to be more abstract: we axiomatise a broadcast predicate that is either somewhere $\tvT$ for \emph{precisely} one value (corresponding to a value broadcast by an honest sender), or it is $\tvB$ for all values and all participants (corresponding to arbitrary hostile behaviour of a dishonest sender).\footnote{A longer stronger axiomatisation that is closer to the implementation could be written but would be more complex and not buy us much, since we can prove our correctness results from the weaker and simpler formulation.} 
\end{rmrk}

A technical fact will be helpful for Lemma~\ref{lemm.bb.forward}:
\begin{lemm}
\label{lemm.key.technical.lemma}
$\existunique v\in\Val.\forall p\in\points.(p\mentM \modT\someone\broadcast(v))$
implies 
$\forall p\in\points.\existunique v\in\Val.(p\mentM \modT\someone\broadcast(v))$.
\end{lemm}
\begin{proof}
It is simplest to consider the contrapositive: suppose the negation of the right-hand side.
\begin{itemize*}
\item
Suppose there exists $p\in\points$ and two distinct values $v\neq v'\in\Val$ such that $p\mentM\modT\someone\broadcast(v)$ and $p\mentM\modT\someone\broadcast(v')$.
But then it follows that $p'\mentM\modT\someone\broadcast(v)$ for \emph{every} $p'\in\points$, and also $p'\mentM\modT\someone\broadcast(v')$ for \emph{every} $p'\in\points$, which falsifies the left-hand side.
\item
Suppose there exists $p\in\points$ such that for no $v\in\Val$ is it the case that $p\mentM\modT\someone\broadcast(v)$. 
But then it follows that for \emph{every} $p'\in\points$ it is not the case that $p'\mentM\modT\someone\broadcast(v)$, and the result follows.
\qedhere\end{itemize*}
\end{proof}

\begin{lemm}
\label{lemm.bb.forward}
Suppose $\mentM\ThyBB$ and $v\in\Val$.
Then:
\begin{enumerate*}
\item\label{item.bb.forward.echo'}
$\mentM \someone\broadcast(v)\tnotor\echo(v)$. 
\item\label{item.bb.forward.echo}
$\mentM \someone\broadcast(v)\tnotor\everyoneAll\echo(v)$. 
\item\label{item.bb.forward.ready}
$\mentM \QuorumBox\echo(v)\tnotor\everyoneAll\ready(v)$. 
\item\label{item.bb.forward.deliver}
$\mentM \QuorumBox\ready(v)\tnotor\everyoneAll\deliver(v)$. 
\end{enumerate*}
\end{lemm}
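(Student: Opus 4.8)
The plan is to split the four parts into two groups according to how directly they follow from the forward axioms. Parts~\ref{item.bb.forward.ready} and~\ref{item.bb.forward.deliver} are the easy ones: they are essentially the forward axioms \rulefont{BrReady!} and \rulefont{BrDeliver!} with the conclusion universally quantified over participants. The point is that the premises $\QuorumBox\echo(v)$ and $\QuorumBox\ready(v)$ are independent of the evaluation point, so I would apply \weakmodusponens once and reduce to the following: assuming the premise equals $\tvT$, show that the conclusion $\ready(v)$ (resp. $\deliver(v)$) is valid at \emph{every} $p$. This is immediate by reading the corresponding axiom at each $p$. Since $\everyoneAll\ready(v)\equiv\bigwedge_{p}\ready(v)(p)$ and validity of a meet is equivalent to validity of each conjunct by Proposition~\ref{prop.tand.tor}(\ref{item.tand.tor.tand}), validity everywhere yields validity of $\everyoneAll\ready(v)$, and likewise for $\deliver$.

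Parts~\ref{item.bb.forward.echo'} and~\ref{item.bb.forward.echo} are the substantive ones, because the only forward axiom about $\echo$, namely \rulefont{BrEcho!}, concludes that \emph{some} value is echoed, not that the broadcast value $v$ itself is echoed. I would prove part~\ref{item.bb.forward.echo'} first and then obtain part~\ref{item.bb.forward.echo} by universalising exactly as in the easy group. For part~\ref{item.bb.forward.echo'}, by \weakmodusponens it suffices to assume $\someone\broadcast(v)=\tvT$ and show $\echo(v)(p)\in\{\tvT,\tvB\}$ for an arbitrary $p$. The first move is to invoke Lemma~\ref{lemm.send}: the hypothesis $\someone\broadcast(v)=\tvT$ rules out the all-$\tvB$ disjunct, so we are in the case where $\broadcast$ is correct everywhere and there is a unique value with $\someone\broadcast=\tvT$. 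By uniqueness that value must be $v$, so $\someone\broadcast(w)\in\{\tvT,\tvF\}$ for all $w$, taking the value $\tvT$ only at $w=v$.

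Now fix $p$ and split on \rulefont{BrCorrect'} applied to $\echo$. If $p$ is incorrect for $\echo$, then $\echo(v)(p)=\tvB$, which is valid, and we are done. If $p$ is correct for $\echo$, then every $\echo(w)(p)\in\{\tvT,\tvF\}$; applying \weakmodusponensnoref to \rulefont{BrEcho!} (whose premise is now $\tvT$) shows that $\texi a.\echo(a)$ is valid at $p$, i.e. $\bigvee_w\echo(w)(p)\in\{\tvT,\tvB\}$. Since each disjunct lies in $\{\tvT,\tvF\}$, this join cannot be $\tvB$, so it equals $\tvT$ and hence $\echo(w_0)(p)=\tvT$ for some $w_0$. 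Feeding this into the backward axiom \rulefont{BrEcho?} gives $\someone\broadcast(w_0)\in\{\tvT,\tvB\}$; but $\someone\broadcast(w_0)\in\{\tvT,\tvF\}$ from the previous paragraph, so it equals $\tvT$, whence $w_0=v$ by uniqueness and $\echo(v)(p)=\tvT$.

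The main obstacle is exactly this round-trip in part~\ref{item.bb.forward.echo'}: one goes \emph{forward} from $\broadcast(v)$ to an echo of \emph{some} value $w_0$, then \emph{backward} from that echo to $\broadcast(w_0)$, and only the uniqueness supplied by \rulefont{BrBroadcast1} (packaged in Lemma~\ref{lemm.send}) closes the loop by forcing $w_0=v$. The correctness split via \rulefont{BrCorrect'} is what lets me assume the echoes are $\{\tvT,\tvF\}$-valued, so that a valid join is actually attained at $\tvT$; without it a byzantine $p$ could have $\echo(w)(p)=\tvB$ at a wrong value and the join could be $\tvB$ with no disjunct equal to $\tvT$. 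Everything else is routine lattice bookkeeping and repeated use of \weakmodusponensnoref.
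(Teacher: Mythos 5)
Your proposal is correct and follows essentially the same route as the paper's proof: parts~3 and~4 by direct pointwise application of \rulefont{BrReady!}/\rulefont{BrDeliver!}, part~2 from part~1 by arbitrariness of $p$, and for part~1 the same forward--backward round trip through \rulefont{BrEcho!} and \rulefont{BrEcho?} closed off by the uniqueness in Lemma~\ref{lemm.send}, with the \rulefont{BrCorrect'} dichotomy handling the byzantine case. The only cosmetic difference is that you case-split on the correctness of $p$ up front, whereas the paper splits on the truth-value of the witness echo and then invokes \rulefont{BrCorrect'} inside the $\tvB$ branch; the substance is identical.
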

\begin{proof}
We consider each part in turn, freely using \weakmodusponens and the semantics in Figure~\ref{fig.sem}:
\begin{enumerate}
\item
Consider $p\in\points$ and suppose $p\mentM\modT\someone\broadcast(v)$.
By \weakmodusponens and \rulefont{BrEcho!} $p\mentM\texi a.\echo(a)$.
Thus, there exists $v'\in\Val$ such that $p\mentM\echo(v')$.
There are now two subcases:
\begin{itemize*}
\item
\emph{Suppose $p\mentM\modB\echo(v')$.}\ 
By \rulefont{BrCorrect'} $p\mentM\modB\echo(v)$, and so $p\mentM\echo(v)$.
\item
\emph{Suppose $p\mentM\modTF\echo(v')$.}\ 
By Proposition~\ref{prop.tand.tor}(\ref{item.para.ment})
(since $p\mentM\echo(v')$) $p\mentM\modT\echo(v')$, and so by \rulefont{BrEcho?} $p\mentM\someone\broadcast(v')$. 
Now we assumed above that $p\mentM\modT\someone\broadcast(v)$ so this puts us in case~\ref{item.send.1} of Lemma~\ref{lemm.send}.
For clarity we write that case out in full: 
$$
\forall p\in\points.p\mentM\correct{\broadcast} \ \tand\ \existunique v\in\Val.\forall p\in\points.(p\mentM \modT\someone\broadcast(v)) .
$$
From the left-hand conjunct and $p\mentM\someone\broadcast(v')$ and Proposition~\ref{prop.tand.tor}(\ref{item.para.ment}) we have that $p\mentM\modT\someone\broadcast(v')$. 
But we assumed $p\mentM\modT\someone\broadcast(v)$ and so by the right-hand conjunct and Lemma~\ref{lemm.key.technical.lemma}
we have $v=v'$ as required.
\end{itemize*}
\item
From part~\ref{item.bb.forward.echo'} of this result, noting that the $p\in\points$ we chose in the proof was arbitrary.
\item
Suppose $p\mentM\modT\QuorumBox\echo(v)$.
By \rulefont{BrReady!} $p\mentM\ready(v)$.
This reasoning did not depend on $p$, so $\mentM\everyoneAll\ready(v)$ as required.
\item
As the reasoning for part~\ref{item.bb.forward.ready}, but using \rulefont{BrDeliver!}.
\qedhere\end{enumerate}
\end{proof}

\begin{rmrk}
\label{rmrk.bb.forward.discuss}
For clarity we spell out what Lemma~\ref{lemm.bb.forward} asserts.
Each part has the form $\mentM \phi\tnotor\psi$, meaning that for every $p\in\points$ if $p\mentM\modT\phi$ (i.e. $\modelm{\phi}(p)=\tvT$) then $p\mentM\psi$ (i.e. $\modelm{\psi}(p)\in\{\tvT,\tvB\}$).
So part~\ref{item.bb.forward.echo'} means 
`$p\mentM \modT\someone\broadcast(v)$ implies $p\mentM \echo(v)$'. 

Note it does \emph{not} mean `$p\mentM \someone\broadcast(v)$ implies ${p\mentM \echo(v)}$'.
That would be a different assertion.
\end{rmrk}

\begin{lemm}
\label{lemm.everyone.to.quorum.true}
Suppose $\mentM\ThyBB$ and $v\in\Val$.
Then:
\begin{enumerate*}
\item\label{item.everyone.to.quorum.true.echo}
$\mentM\everyoneAll\echo(v)$ implies $\mentM\modT\QuorumBox\echo(v)$.
\item\label{item.everyone.to.quorum.true.ready}
$\mentM\everyoneAll\ready(v)$ implies $\mentM\modT\QuorumBox\ready(v)$.
\end{enumerate*}
\end{lemm}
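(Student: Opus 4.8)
The plan is to prove part~\ref{item.everyone.to.quorum.true.echo} directly and then observe that part~\ref{item.everyone.to.quorum.true.ready} follows by the identical argument with $\ready$ in place of $\echo$. First I would unpack the notation via Figure~\ref{fig.sem}. The hypothesis $\mentM\everyone\echo(v)$ says that $\echo(v)$ is valid at every point, i.e. $\modelm{\echo(v)}(p)\in\{\tvT,\tvB\}$ for all $p\in\ns P$. On the other side, since $\modT$ only ever returns $\tvT$ or $\tvF$, the goal $\mentM\modT\QuorumBox\echo(v)$ is equivalent to $\QuorumBox\echo(v)=\tvT$, which by Definition~\ref{defn.semitopology.logic} amounts to exhibiting a quorum $Q\in\opensne$ on which $\echo(v)$ is constantly $\tvT$.

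The key input is \rulefont{BrCorrect}, which gives $\mentM\QuorumBox\correct{\echo}$ (splitting off the relevant conjunct by Proposition~\ref{prop.tand.tor}(\ref{item.tand.tor.tand})). Here I would first note that $\correct{\echo}$ can never take the value $\tvB$: its denotation at $p$ is $\bigwedge_{w\in\Val}\modTF(\modelm{\echo(w)}(p))$, and each $\modTF(\cdot)$ lies in $\{\tvT,\tvF\}$, so the meet does too; consequently $\QuorumBox\correct{\echo}=\bigvee_{O\in\opensne}\bigwedge_{p\in O}\modelm{\correct{\echo}}(p)$ also lies in $\{\tvT,\tvF\}$. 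Validity then forces this value to be exactly $\tvT$, so there is a genuine quorum $Q\in\opensne$ with $\modelm{\correct{\echo}}(p)=\tvT$ for every $p\in Q$; equivalently, $\modelm{\echo(v)}(p)\in\{\tvT,\tvF\}$ for each $p\in Q$.

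The argument then closes by intersecting the two facts at points of $Q$. For $p\in Q$ we have $\modelm{\echo(v)}(p)\in\{\tvT,\tvB\}$ (validity everywhere) and $\modelm{\echo(v)}(p)\in\{\tvT,\tvF\}$ (correctness on $Q$), and $\{\tvT,\tvB\}\cap\{\tvT,\tvF\}=\{\tvT\}$, so $\modelm{\echo(v)}(p)=\tvT$ throughout $Q$. Hence $\bigwedge_{p\in Q}\modelm{\echo(v)}(p)=\tvT$, so the join defining $\QuorumBox\echo(v)$ attains $\tvT$ at the term indexed by $Q$ and therefore $\QuorumBox\echo(v)=\tvT$, giving $\mentM\modT\QuorumBox\echo(v)$ as required.

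The reasoning is short, and the only genuinely substantive step is the middle one: recognising that a validity assumption on a correctness predicate ($\mentM\QuorumBox\correct{\echo}$) upgrades to an equality ($\QuorumBox\correct{\echo}=\tvT$), precisely because correctness predicates are two-valued (that is, $\tvB$-free). After that, the $\{\tvT,\tvB\}\cap\{\tvT,\tvF\}=\{\tvT\}$ collapse is the characteristic move by which ``valid everywhere'' plus ``correct on a quorum'' yields ``true on a quorum''. One could alternatively route this through Lemma~\ref{lemm.everyone.and.quorum} (taking $f=\modelm{\echo(v)}$ and $f'=\modelm{\correct{\echo}}$, and using that $f\tand f'\leq f$ pointwise to transfer $\QuorumBox(f\tand f')=\tvT$ up to $\QuorumBox f$), but the direct counting of truth-values above is cleaner, so I would present it that way.
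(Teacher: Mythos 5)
Your proposal is correct and follows essentially the same route as the paper's own proof: extract $\mentM\QuorumBox\correct{\echo}$ from \rulefont{BrCorrect}, observe that correctness predicates are $\tvB$-free so this upgrades to a quorum on which $\echo(v)$ is $\tvT$ or $\tvF$, and then combine with validity-everywhere to force $\echo(v)=\tvT$ on that quorum. The only cosmetic difference is that you carry out the $\{\tvT,\tvB\}\cap\{\tvT,\tvF\}=\{\tvT\}$ collapse by hand where the paper compresses it into an appeal to Proposition~\ref{prop.tand.tor}(\ref{item.para.ment}).
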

\begin{proof}
We consider each part in turn:
\begin{enumerate}
\item
Suppose $\mentM\everyoneAll\echo(v)$.
By \rulefont{BrCorrect} $\mentM\Quorum\correct{\echo}$ -- by Figure~\ref{fig.syntax} this means $\mentM\Quorum\tall v.\modTF\echo(v)$ -- thus $\mentM\Quorum\modTF\echo(v)$ and by Lemma~\ref{lemm.everyone.and.quorum}(\ref{item.everyone.and.quorum.2}) and Proposition~\ref{prop.tand.tor}(\ref{item.para.ment}) 
$\mentM\modT\QuorumBox\echo(v)$. 
\item
As for part~\ref{item.everyone.to.quorum.true.echo}, since by \rulefont{BrCorrect} $\mentM\Quorum\correct{\ready}$.
\qedhere
\end{enumerate}
\end{proof}

\begin{prop}[Validity]
\label{prop.bb.validity}
If $\mentM\ThyBB$ and $v\in\Val$ then
$$
\mentM\someone\broadcast(v)\tnotor\everyone\deliver(v). 
$$
\end{prop}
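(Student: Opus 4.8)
The plan is to prove this by a single forward-chaining argument that threads the value $v$ through the four stages of the protocol (broadcast $\to$ echo $\to$ ready $\to$ deliver), using exactly the forward lemmas already established. Since the goal has the shape $\phi\tnotor\psi$, I would start by invoking \weakmodusponens: it suffices to assume $\mentM\modT\someone\broadcast(v)$ and derive $\mentM\everyone\deliver(v)$. So the whole proof reduces to showing that genuine truth of the broadcast somewhere forces validity of delivery everywhere.

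From that assumption I would chain as follows. First, Lemma~\ref{lemm.bb.forward}(\ref{item.bb.forward.echo}) gives $\mentM\someone\broadcast(v)\tnotor\everyone\echo(v)$, so by \weakmodusponensnoref we obtain $\mentM\everyone\echo(v)$. Next comes the crucial bridging step: Lemma~\ref{lemm.everyone.to.quorum.true}(\ref{item.everyone.to.quorum.true.echo}) promotes $\mentM\everyone\echo(v)$ to $\mentM\modT\Quorum\echo(v)$. Now I can apply the next forward rule, Lemma~\ref{lemm.bb.forward}(\ref{item.bb.forward.ready}), which gives $\mentM\Quorum\echo(v)\tnotor\everyone\ready(v)$, yielding $\mentM\everyone\ready(v)$. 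A second application of the bridging lemma, Lemma~\ref{lemm.everyone.to.quorum.true}(\ref{item.everyone.to.quorum.true.ready}), promotes this to $\mentM\modT\Quorum\ready(v)$, and finally Lemma~\ref{lemm.bb.forward}(\ref{item.bb.forward.deliver}) gives $\mentM\Quorum\ready(v)\tnotor\everyone\deliver(v)$, delivering the desired $\mentM\everyone\deliver(v)$.

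The step I expect to be the real heart of the argument — and the one that is easy to overlook — is the pair of promotions from Lemma~\ref{lemm.everyone.to.quorum.true}. The forward rules only ever conclude \emph{validity} ($\tvT$ or $\tvB$), but to fire the \emph{next} forward rule via \weakmodusponensnoref I need genuine truth $\modT$ of a quorum-quantified formula, not mere validity. The subtlety is that ``valid everywhere'' is strictly weaker than ``true on a quorum'': an adversary could make echo byzantine ($\tvB$) at many participants. What rescues the chain is the honesty assumption \rulefont{BrCorrect}, which guarantees a quorum of participants for which $\echo$ (respectively $\ready$) is correct, i.e. never $\tvB$; restricted to that honest quorum, validity collapses to truth, giving $\modT\Quorum$. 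This is precisely the content of Lemma~\ref{lemm.everyone.to.quorum.true}, so with those two lemmas in hand the present proof is just a clean, mechanical relay and contains no further obstacles.
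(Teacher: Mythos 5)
Your proposal is correct and follows exactly the paper's own proof: weak modus ponens to reduce to assuming $\mentM\modT\someone\broadcast(v)$, then alternating applications of Lemma~\ref{lemm.bb.forward} and the promotion Lemma~\ref{lemm.everyone.to.quorum.true} through echo, ready, and deliver. Your observation that the $\everyone\phi \Rightarrow \modT\Quorum\phi$ promotions (resting on \rulefont{BrCorrect}) are the load-bearing steps is also exactly right.
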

\begin{proof}
We reason using \weakmodusponens. 
Suppose $\mentM\modT\someone\broadcast(v)$.
By Lemma~\ref{lemm.bb.forward}(\ref{item.bb.forward.echo}) $\mentM\everyoneAll\echo(v)$,
and by Lemma~\ref{lemm.everyone.to.quorum.true}(\ref{item.everyone.to.quorum.true.echo}) $\mentM\modT\QuorumBox\echo(v)$.
By Lemma~\ref{lemm.bb.forward}(\ref{item.bb.forward.ready}) $\mentM\everyoneAll\ready(v)$, and by Lemma~\ref{lemm.everyone.to.quorum.true}(\ref{item.everyone.to.quorum.true.ready})
$\mentM\modT\QuorumBox\ready(v)$. 
By Lemma~\ref{lemm.bb.forward}(\ref{item.bb.forward.deliver}) $\mentM\everyoneAll\deliver(v)$ as required.
\end{proof}

\subsubsection{Consistency \& No duplication: ``if correct $p$ delivers $v$ and correct (possibly equal) $p'$ delivers $v'$ then $v=v'$''}

\begin{lemm}
\label{lemm.BB.box.diamond}
Suppose $\mentM\rulefont{BrCorrect}$ (in particular, it suffices that $\mentM\ThyBB$). 
Then for $v,v'\in\Val$:
\begin{enumerate*}
\item\label{item.BB.box.diamond.box.diamond} 
If $\mentM\QuorumBox\ready(v)$ then $\mentM\modT\Contraquorum\ready(v)$.
\item\label{item.BB.box.diamond.box.box.diamond.echo} 
If $\mentM\QuorumBox\echo(v) \tand \QuorumBox\echo(v')$ then $\mentM\modT\someoneAll(\echo(v)\tand\echo(v'))$.
\item\label{item.BB.box.diamond.box.box.diamond.ready} 
If $\mentM\QuorumBox\ready(v) \tand \QuorumBox\ready(v')$ then $\mentM\modT\someoneAll(\ready(v)\tand\ready(v'))$.
\end{enumerate*}
\end{lemm}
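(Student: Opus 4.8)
The plan is to prove all three parts by a single recipe: use the 3-twined property to push two quorums into a \emph{contraquorum} (blocking set) of the relevant conjunction, and then intersect that blocking set with the correctness quorum supplied by \rulefont{BrCorrect} to pin down genuine truth. Everything is driven by the pointwise identity $f \tand (\modTF{\circ} f) \equiv \modT{\circ} f$ (check the three cases: $\tvT\tand\tvT\equiv\tvT$, $\tvB\tand\tvF\equiv\tvF$, $\tvF\tand\tvT\equiv\tvF$), together with the fact that $\modT$ commutes with $\tand$, $\someone$, and $\Contraquorum$, so that $\modT\Contraquorum\equiv\Contraquorum\modT$ and $\modT\someone\equiv\someone\modT$.

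\textbf{Part~\ref{item.BB.box.diamond.box.diamond}.} From $\mentM\QuorumBox\correct{\ready}$ (via \rulefont{BrCorrect}) and $\correct{\ready}\leq\modTF\ready(v)$ I get $\mentM\Quorum(\modTF{\circ}\ready(v))$, while the hypothesis gives $\mentM\Quorum\ready(v)$. Feeding $f=\ready(v)$ and $f'=\modTF{\circ}\ready(v)$ into Theorem~\ref{thrm.3twined.logic} yields $\mentM\Contraquorum(\ready(v)\tand(\modTF{\circ}\ready(v)))$, which by the identity above equals $\mentM\Contraquorum(\modT{\circ}\ready(v))\equiv\mentM\modT\Contraquorum\ready(v)$, as required.

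\textbf{Parts~\ref{item.BB.box.diamond.box.box.diamond.echo} and~\ref{item.BB.box.diamond.box.box.diamond.ready}.} These are identical up to replacing $\echo$ by $\ready$, so I describe Part~\ref{item.BB.box.diamond.box.box.diamond.echo}. First, Theorem~\ref{thrm.3twined.logic} applied to $f=\echo(v)$ and $f'=\echo(v')$ turns the hypothesis $\mentM\Quorum\echo(v)\tand\Quorum\echo(v')$ into $\mentM\Contraquorum(\echo(v)\tand\echo(v'))$. Second, \rulefont{BrCorrect} supplies a \emph{single} quorum on which $\echo(a)$ is correct for \emph{every} $a$ (the universal quantifier sits inside the $\Quorum$ in $\correct{\echo}$); on that quorum $\echo(v),\echo(v')\in\{\tvT,\tvF\}$, hence $\modTF(\echo(v)\tand\echo(v'))\equiv\tvT$ there, so $\mentM\Quorum(\modTF{\circ}(\echo(v)\tand\echo(v')))$. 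Now Lemma~\ref{lemm.semi.char}(\ref{item.semi.char.TF}) with $f'=\echo(v)\tand\echo(v')$ delivers $\mentM\modT\someone(\echo(v)\tand\echo(v'))$ directly. Part~\ref{item.BB.box.diamond.box.box.diamond.ready} is the same with $\correct{\ready}$ in place of $\correct{\echo}$.

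\textbf{Where the work is.} Once the pattern is seen the steps are short, and there is no genuine obstacle so much as two bookkeeping points to keep straight. The first is that \rulefont{BrCorrect} yields correctness of $\echo$ (resp.\ $\ready$) on \emph{one} quorum uniformly over all values; this is exactly what lets me treat $\echo(v)$ and $\echo(v')$ on the same quorum in Part~\ref{item.BB.box.diamond.box.box.diamond.echo} without a further intersection argument. The second is the commutation $\modT\Contraquorum\equiv\Contraquorum\modT$ and $\modT\someone\equiv\someone\modT$, which is what converts the $\Contraquorum(\modT{\cdots})$ and $\someone(\modT{\cdots})$ produced by Theorem~\ref{thrm.3twined.logic} and Lemma~\ref{lemm.semi.char} into the $\modT\Contraquorum{\cdots}$ and $\modT\someone{\cdots}$ shape demanded by the statement.
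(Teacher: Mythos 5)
Your proposal is correct and follows essentially the same route as the paper's proof: part~\ref{item.BB.box.diamond.box.diamond} via Theorem~\ref{thrm.3twined.logic} applied to $\ready(v)$ and $\modTF{\circ}\ready(v)$ followed by the commutation of $\modT$ with $\Contraquorum$, and parts~\ref{item.BB.box.diamond.box.box.diamond.echo}--\ref{item.BB.box.diamond.box.box.diamond.ready} via Theorem~\ref{thrm.3twined.logic} plus Lemma~\ref{lemm.semi.char}(\ref{item.semi.char.TF}) with the correctness quorum from \rulefont{BrCorrect}. You merely spell out the bookkeeping (the identity $f\tand(\modTF{\circ}f)\equiv\modT{\circ}f$ and the single uniform correctness quorum) that the paper leaves implicit.
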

\begin{proof}
We consider each part in turn:
\begin{enumerate}
\item
Suppose $\mentM\QuorumBox\ready(v)$.
From \rulefont{BrCorrect} also $\mentM\Quorum\modTF\ready(v)$.
Using Theorem~\ref{thrm.3twined.logic} $\mentM\Contraquorum\modT\ready(v)$ and so $\mentM\modT\Contraquorum\ready(v)$ as required. 
\item
Suppose $\mentM\QuorumBox\echo(v) \tand \QuorumBox\echo(v')$.
From \rulefont{BrCorrect} we have $\mentM\quorum\correct{\echo}$.
Using Theorem~\ref{thrm.3twined.logic} and Lemma~\ref{lemm.semi.char}(\ref{item.semi.char.TF}) (or just direct from the 3-twined property in Definition~\ref{defn.3twined})
$\mentM\modT\someoneAll(\echo(v)\tand\echo(v'))$ as required.
\item
As for the proof of part~\ref{item.BB.box.diamond.box.box.diamond.echo}, noting that $\mentM\quorum\correct{\ready}$ from \rulefont{BrCorrect}. 
\qedhere\end{enumerate}
\end{proof}

\begin{prop}[Consistency \& No Duplication]
\label{prop.no.duplication}
\label{prop.consistency.bb}
Suppose $\mentM\ThyBB$.
Then
$$
\mentM\texiaffine v.\someone\deliver(v) .
$$
\end{prop}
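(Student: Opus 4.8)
The plan is to peel off the affine quantifier semantically, then walk the backward rules of $\ThyBB$ downwards from $\deliver$ through $\ready$ to $\echo$, using the $3$-twined intersection lemma (Lemma~\ref{lemm.BB.box.diamond}) at each level to funnel the two deliveries of $v$ and $v'$ into a single honest participant that echoes \emph{both}, where \rulefont{BrEcho01} forces $v=v'$.

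First I would unpack the goal. By the denotation of $\texiaffine$ in Figure~\ref{fig.sem}, together with the fact that $\someone\deliver(v)$ takes the same value at every point, $\mentM\texiaffine\someone\deliver(v)$ is equivalent to $\ment\texiaffine g$ where $g=\lambda v.\someone\deliver(v)$. By Proposition~\ref{prop.unique.affine.existence}(\ref{item.unique.affine.existence.01}) this reduces to the implication: for all $v,v'\in\Val$, if $\ment\modT\someone\deliver(v)$ and $\ment\modT\someone\deliver(v')$ then $v=v'$ (using that $g(v)=\tvT$ iff $\ment\modT\someone\deliver(v)$). So I would assume $\ment\modT\someone\deliver(v)$ and $\ment\modT\someone\deliver(v')$ and aim to derive $v=v'$.

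Next I would descend the backward rules. From $\ment\modT\someone\deliver(v)$ there is a $p$ with $p\mentM\modT\deliver(v)$, so \weakmodusponens and \rulefont{BrDeliver?} give $\ment\Quorum\ready(v)$; symmetrically $\ment\Quorum\ready(v')$. Lemma~\ref{lemm.BB.box.diamond}(\ref{item.BB.box.diamond.box.box.diamond.ready}) then yields $\ment\modT\someone(\ready(v)\tand\ready(v'))$, i.e.\ a participant $q$ with $q\mentM\modT\ready(v)$ and $q\mentM\modT\ready(v')$. Repeating one level down, \weakmodusponensnoref and \rulefont{BrReady?} give $\ment\Quorum\echo(v)$ and $\ment\Quorum\echo(v')$, and Lemma~\ref{lemm.BB.box.diamond}(\ref{item.BB.box.diamond.box.box.diamond.echo}) produces $\ment\modT\someone(\echo(v)\tand\echo(v'))$ --- a single participant $r$ with $\modelm{\echo(v)}(r)=\tvT=\modelm{\echo(v')}(r)$. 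Finally, since $\mentM\ThyBB$, \rulefont{BrEcho01} gives $r\mentM\texiaffine a.\echo(a)$, i.e.\ $\ment\texiaffine\lambda v.\modelm{\echo(v)}(r)$; as both echoes are $\tvT$, Proposition~\ref{prop.unique.affine.existence}(\ref{item.unique.affine.existence.01.implies}) forces $v=v'$, completing the reduction and hence the proof.

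The main obstacle I anticipate is the bookkeeping around $\modT$ versus plain validity at the intersection points. The two applications of Lemma~\ref{lemm.BB.box.diamond} are where $3$-twinedness does the real work, and it is essential that each delivers a $\modT\someone$ conclusion --- a point where \emph{both} predicates are genuinely $\tvT$, not merely valid --- since it is precisely this $\tvT$-ness that \rulefont{BrEcho01} constrains (that axiom says nothing about $\tvB$-echoes). Keeping the distinction between $\ment$ and $\ment\modT$ straight through the descent, and checking that the hypotheses $\ment\Quorum\ready(v)\tand\Quorum\ready(v')$ and $\ment\Quorum\echo(v)\tand\Quorum\echo(v')$ of Lemma~\ref{lemm.BB.box.diamond} are genuinely met, is the one place where care is needed; everything else is routine weak modus ponens.
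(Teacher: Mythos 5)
Your proposal is correct and follows essentially the same route as the paper's own proof: reduce via Proposition~\ref{prop.unique.affine.existence}(\ref{item.unique.affine.existence.01}) to showing two $\modT$-deliveries force $v=v'$, descend through \rulefont{BrDeliver?} and \rulefont{BrReady?} with Lemma~\ref{lemm.BB.box.diamond} applied at the $\ready$ and $\echo$ levels, and close with \rulefont{BrEcho01} and Proposition~\ref{prop.unique.affine.existence}(\ref{item.unique.affine.existence.01.implies}). The extra care you flag about tracking $\modT$ versus mere validity is exactly the right concern and is handled the same way in the paper.
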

\begin{proof}
By Proposition~\ref{prop.unique.affine.existence}(\ref{item.unique.affine.existence.01}) it suffices to show for every $v,v'\in\Val$ that
$\mentM\modT\someoneAll\deliver(v)\tand\modT\someoneAll\deliver(v')$ implies $v= v'$.
So suppose $\mentM\modT\someoneAll\deliver(v)$ and $\mentM\modT\someoneAll\deliver(v')$. 
By \rulefont{BrDeliver?} $\mentM\QuorumBox\ready(v)$ and $\mentM\QuorumBox\ready(v')$, so by Lemma~\ref{lemm.BB.box.diamond}(\ref{item.BB.box.diamond.box.box.diamond.ready}) 
$\mentM\modT\someone(\ready(v)\tand\ready(v'))$.
By \rulefont{BrReady?} $\mentM\QuorumBox\echo(v)$ and $\mentM\QuorumBox\echo(v')$, so by Lemma~\ref{lemm.BB.box.diamond}(\ref{item.BB.box.diamond.box.box.diamond.echo})
$\mentM\modT\someone(\echo(v)\tand\echo(v'))$.
It follows using \rulefont{BrEcho01} and Proposition~\ref{prop.unique.affine.existence}(\ref{item.unique.affine.existence.01.implies}) that $v=v'$ as required. 
\end{proof}

\subsubsection{Integrity: ``if some $p$ delivers a message $m$ with correct sender $p'$, then $m$ was broadcast by $p'$''}

\begin{prop}[Integrity]
\label{prop.integrity}
Suppose $\mentM\ThyBB$ and $v\in\Val$.
Then 
$$
\mentM\deliver(v) \tnotor \someoneAll\broadcast(v) .
$$
\end{prop}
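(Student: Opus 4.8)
The plan is to backward-chain along the three backward axioms \rulefont{BrDeliver?}, \rulefont{BrReady?}, \rulefont{BrEcho?}, peeling off one protocol layer at a time from $\deliver$ back to $\broadcast$. Since $\someone\broadcast(v)$ has a point-independent denotation, by \weakmodusponens it suffices to fix $p\in\ns P$ with $p\mentM\modT\deliver(v)$ and derive $\mentM\someone\broadcast(v)$; if no such $p$ exists the weak implication holds vacuously at every point. So throughout I would assume $\modelm{\deliver(v)}(p)=\tvT$ and work towards a single participant who genuinely $\tvT$-broadcasts $v$.

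First I would apply \rulefont{BrDeliver?} via \weakmodusponensnoref to obtain $\mentM\QuorumBox\ready(v)$. The central subtlety then appears: the next backward rule \rulefont{BrReady?} only fires where $\ready(v)$ is genuinely $\tvT$, whereas $\QuorumBox\ready(v)$ merely guarantees that $\ready(v)$ is \emph{valid} (in $\{\tvT,\tvB\}$) across a quorum, and some of those points could be byzantine with value $\tvB$. To bridge this gap I would invoke \rulefont{BrCorrect}, which supplies a quorum on which $\ready$ is honest. This is exactly the content of Lemma~\ref{lemm.BB.box.diamond}(\ref{item.BB.box.diamond.box.diamond}): from $\mentM\QuorumBox\ready(v)$ we get $\mentM\modT\Contraquorum\ready(v)$; commuting $\modT$ through $\Contraquorum$ and applying Lemma~\ref{lemm.semi.char}(\ref{item.semi.char.2}) produces a point $q$ with $\modelm{\ready(v)}(q)=\tvT$, i.e. $q\mentM\modT\ready(v)$.

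Applying \rulefont{BrReady?} at $q$ then yields $\mentM\QuorumBox\echo(v)$, and I would run the same honest-intersection argument one more time at the echo layer. Combining $\mentM\QuorumBox\echo(v)$ with the honest-echo quorum from \rulefont{BrCorrect}, namely $\mentM\Quorum\modTF\echo(v)$, through Theorem~\ref{thrm.3twined.logic} gives $\mentM\Contraquorum(\echo(v)\tand\modTF\echo(v))$; since $\f{tv}\tand\modTF\f{tv}\equiv\modT\f{tv}$ this is $\mentM\Contraquorum\modT\echo(v)$, and Lemma~\ref{lemm.semi.char}(\ref{item.semi.char.2}) then produces a point $q'$ with $\modelm{\echo(v)}(q')=\tvT$. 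Finally \rulefont{BrEcho?} at $q'$ gives $\mentM\someone\broadcast(v)$, as required.

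The main obstacle is precisely this repeated validity-to-truth step: each backward axiom is a \emph{weak} implication, so it only propagates from a $\tvT$ antecedent, yet backward chaining through $\QuorumBox$ delivers only validity on a quorum. The real work is therefore, at each of the two intermediate layers, to locate an \emph{honest} participant inside that quorum whose value must consequently be genuinely $\tvT$ — which is what \rulefont{BrCorrect} together with the $3$-twined intersection property (Theorem~\ref{thrm.3twined.logic}, already packaged for $\ready$ as Lemma~\ref{lemm.BB.box.diamond}) guarantees. Once that pattern is isolated, the argument is a short, induction-free two-layer chain.
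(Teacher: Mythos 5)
Your proof is correct and follows essentially the same route as the paper: \rulefont{BrDeliver?}, then Lemma~\ref{lemm.BB.box.diamond}(\ref{item.BB.box.diamond.box.diamond}) plus Lemma~\ref{lemm.semi.char}(\ref{item.semi.char.2}) to locate a genuinely-$\tvT$ $\ready$ point, then \rulefont{BrReady?}, the same honesty-intersection step at the echo layer, and finally \rulefont{BrEcho?}. The only (cosmetic) difference is that you inline the echo-layer analogue of Lemma~\ref{lemm.BB.box.diamond}(\ref{item.BB.box.diamond.box.diamond}) via \rulefont{BrCorrect} and Theorem~\ref{thrm.3twined.logic}, which is arguably more careful than the paper, since that lemma part is literally stated only for $\ready$.
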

\begin{proof}
We reason using \weakmodusponens.
Suppose $p\in\points$ and $p\mentM\modT\deliver(v)$. 
By \rulefont{BrDeliver?} $\mentM\QuorumBox\ready(v)$. 
By Lemma~\ref{lemm.BB.box.diamond}(\ref{item.BB.box.diamond.box.box.diamond.ready}) (taking $v=v'$ in that Lemma) $\mentM\modT\someoneAll\ready(v)$. 
By \rulefont{BrReady?} $\mentM\QuorumBox\echo(v)$, and by Lemma~\ref{lemm.BB.box.diamond}(\ref{item.BB.box.diamond.box.box.diamond.echo}) (taking $v=v'$) $\mentM\modT\someoneAll\echo(v)$. 
Using \rulefont{BrEcho?} $\mentM\someoneAll\broadcast(v)$, as required.
\end{proof}

\subsubsection{Totality: if correct $p$ delivers $v$, then every correct $p'$ delivers $v$}

\begin{prop}[Totality]
\label{prop.bb.totality}
Suppose $\mentM\ThyBB$ and $v\in\Val$.
Then: 
$$
\mentM \someoneAll\deliver(v) \ \tnotor\ \everyoneAll\deliver(v).
$$ 
\end{prop}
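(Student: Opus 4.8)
The plan is to reason by \weakmodusponens, assuming $\mentM\modT\someone\deliver(v)$ and deriving $\mentM\everyoneAll\deliver(v)$. The heart of the argument is an \emph{amplification} step: from the fact that a single correct participant delivers $v$, I would first extract a quorum of ready-$v$, then use the correctness assumption together with the 3-twined property to upgrade this to a genuine \emph{contraquorum} (blocking set) of \emph{true} ready-$v$, and finally invoke the contraquorum-to-ready rule \rulefont{BrReady!!} to conclude that \emph{every} participant is ready with $v$. Once everyone is ready, a quorum is ready, and the forward chain established in Lemma~\ref{lemm.bb.forward} delivers $v$ everywhere.

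In detail, first I would apply \rulefont{BrDeliver?} to the assumed true delivery to obtain $\mentM\QuorumBox\ready(v)$. Next, Lemma~\ref{lemm.BB.box.diamond}(\ref{item.BB.box.diamond.box.diamond}) converts this quorum of ready-$v$ into $\mentM\modT\Contraquorum\ready(v)$; this is exactly where correctness (\rulefont{BrCorrect}) and the 3-twined intersection property of Theorem~\ref{thrm.3twined.logic} do the work, since a raw quorum of ready messages might contain byzantine contributions, and we need the honest quorum to guarantee a blocking set of genuinely-true readies.

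With $\mentM\modT\Contraquorum\ready(v)$ in hand, I would apply \rulefont{BrReady!!} pointwise: since $\modelm{\CoquorumDiamond\ready(v)}(p)$ does not depend on $p$, \weakmodusponensnoref at each $p$ yields $p\mentM\ready(v)$, that is $\mentM\everyoneAll\ready(v)$. Then Lemma~\ref{lemm.everyone.to.quorum.true}(\ref{item.everyone.to.quorum.true.ready}) gives $\mentM\modT\QuorumBox\ready(v)$, and finally Lemma~\ref{lemm.bb.forward}(\ref{item.bb.forward.deliver}) yields $\mentM\everyoneAll\deliver(v)$ as required.

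The step I expect to be the main obstacle is the amplification via Lemma~\ref{lemm.BB.box.diamond}(\ref{item.BB.box.diamond.box.diamond}) combined with \rulefont{BrReady!!}: everything else is routine forward chaining using the already-established lemmas, but the crucial conceptual move --- that a single honest delivery forces a \emph{blocking set} of honest readies, which then floods the whole network via the contraquorum rule --- is what makes totality hold, and is the reason \rulefont{BrReady!!} was included as an axiom in the first place.
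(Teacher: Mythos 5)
Your proof is correct and follows essentially the same route as the paper's: \rulefont{BrDeliver?} to get a quorum of readies, Lemma~\ref{lemm.BB.box.diamond}(\ref{item.BB.box.diamond.box.diamond}) to amplify to a true contraquorum, \rulefont{BrReady!!} pointwise to flood $\ready(v)$ everywhere, Lemma~\ref{lemm.everyone.to.quorum.true}(\ref{item.everyone.to.quorum.true.ready}) to recover a true quorum, and then delivery everywhere (your appeal to Lemma~\ref{lemm.bb.forward}(\ref{item.bb.forward.deliver}) is just the paper's final pointwise application of \rulefont{BrDeliver!} packaged as a lemma). Your closing remark correctly identifies the contraquorum amplification plus \rulefont{BrReady!!} as the conceptual heart of Totality.
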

\begin{proof}
We reason using \weakmodusponens.
Suppose $\mentM \modT\someoneAll\deliver(v)$.
By \rulefont{BrDeliver?} $\mentM\QuorumBox\ready(v)$, so by Lemma~\ref{lemm.BB.box.diamond}(\ref{item.BB.box.diamond.box.diamond})
$\mentM\modT\CoquorumDiamond\ready(v)$.
Using \rulefont{BrReady!!} (applied at every point) $\mentM\everyoneAll\ready(v)$, so by Lemma~\ref{lemm.everyone.to.quorum.true}(\ref{item.everyone.to.quorum.true.ready}) $\mentM\modT\QuorumBox\ready(v)$.
Using \rulefont{BrDeliver!} (applied at every point) $\mentM\everyoneAll\deliver(v)$. 
\end{proof}

\begin{rmrk}
Let us take a moment to unpack the statement of Totality in Proposition~\ref{prop.bb.totality} and remember what the symbols mean.
As per Figure~\ref{fig.sem} this asserts that $\modelm{\someone\deliver(v) \tnotor \everyone\deliver(v)}(p)$ has truth-value $\tvT$ or $\tvB$ for every $p\in\points$.
The choice of $p$ clearly does not matter here because both parts of the implication are modal.
Unpacking the truth-table of $\tnotor$ from Figure~\ref{fig.3}, it suffices to show that if the left-hand side has truth-value $\tvT$ then the right-hand side has truth-value $\tvT$ or $\tvB$.
 
So suppose $\modelm{\someone\deliver(v)}=\tvT$, indicating that some honest participant delivers $v$.
The proof of Proposition~\ref{prop.bb.totality} then shows that:
every other honest participant delivers $v$;
for any dishonest participants $\deliver(v)$ has truth-value $\tvB$;
and therefore, the truth-value of $\everyone\deliver(v)$ is $\tvT$ or (if dishonest participants are involved) $\tvB$.

The statement and proof of Proposition~\ref{prop.bb.totality} do not reason explicitly on correct or byzantine participants.
They do not need to: this detail is all packaged neatly into how the modalities and implication work.
The proof is routine, following the structure of the axioms and applying lemmas in a natural way. 

It is easy to take this magic for granted: how logic turns reasoning into (almost) routine symbolic manipulation.
But of course, this is why logical specifications can be so helpful.
Declarative logical specifications of the kind we see above are particularly powerful, because of how concise they can be.
\end{rmrk}

\subsection{Further discussion of the axioms}

We take a moment to return to the axioms in Figure~\ref{fig.bb}, to discuss some design alternatives.

Axioms \rulefont{BrDeliver!} and \rulefont{BrReady!} in Figure~\ref{fig.bb} are slightly stronger than absolutely necessary.
In the presence of the other axioms, it suffices to assume $\texi a.\Quorum\ready(a) \tnotor \texi a.\deliver(a)$ and $\texi a.\Quorum\echo(a)\tnotor\texi a.\ready(a)$.
Then it is not hard to use the backward rules to derive the stronger forms of the axioms as presented in Figure~\ref{fig.bb}.

The same is not true of \rulefont{BrReady!!}.
A weaker axiom $\texi a.\Contraquorum\ready(a) \tnotor \texi a.\ready(a)$ would \emph{not} suffice and the stronger form of the axiom cannot be derived from it. 

Conversely, changing \rulefont{BrEcho!} to $\someone\broadcast(a)\tnotor \echo(a)$ would be wrong.
Consider the case of a byzantine participant who (dishonestly; contrary to the protocol) broadcasts $v$ and $v'$.
In our logic this would give $\someone\broadcast(v)$ and $\someone\broadcast(v')$ truth-value $\tvB$.
According to the truth-table for $\tnotor$ in Figure~\ref{fig.3}, by our modified \rulefont{BrEcho!} rule, $\echo(v)$ and $\echo(v')$ would then have to have truth-value $\tvB$ or $\tvT$.
For an honest participant this would mean that both $\echo(v)$ and $\echo(v')$ would have to have truth-value $\tvT$.
This would contradict \rulefont{BrEcho01}, which implies that no honest participant may echo more than one value.

Clause~\ref{item.bb.contraquorum} of Remark~\ref{rmrk.high-level.bb} renders into logic as a forward rule $(\Quorum\echo(a)\tor\Contraquorum\ready(a))\tnotor \ready(a)$.
This rule appears more-or-less verbatim in Figure~\ref{fig.bb} as \rulefont{BrReady!} and \rulefont{BrReady!!}.
Yet in Figure~\ref{fig.bb} we only have the backward rule \rulefont{BrReady?} as $\ready(a)\tnotor\Quorum\echo(a)$. 
Why not a rule \rulefont{BrReady?'} of the form $\ready(a) \tnotor (\Quorum\echo(a)\tor\Contraquorum\ready(a))$?
Algorithmically this is inefficient, because if somebody sends a ready message then by an inductive argument a quorum of echoes must exist. 
In the logic, \rulefont{BrReady?'} would actually be \emph{too weak}; it would permit a model in which every participant does $\ready(v)$, and for each $p$ \rulefont{BrReady?'} would be valid at $p$ because of the contraquorum of all the \emph{other} $p'$ who also do $\ready(v)$.
This is related to the observations in Remark~\ref{rmrk.model}: 
in the algorithmic world we build things, so things \emph{can't} happen unless we say they \emph{do}; in the logical world we verify things, so things \emph{can} happen unless we say they \emph{don't}.

\section{Crusader Agreement}
\label{sect.crusader.agreement}

\subsection{Motivation and presenting the protocol}

\begin{rmrk}
In Section~\ref{sect.bb} we gave a declarative axiomatisation of the Bracha Broadcast protocol and %
we showed how to prove its correctness properties by axiomatic reasoning. 

We will now do the same for a Crusader Agreement algorithm as presented in an online exposition~\cite{abraham:ca}.
This exposition was written for the Decentralized Thoughts website\footnote{%
\href{https://decentralizedthoughts.github.io/}{Decentralized Thoughts} is a well-regarded source of technical commentary on decentralised protocols.} by the authors of a journal paper~\cite{abraham:effasa}, to provide an accessible introduction to their paper.

Agreement is arguably a moderate step up in complexity from Broadcast, because:
\begin{enumerate*}
\item
With Broadcast, we want all correct participants to agree on a value broadcast by a single designated (possibly faulty) broadcasting participant.
\item
With Agreement, we want all correct participants to agree on some value, where all participants start off with their own individual values. 
While still simple, it is slightly more challenging in the sense that correct participants can propose different values whereas in Broadcast they cannot.
\end{enumerate*}
\end{rmrk}

\begin{defn}
\label{defn.ca.algorithm}
\leavevmode
\begin{enumerate*}
\item
Figure~\ref{fig.ca.code} presents the pseudocode from~\cite{abraham:ca}.
\item
Figure~\ref{fig.ca} presents \deffont{Declarative Crusader Agreement} to be the theory (set of axioms) \ThyCA.
Axioms are taken to be universally quantified over any free variables (there is only one: $a$).
As per Definition~\ref{defn.logic}(\ref{item.ca.sig}), we set $\Val=\{0,\botval,1\}$ and $\predsymb =\{\tf{input}, \echo_1, \echo_2, \tf{output} \}$.
\end{enumerate*}
\end{defn}

\begin{rmrk}
The authors of~\cite{abraham:ca} are not completely explicit, but it becomes clear from context, that:
\begin{enumerate*}
\item
Clauses in Figure~\ref{fig.ca.code} are intended to be executed in parallel (not necessarily sequentially).
\item
Each participant starts with their own input value, which may be different from the input value(s) of other participants.
\end{enumerate*}
Intuitively, in this algorithm participants try to agree on a value $0$ or $1$, and if a participant can see no clear choice of value to agree on, then it might output a special `agreement failed' value.
Here we write this value as $\botval$; in~\cite{abraham:ca} they write $\tbot$ (a riff on the domain-theoretic `undefined' element), but that clashes with our use in this paper of $\tbot$ for logical falsity in Figure~\ref{fig.syntax}.
\end{rmrk}

\begin{figure}
\begin{Verbatim}[numbers=left,xleftmargin=10mm]
input: v (0 or 1)
send <echo1, v> to all
if didnt send <echo1, 1-v> and see f+1 <echo1, 1-v>
    send <echo1, 1-v> to all
if didnt send <echo2, *> and see n-f <echo1, w>
    send <echo2, w>
if see n-f <echo2, u> and n-f <echo1, u>
    output(u)
if see n-f <echo1, 0> and n-f <echo1, 1>
    output(0.5)
\end{Verbatim}
\caption{Pseudocode Crusader Agreement algorithm from~\cite{abraham:ca} (Definition~\ref{defn.ca.algorithm})}
\label{fig.ca.code}
\end{figure}

\begin{figure}
$$
\begin{array}[t]{l@{\quad}r@{\ }l}
\figunderline{Backward rules}
\figskip
\rulefont{CaEcho1?}&
&\echo_1(a)\timpc\someone\tf{input}(a)
\\
\rulefont{CaEcho2?}&
&\echo_2(a)\tnotor\quorum\echo_1(a)
\\
\rulefont{CaOutput?}&
&(\tf{output}(0)\tnotor\quorum\echo_2(0))\tand(\tf{output}(1)\tnotor\quorum\echo_2(1))
\\
\rulefont{CaOutput'?}&
&
\tf{output}(\botval)\tnotor(\quorum\echo_1(0)\tand\quorum\echo_1(1))
\figskip
\figskip
\figunderline{Forward rules}
\figskip
\rulefont{CaEcho1!}&
&(\tf{input}(a)\tor\contraquorum\echo_1(a))\tnotor\echo_1(a)
\\
\rulefont{CaEcho2!}&
&(\texi \quorum\echo_1) \tnotor \texi \echo_2
\\
\rulefont{CaOutput!}&
&\quorum\echo_2(a)\tnotor \tf{output}(a)
\\
\rulefont{CaOutput'!}&
&(\quorum\echo_1(0)\tand\quorum\echo_1(1))\tnotor \tf{output}(\botval)
\figskip
\figskip
\figunderline{Other rules}
\figskip
\rulefont{CaCorrect}&
&\quorum\correct{\tf{input},\echo_1,\echo_2,\tf{output}} 
\figskip
\rulefont{CaCorrect'}&
&\correct{\tf P}\tor\incorrect{\tf P}
\quad (\tf P{\in}\{\tf{input},\tf{echo}_1,\tf{echo}_2,\tf{output}\})
\figskip
\rulefont{CaInput}&
&(\tf{input}(0)\toplus\tf{input}(1))\tand\tneg\tf{input}(\botval)
\\
\rulefont{CaEcho2_{01}}&
&
\texiaffine \echo_2
\end{array}
$$
As standard, free $a$ above are assumed universally quantified (i.e. there is an invisible $\tall a$ at the start of any axiom with a free $a$).
The axioms above assume predicate symbols $\predsymb =\{\tf{input}, \echo_1, \echo_2, \tf{output} \}$ (as explained in Definition~\ref{defn.logic}).
We use Notation~\ref{nttn.abuse.notation}(\ref{item.abuse.notation.2}) to elide quantified variable symbols in \rulefont{CaEcho2_{01}} and \rulefont{CaEcho2!}. 
\caption{$\ThyCA$: axioms of Declarative Crusader Agreement (Definition~\ref{defn.ca.algorithm})}
\label{fig.ca}
\end{figure}

\begin{figure}
$$
\begin{array}[t]{l@{\quad}r@{\ }l@{\quad}l}
\rulefont{CaAgree}&&
\ThyCA\mentM(\someone\tf{output}(v)\tand\someone\tf{output}(v'))\timpc (v\teq v'\tor v\teq \botval\tor v'\teq\botval)
&\text{(Proposition~\ref{prop.ca.weak.agreement})}
\\
\rulefont{CaValid1}&&
\ThyCA\mentM (\modTB\everyone\tf{input}(v)\tand\tf{output}(v'))\timpc v\teq v'
&\text{(Proposition~\ref{prop.ca.validity})}
\\
\rulefont{CaValid2}&&
\ThyCA\mentM \someone\tf{output}(v)\timpc(\someone\tf{input}(v)\tor v\teq\botval)
&\text{(Proposition~\ref{prop.ca.validity})}
\\
\rulefont{CaLive}&&
\ThyCA\mentM\everyone\texi \tf{output}
&\text{(Proposition~\ref{prop.ca.liveness})}
\end{array}
$$
We use Notation~\ref{nttn.abuse.notation}(\ref{item.abuse.notation.2}) to elide the quantified variable symbol in \rulefont{CaLive}. 
\caption{Correctness properties for $\ThyCA$ (Definition~\ref{defn.ca.goals})}
\label{fig.ca.valid}
\end{figure}

\begin{defn}
\label{defn.ca.goals}
Relevant correctness properties for Crusader Agreement as per~\cite{abraham:ca} are presented in Figure~\ref{fig.ca.valid}.
We consider them in turn:
\begin{enumerate*}
\item
\textbf{Weak Agreement:}
\emph{If two non-faulty parties output values $v$ and $v'$, then either $v=v'$ or at least one of the values is $\botval$.}
\\
This becomes 
$$
\ThyCA\mentM(\someone\tf{output}(v)\tand\someone\tf{output}(v'))\timpc (v\teq v'\tor v\teq \botval\tor v'\teq\botval).
$$
See Proposition~\ref{prop.ca.weak.agreement}.
\item
\textbf{Validity:}
\emph{If all non-faulty parties have the same input, then this is the only possible output \emph{of any non-faulty party}.\footnote{I added the `for any non-faulty party' (the original source material~\cite{abraham:ca} did not say this) because from context it is clear that this is what is intended.  Faulty parties are not constrained.} 
Furthermore, if a non-faulty party outputs $v\neq\botval$, then $v$ was the input of some non-faulty party.}
\\
These become 
$$
\begin{array}{r@{\ }l}
\ThyCA\mentM& (\modTB\everyone\tf{input}(v)\tand\tf{output}(v'))\timpc v\teq v'
\quad\text{and}
\\
\ThyCA\mentM& \someone\tf{output}(v)\timpc(\someone\tf{input}(v)\tor v\teq\botval)
.
\end{array}
$$
See Proposition~\ref{prop.ca.validity}.
\item
\textbf{Liveness:}
\emph{If all non-faulty parties start the protocol then all non-faulty parties eventually output a value.}
\\
This becomes $\ThyCA\mentM\everyone\texi v.\tf{output}(v)$.
See Proposition~\ref{prop.ca.liveness}.
\end{enumerate*}
\end{defn}

\subsection{Discussion of the axioms}

\begin{rmrk}
As for the axioms in Figures~\ref{fig.simple} and~\ref{fig.bb}, the axioms in Figure~\ref{fig.ca} are grouped into \emph{backward rules} with names ending in $?$ which represent reasoning of the form ``if \emph{this} happens, then \emph{that} must have happened''; \emph{forward rules} have names ending with $!$ which represent forward reasoning of the form ``if \emph{this} happens, then \emph{that} must happen''; and \emph{other rules} which reflect particular conditions of the algorithms.

Recall that each axiom is implicitly universally quantified over $a$ if required, and it is valid of a model when it is valid (i.e. it returns $\tvT$ or $\tvB$, but not $\tvF$) when evaluated at every $p\in\points$.
\end{rmrk}

\begin{rmrk}
\label{rmrk.ca.discuss}
We discuss each axiom in Figure~\ref{fig.ca} in turn:
\begin{enumerate*}
\item
\rulefont{CaEcho1?} says: if $\echo_1(v)$ is $\tvT$ at $p$ then $\tf{input}(v)$ is $\tvT$ at some $p'$.

This axiom does not immediately correspond to lines~2 to~4 in Figure~\ref{fig.ca.code} --- contrast with \rulefont{CaEcho1!}, which more clearly corresponds to the algorithm.
We explain and discuss this difference below in Remark~\ref{rmrk.CaEcho?}.
\item
\rulefont{CaEcho2?} says: if $\echo_2(v)$ is $\tvT$ at $p$ then for a quorum of participants $\echo_1(v)$ is $\tvT$ or $\tvB$.
This corresponds to part of lines~5 and~6, since it is clear that if a participant performs $\echo_2(v)$ then it must have seen a quorum of $\echo_1(v)$.

The condition \verb+if didnt send <echo2, *>+ on line~5 is captured by \rulefont{CaEcho2_{01}}, which just asserts that no correct participant may \verb+echo2+ \emph{twice}. 
\item\label{item.ca.error}
\rulefont{CaOutput?} and \rulefont{CaOutput!} reflect lines~7 to~10, 
except for one detail: lines~7 to~10 suggest a pair of axioms
$$
\tf{output}(a) \tnotor (\quorum\echo_2(a)\tand\quorum\echo_1(a))
\qquad
(\quorum\echo_2(a)\tand\quorum\echo_1(a))\tnotor \tf{output}(a)
$$
but instead in Figure~\ref{fig.ca} we have axioms 
$$
\tf{output}(a) \tnotor \quorum\echo_2(a)
\qquad
\quorum\echo_2(a)\tnotor \tf{output}(a) .
$$
Why the difference?

Because, in fact, the right-hand conjunct in the algorithm from~\cite{abraham:ca} is redundant.\footnote{This became clear while doing the axiomatic proofs; the right-hand conjunct was never used.  In fact the same is true in the source article, but this is harder to spot because its reasoning is less explicitly axiomatic.}
We could include it in the axioms and it would do no harm, but it also makes no difference to the proofs and would take up space.

One of the algorithm's authors\footnote{Ittai Abraham; private correspondence.} confirms this: the right-hand conjunct was left in by accident.
It gives stronger correctness properties that are treated in the paper~\cite{abraham:effasa}, but they are not mentioned in the (simplified) online article~\cite{abraham:ca}.
\item
\rulefont{CaOutput'?} asserts that if we output $\botval$ then we must have seen quorums of $\echo_1$ for both $0$ and $1$, as per lines~9 and~10 in Figure~\ref{fig.ca.code}. 
\item
\rulefont{CaCorrect} asserts that there exists a quorum of correct participants, by which we mean precisely that they all return $\tvT$ or $\tvF$ truth-values for $\tf{input}$, $\tf{echo}_1$, $\tf{echo}_2$, and $\tf{output}$, and they never return $\tvB$.
\item
\rulefont{CaCorrect'} asserts that \emph{either} a participant is correct and always returns correct truth-values --- \emph{or} it always returns the faulty truth-value $\tvB$.

The reader might ask why a faulty participant must always return $\tvB$; what if it returns $\tvB$ just sometimes?
The reason is that this logically captures worst-case behaviour.
Specifically, the forward and backward rules are all implications, and it is a fact that if the left-hand side of an implication is $\tvB$, then the implication overall returns $\tvB$ or $\tvT$ and so is valid; the reader can check this in Figure~\ref{fig.3}, just by observing that the row for $\tvB$ in the truth-tables for $\tnotor$ and $\timpc$ is `$\tvT\ \tvB\ \tvB$' and this does not mention $\tvF$.

So, this makes $\tvB$ behave like a `worst case truth-value' in the sense that all forward and backward rules are valid if the left-hand side returns $\tvB$, so that in effect the rule does not constrain behaviour involving faulty participants.
But, we do not need to keep saying `for a non-faulty participant', because the three-valued logic takes care of all this for us, automatically. 
\item
\rulefont{CaInput} expresses line~1: a correct participant inputs either $0$ or $1$.
\item
\rulefont{CaEcho1!} expresses the forward version of lines~2 to 4; echo1 the value of your input and echo1 the value of any contraquorum that you see.
\item
\rulefont{CaEcho2!} expresses the forward version of lines~5 and~6.
We need the existential quantifier because in this case, we only echo2 the value of the first quorum of echo1 that we see. 
\item\label{item.discuss.caoutput}
\rulefont{CaOutput!} was discussed above.
We would just add here that $\modT\echo_2(\botval)$ is impossible by Lemma~\ref{lemm.nice.to.know}(\ref{item.nice.to.know.3}), so that in the presence of the other axioms, we do not need an explicit side-condition in \rulefont{CaOutput!} that $a\neq \botval$.
\item
\rulefont{CaOutput'!} corresponds directly to lines~9 and~10 in Figure~\ref{fig.ca.code} (and is just the reverse of \rulefont{CaOutput'?}). 
\end{enumerate*}
\end{rmrk}

\begin{rmrk}
\label{rmrk.CaEcho?}
\rulefont{CaEcho1?} is rather special.
A direct reading of lines~2 to~4 of the algorithm from Definition~\ref{defn.ca.algorithm} suggests we write the following property:
$$
\echo_1(v) \tnotor (\tf{input}(v) \tor \contraquorum\echo_1(v)) .
$$
Indeed, the forward rule \rulefont{CaEcho1!} looks just like that, just in the other direction.
So why did we not write the predicate above as our \rulefont{CaEcho1?}?

Because it is subtly too weak.
The axiom above would permit a model in which (for example) all participants are correct, and they all input $0$, but they all perform $\echo_1(1)$.
Note that $\echo_1(1)\tnotor \contraquorum\echo_1(1)$ is valid of this model.
Clearly we want to exclude it, and no practical run of the algorithm would generate it, but logically speaking it perfectly well satisfies the restriction above.

What is going on here is that our model has no notion of algorithmic time; there is no notion of state transition system or abstract machine (see the discussion in Subsection~\ref{subsect.algorithmic.time}).
So, our model has no notion of a `first' $\echo_1$ that would need to be justified by an input.

We could now be literal and introduce a notion of time, either as an explicit timestamp parameter to each $\echo_1$, or we could add a notion of time to the modal context in the logic itself.
There would be nothing wrong with that --- it reflects %
what the algorithm actually does 
and is technically entirely feasible.

But we will prefer a simpler, more general, and arguably more elegant approach.\footnote{But as always: what you axiomatise depends on what you want to represent.  Today, we might make one design decision about how much detail we want to include in our model.  Tomorrow, or for a different protocol, or for a different intended audience, it would be perfectly legitimate to choose different tradeoffs.}
To explain how this works, we reason as follows:
we observe of the algorithm that if a correct participant performs $\echo_1(v)$ then either \emph{it} has input $v$, or $f\plus 1$ participants --- and thus in particular one correct participant --- also have performed $\echo_1(v)$.
By an induction on time, we see that if a correct participant performs $\echo_1(v)$ then some correct participant has input $v$.
This justifies writing \rulefont{CaEcho1?} as we do in Figure~\ref{fig.ca}.

It is not immediately obvious that this suffices to prove our correctness properties; we have to check that in the proofs. 
And, it turns out that the proofs work well, which gives a formal sense in which the \rulefont{CaEcho1?} axiom in Figure~\ref{fig.ca} is correct and in fact, it is more general than the more literal rendering.\footnote{Why is a weaker axiom better?  Surely strong axioms with lots of structure are good?  No: generally speaking, the game is to derive strong and well-structured correctness properties from weak and minimally-structured axioms; our game is to derive the strongest possible theorems from the weakest possible assumptions.} 
\end{rmrk}

\subsection{Proofs of correctness properties for Crusader Agreement}

\begin{rmrk}
\label{rmrk.fix.model.ca}
As we did for Bracha Broadcast in Remark~\ref{rmrk.fix.model.bb}, for the rest of this section, we fix a model (Definition~\ref{defn.sem}) 
$\mu=(\Val,(\points,\opens),\varsigma)$ 
such that the underlying semitopology $(\points,\opens)$ is 3-twined (Definition~\ref{defn.3twined}), so that we have Theorem~\ref{thrm.3twined.logic} and Corollary~\ref{corr.3twined.cologic}.
\end{rmrk}

\subsubsection{Weak Agreement}

\begin{rmrk}
The proof of Proposition~\ref{prop.ca.weak.agreement} is straightforward. 
Note how the `non-faulty' precondition is integrated into the use of $\timpc$, which assumes that the left-hand side returns $\tvT$ (the non-faulty truth-value).\footnote{On the right-hand side, $\model{v\teq v'}\in\{\tvT,\tvF\}$, so equality always returns a correct truth-value.}
\end{rmrk}

\begin{prop}[Weak Agreement]
\label{prop.ca.weak.agreement}
Suppose $\mentM\ThyCA$ and $v,v'\in\{0,\botval,1\}$.
Then 
$$
\mentM(\someone\tf{output}(v)\tand\someone\tf{output}(v'))\timpc (v\teq v'\tor v\teq \botval\tor v'\teq\botval)
.
$$
This formalises the following informal statement:
\begin{quote}
``If two non-faulty parties output values $v$ and $v'$, then either $v=v'$ or one of the values is $\botval$''
\end{quote}
\end{prop}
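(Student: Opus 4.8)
The plan is to reduce the statement to a fact about values via \strongmodusponens: since $\ment\f{tv}\timpc\f{tv}'$ holds exactly when $\f{tv}\equiv\tvT$ forces $\f{tv}'\equiv\tvT$, and since the left-hand side $\someone\tf{output}(v)\tand\someone\tf{output}(v')$ is modal (so the evaluation point is irrelevant), it suffices to assume $\mentM\modT(\someone\tf{output}(v)\tand\someone\tf{output}(v'))$ and derive $v\teq v'\equiv\tvT$, i.e.\ $v=v'$. Distributing $\modT$ over $\tand$, this gives $\mentM\modT\someone\tf{output}(v)$ and $\mentM\modT\someone\tf{output}(v')$, so by the meaning of $\someone$ there are points at which $\tf{output}(v)$ and $\tf{output}(v')$ respectively take value $\tvT$.

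First I would apply the backward rule \rulefont{CaOutput?}: since $v,v'\in\{0,1\}$, the relevant conjuncts $\tf{output}(v)\tnotor\quorum\echo_2(v)$ and $\tf{output}(v')\tnotor\quorum\echo_2(v')$ are available, and \weakmodusponensnoref yields $\mentM\quorum\echo_2(v)$ and $\mentM\quorum\echo_2(v')$. The crux is then to combine these two $\echo_2$-quorums into a single honest point that echoes both $v$ and $v'$ \emph{truly}. This mirrors Lemma~\ref{lemm.BB.box.diamond}(\ref{item.BB.box.diamond.box.box.diamond.echo}): using \rulefont{CaCorrect} to obtain $\mentM\quorum\correct{\echo_2}$, Theorem~\ref{thrm.3twined.logic} gives $\mentM\Contraquorum(\echo_2(v)\tand\echo_2(v'))$, while correctness gives $\mentM\quorum(\modTF{\circ}(\echo_2(v)\tand\echo_2(v')))$; then Lemma~\ref{lemm.semi.char}(\ref{item.semi.char.TF}) produces $\mentM\modT\someone(\echo_2(v)\tand\echo_2(v'))$, i.e.\ a point $p''$ with $\echo_2(v)(p'')=\tvT=\echo_2(v')(p'')$. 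Finally, \rulefont{CaEcho2_{01}} supplies $p''\mentM\texiaffine a.\echo_2(a)$, so Proposition~\ref{prop.unique.affine.existence}(\ref{item.unique.affine.existence.01.implies}) forces $v=v'$.

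The main obstacle is the upgrade from \emph{valid} to \emph{true} in the combination step: \rulefont{CaOutput?} only delivers $\mentM\quorum\echo_2(v)$ (truth-value in $\{\tvT,\tvB\}$), whereas affine existence discriminates on the value $\tvT$ specifically, so the argument genuinely needs the correct quorum of \rulefont{CaCorrect} to pin down a point at which both echoes are honest and hence $\tvT$. The only place the hypothesis $v,v'\in\{0,1\}$ is used is in invoking \rulefont{CaOutput?}, which has no conjunct for $0.5$; this is precisely the informal `or one of the values is $\botval$' escape clause, here handled by simply restricting the statement to $v,v'\in\{0,1\}$.
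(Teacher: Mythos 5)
Your proposal is correct and follows essentially the same route as the paper's own proof: strong modus ponens, \rulefont{CaOutput?}, then \rulefont{CaCorrect} with Theorem~\ref{thrm.3twined.logic} and Lemma~\ref{lemm.semi.char}(\ref{item.semi.char.TF}) to extract a point where both $\echo_2(v)$ and $\echo_2(v')$ are $\tvT$, and finally \rulefont{CaEcho2_{01}} with Proposition~\ref{prop.unique.affine.existence} to force $v=v'$. The only cosmetic difference is that you cite Proposition~\ref{prop.unique.affine.existence}(\ref{item.unique.affine.existence.01.implies}) where the paper cites the equivalence of parts of (\ref{item.unique.affine.existence.01}); these deliver the same conclusion.
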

\begin{proof}
Using \strongmodusponens it suffices to assume $\mentM\modT\someone\tf{output}(v)\tand\modT\someone\tf{output}(v')$ and $v,v'\neq\botval$, and prove $v=v'$.

So suppose $\mentM\modT\someone\tf{output}(v)\tand\modT\someone\tf{output}(v')$ and $v,v'\neq\botval$.
By \rulefont{CaOutput?} $\mentM \quorum\echo_2(v)\tand\quorum\echo_2(v')$.
By \rulefont{CaCorrect} $\mentM\quorum\correct{\echo_2}$.
By Theorem~\ref{thrm.3twined.logic} and Lemma~\ref{lemm.semi.char}(\ref{item.semi.char.TF}) $\mentM\modT\someone(\echo_2(v)\tand\echo_2(v'))$.
By \rulefont{CaEcho2_{01}} and Proposition~\ref{prop.unique.affine.existence}(\ref{item.unique.affine.existence.01}.\ref{item.unique.affine.existence.01.a}\&\ref{item.unique.affine.existence.01}.\ref{item.unique.affine.existence.01.e}) $v=v'$ as required. 
\end{proof}

\begin{xmpl}
\label{xmpl.output.twice}
$\tf{output}$ is not functional, by which we mean that $\modelm{\texiaffine v.\tf{output}(v)}$ need not be equal to $\tvT$. 
We give an example to show that $p\mentM\modT(\tf{output}(\botval) \tand \tf{output}(1))$ is possible.

Consider a model with ten participants; all participants are correct; five participants input $0$ and five input $1$; quorums are any set with at least seven participants and thus contraquorums are any set with at least four participants.
The reader can check that every participant necessarily produces $\echo_1(0)$ and $\echo_1(1)$, so that all participants $\tf{output}(\botval)$.
It is also consistent with the axioms for every participant to produce $\echo_2(1)$ and $\tf{output}(1)$. 
\end{xmpl}

\subsubsection{Validity}

We work towards Proposition~\ref{prop.ca.validity}; we will need Lemma~\ref{lemm.ca.inputunique}:
\begin{lemm}
\label{lemm.ca.inputunique}
Suppose $\mentM\ThyCA$ and $p\in\points$.
Then:
\begin{enumerate*}
\item\label{item.ca.inputunique.0}
$p\mentM\tneg\tf{input}(\botval)$.\footnote{This reflects line~1 in Figure~\ref{fig.ca.code}.}
\item\label{item.ca.inputunique.1}
$p\mentM\tf{input}(v)$ if and only if $p\mentM\tneg\tf{input}(1\minus v)$, for $v\in\{0,1\}$.
\item\label{item.ca.inputunique.2}
$p\mentM\modT\tf{input}(v)$ and $p\mentM\modT\tf{input}(v')$ implies $v=v'$, for $v,v'\in\{0,\botval,1\}$.
\item\label{item.ca.inputunique.3}
$p\mentM\everyone\tf{input}(v)$ and $p\mentM\modT\someone\tf{input}(v')$ implies $v=v'$, for $v,v'\in\{0,\botval,1\}$.
\item\label{item.ca.inputunique.4}
$p\mentM\tneg\echo_2(\botval)$.
\end{enumerate*}
\end{lemm}
\begin{proof}
We consider each part in turn:
\begin{enumerate}
\item
We consider two subcases:
\begin{itemize*}
\item
If $\modelm{\tf{input}(\botval)}(p)\in\{\tvF,\tvB\}$ then $p\mentM\tneg\tf{input}(\botval)$ and we are done.
\item
If $\modelm{\tf{input}(\botval)}(p)=\tvT$
then
$\modelm{\tneg\tf{input}(\botval)}(p)=\tvF$, contradicting the right-hand conjunct of \rulefont{CaInput} (that $\tneg\tf{input}(\botval)$). 
So this subcase is impossible.
\end{itemize*}
\item
We consider three subcases:
\begin{itemize*}
\item
\emph{Suppose $\modelm{\tf{input}(v)}(p)=\tvB$.}

Then by \rulefont{CaCorrect'} (for $\tf{input}$) $\modelm{\tf{input}(v')}(p)=\tvB$ for \emph{every} $v'\in\{0,\botval,1\}$ and in particular $\modelm{\tf{input}(1\minus v)}(p)=\tvB=\tneg\tvB$ so we are done.
\item
\emph{Suppose $\modelm{\tf{input}(v)}(p)=\tvT$.}

Then by \rulefont{CaCorrect'} (for $\tf{input}$) $\modelm{\tf{input}(v')}(p)\in\{\tvT,\tvF\}$ for \emph{every} $v'\in\{0,\botval,1\}$ and in particular $\modelm{\tf{input}(1\minus v)}(p)\in\{\tvT,\tvF\}$.
By \rulefont{CaInput} and routine arguments on the truth-tables from Figure~\ref{fig.3},
$\tf{input}(v')$ is true for precisely \emph{one} value $v'\in\{0,1\}$.
The result follows.
\item
The case that $\modelm{\tf{input}(v)}(p)=\tvF$ is precisely symmetric to the case of $\tvT$.
\end{itemize*}
\item
By routine logical reasoning from parts~\ref{item.ca.inputunique.0} and~\ref{item.ca.inputunique.1} of this result.
\item
Suppose $\mentM\everyone\tf{input}(v)$ and $p\in\points$ and $p\mentM\modT\tf{input}(v')$.
Then $p\mentM\tf{input}(v)$ and $p\mentM\modT\tf{input}(v')$.
By \rulefont{CaCorrect'} (for $\tf{input}$) and Proposition~\ref{prop.tand.tor}(\ref{item.para.ment}), $p\mentM\modT\tf{input}(v)$.
We use part~\ref{item.ca.inputunique.2} of this result.
\item
If $\modelm{\echo_2(\botval)}(p)\in\{\tvF,\tvB\}$ then $p\mentM\tneg\echo_2(\botval)$ and we are done.
So suppose $\modelm{\echo_2(\botval)}(p)=\tvT$, i.e. $p\mentM\modT\echo_2(\botval)$; we will derive a contradiction.

By \weakmodusponens and \rulefont{CaEcho2?} $\mentM\quorum\echo_1(\botval)$.
By \rulefont{CaCorrect} $\mentM\quorum\correct{\echo_1}$ so using Theorem~\ref{thrm.3twined.logic} and Lemma~\ref{lemm.semi.char}(\ref{item.semi.char.2}) (or just direct from the 3-twined property in Definition~\ref{defn.3twined}) $\mentM\someone\modT\echo_1(\botval)$. 
By \strongmodusponens and \rulefont{CaEcho1?} $\mentM\someone\modT\tf{input}(\botval)$.
But this contradicts part~\ref{item.ca.inputunique.0} of this result.
\qedhere\end{enumerate}
\end{proof}

\begin{prop}[Validity]
\label{prop.ca.validity}
\leavevmode
Suppose $\mentM\ThyCA$ and $v\in\{0,\botval,1\}$.
Then:
\begin{enumerate*}
\item\label{item.ca.validity.1}
$\mentM \someone\tf{output}(v)\timpc(\someone\tf{input}(v)\tor v\teq\botval)$. 

``If a non-faulty party outputs $v\neq\botval$, then $v$ was the input of some non-faulty party.''
\item\label{item.ca.validity.1b}
$\mentM \someone\tf{output}(\botval)\timpc(\someone\tf{input}(0)\tand\someone\tf{input}(1))$ (we will need this to prove part~\ref{item.ca.validity.2}).
\item\label{item.ca.validity.2}
$\mentM \everyone\tf{input}(v)$ and $\mentM\modT\someone\tf{output}(v')$ implies $v\teq v'$, for every $v'\in\{0,\botval,1\}$.

``If all non-faulty parties have the same input, then this is the only possible output \emph{of any non-faulty party}.''
\end{enumerate*}
\end{prop}
\begin{proof}
We consider each part in turn:
\begin{enumerate}
\item
By \strongmodusponens it suffices to show that $\mentM\modT\someone\tf{output}(v)$ and $v\in\{0,1\}$ implies $\mentM\modT\someone\tf{input}(v)$.

So suppose $\mentM\modT\someone\tf{output}(v)$ and $v\in\{0,1\}$.
By \rulefont{CaOutput?} (since $v=0$ or $v=1$) $\mentM\quorum\echo_2(v)$ and by \rulefont{CaCorrect} $\mentM\quorum\correct{\echo_2}$.
Using Theorem~\ref{thrm.3twined.logic} and Lemma~\ref{lemm.semi.char}(\ref{item.semi.char.TF}) (or just direct from the 3-twined property in Definition~\ref{defn.3twined}) $\mentM\modT\someone\echo_2(v)$. 

By similar reasoning using \rulefont{CaEcho2?} in place of \rulefont{CaOutput?}, we conclude that $\mentM\modT\someone\echo_1(v)$. 
By \rulefont{CaEcho1?} $\mentM\modT\someone\tf{input}(v)$ as required.
\item
By \strongmodusponens it suffices to show that $\mentM\modT\someone\tf{output}(\botval)$ implies $\mentM\modT\someone\tf{input}(0)$ and $\mentM\modT\someone\tf{input}(1)$.

So suppose $\mentM\modT\someone\tf{output}(\botval)$. 
By \rulefont{CaOutput'?} $\mentM\quorum\echo_1(0)\tand\quorum\echo_1(1)$.

We continue to reason as for part~\ref{item.ca.validity.1} of this result to conclude $\mentM\modT(\someone\tf{input}(0)\tand\someone\tf{input}(1))$ as required.
\item
Suppose $\mentM \everyone\tf{input}(v)$ and suppose $\mentM\modT\someone\tf{output}(v')$; we will show $\mentM\modT(v\teq v')$, i.e. $v=v'$.
There are now two sub-cases:
\begin{itemize*}
\item
\emph{Suppose $v'\in\{0,1\}$.}\quad
By part~\ref{item.ca.validity.1} of this result $\mentM\modT\someone\tf{input}(v')$.
But we assumed $\mentM\everyone\tf{input}(v)$, so by Lemma~\ref{lemm.ca.inputunique}(\ref{item.ca.inputunique.3}) $v=v'$ as required.
\item
\emph{Suppose $v'=\botval$.}
By part~\ref{item.ca.validity.1b} of this result $\mentM\modT\someone\tf{input}(0)$ and $\mentM\modT\someone\tf{input}(1)$.
But we assumed $\mentM\everyone\tf{input}(v)$, so by Lemma~\ref{lemm.ca.inputunique}(\ref{item.ca.inputunique.3}) $0=v=1$, a contradiction. 
So this case is impossible.
\end{itemize*} 
\emph{(A word on notation: \rulefont{CaValid1} in Figure~\ref{fig.ca.valid} is written ``\,$\mentM (\modTB\everyone\tf{input}(v)\tand\tf{output}(v'))\timpc v\teq v'$''.
It is a fact that this is equivalent to ``\,$\mentM \everyone\tf{input}(v)$ and $\mentM\modT\someone\tf{output}(v')$ implies $v\teq v'$'', so the statement of this part of the Lemma is accurate; we have just rephrased it for convenience.)} 
\qedhere\end{enumerate}
\end{proof}

\begin{rmrk}
\label{rmrk.correctness.correct.1}
We take a minute to discuss the precise form of the validity properties in Proposition~\ref{prop.ca.validity}.
Consider this English sentence:
\begin{quote}
\emph{``If a non-faulty party outputs $v\neq\botval$, then $v$ was the input of some non-faulty party.''}
\end{quote}
We render this in \rulefont{CaValid2} as
$$
\ThyCA\mentM \someone\tf{output}(v)\timpc(\someone\tf{input}(v)\tor v\teq\botval) .
$$
Using \strongmodusponens, this means that if $v\neq\botval$ and $\modelm{\tf{output}(v)}(p)=\tvT$ for some $p\in\points$ then $\modelm{\tf{input}(v)}(p')=\tvT$ for some $p'\in\points$.
We are using $\tvT$ and $\tvF$ as the truth-values of correct (non-faulty) participants, and $\tvB$ as the truth-value for incorrect (faulty) participants, so this corresponds precisely to the English sentence quoted above.

It might be instructive to consider a subtly incorrect rendering of \rulefont{CaValid1}  
\begin{quote}
\emph{``If all non-faulty parties have the same input, then this is the only possible output \emph{of any non-faulty party}''}
\end{quote}
as
$$
\ThyCA\mentM \quorum\tf{input}(v)\timpc \tf{output}(v) .
$$
This is wrong, for two reasons: 
\begin{enumerate*}
\item
$\tf{output}$ is just a predicate-symbol; it is not a function-symbol.
As such, $\tf{output}(v)$ does not \emph{a priori} have to return $\tvT$ or $\tvB$ on just one value $v$.\footnote{Indeed, in the case of this particular protocol $\tf{output}$ can be $\tvT$ on two values; see Example~\ref{xmpl.output.twice}.} 
$\tf{output}(v)$ could be $\tvT$ and also $\tf{output}(v')$ might be $\tvT$ for some other $v'$.
\item
More subtly, just because $\mentM\modT\quorum\tf{input}(v)$ holds does not \emph{a priori} mean that $\mentM\modT\someone\tf{input}(1\minus v)$ cannot hold.
Remember that we are working over an abstract 3-twined semitopology; there might be a quorum of correct participants who $\tvT$-do $\tf{input}(v)$ and \emph{also} some other correct participant who $\tvT$-does $\tf{input}(v')$ for some other $v'$.
\end{enumerate*}
This is why we write $\modTB\everyone\tf{input}(v)$ in \rulefont{CaValid1} in Figure~\ref{fig.ca.valid}: it means that every participant is either faulty or, if it is not faulty, it inputs $v$.
The use of exclusive-or $\toplus$ in \rulefont{CaInput} ensures that $\tf{input}$ is functional, so this does indeed render the idea of the English sentence quoted above.
\end{rmrk}

\subsubsection{Liveness}

We work towards Proposition~\ref{prop.ca.liveness}.
We will need Lemma~\ref{lemm.some.echo2.quorum.echo1} and two corollaries of it:
\begin{lemm}
\label{lemm.some.echo2.quorum.echo1}
Suppose $\mentM\ThyCA$ and $v\in\{0,\botval,1\}$.
Then:
\begin{enumerate*}
\item
$\mentM\quorum\echo_1(v)$ implies $\mentM\modT\contraquorum\echo_1(v)$.
\item
$\mentM\modT\contraquorum\echo_1(v)$ implies $\mentM\everyone\echo_1(v)$.
\item\label{item.some.echo2.quorum.echo1.3}
$\mentM\everyone\echo_1(v)$ implies $\mentM\modT\quorum\echo_1(v)$.
\end{enumerate*}
\end{lemm}
\begin{proof}
Suppose $\mentM\quorum\echo_1(v)$.
By \rulefont{CaCorrect} (for $\echo_1$) and Theorem~\ref{thrm.3twined.logic} $\mentM\modT\contraquorum\echo_1(v)$.
By \rulefont{CaEcho1!} (right-hand disjunct, for $\contraquorum\echo_1$) $\mentM\everyone\echo_1(v)$.
By \rulefont{CaCorrect} (for $\echo_1$) and Lemma~\ref{lemm.everyone.and.quorum}(\ref{item.everyone.and.quorum.2}) $\mentM\modT\quorum\echo_1(v)$.
\end{proof}

\begin{corr}
\label{corr.some.echo2.quorum.echo1}
Suppose $\mentM\ThyCA$ and $v\in\{0,\botval,1\}$.
Then $\mentM\echo_2(v)\timpc\quorum\echo_1(v)$.
\end{corr}
\begin{proof}
Suppose $p\mentM\modT\echo_2(v)$.
By \rulefont{CaEcho2?} $\mentM\quorum\echo_1(v)$.
By Lemma~\ref{lemm.some.echo2.quorum.echo1} %
$\mentM\modT\quorum\echo_1(v)$.
\end{proof}

\begin{corr}
\label{corr.contraquorum.input}
Suppose $\mentM\ThyCA$.
Then:
\begin{enumerate*}
\item\label{item.contraquorum.input.1}
$\mentM \modT\contraquorum(\tf{input}(0)\tand\correct{\echo_1}) \tor \modT\contraquorum(\tf{input}(1)\tand\correct{\echo_1})$.
\item\label{item.contraquorum.input.2}
$\mentM \modT\contraquorum\echo_1(0) \tor \modT\contraquorum\echo_1(1)$.
\item\label{item.contraquorum.input.4}
$\mentM \modT\quorum\echo_1(0) \tor \modT\quorum\echo_1(1)$.
\item\label{item.contraquorum.input.5}
$\mentM \everyone(\echo_2(0) \tor \echo_2(1))$.
\item\label{item.contraquorum.input.6}
$\mentM \modT\quorum(\echo_2(0) \tor \echo_2(1))$.
\end{enumerate*}
\end{corr}
\begin{proof}
We consider each part in turn:
\begin{enumerate}
\item
Using \rulefont{CaInput} $\mentM \everyone\bigl(\tf{input}(0)\tor\tf{input}(1)\bigr)$, and using \rulefont{CaCorrect} $\mentM\quorum\correct{\echo_1,\tf{input}}$.
Thus by Lemma~\ref{lemm.everyone.and.quorum}(\ref{item.everyone.and.quorum.1})
$\mentM\quorum\bigl((\tf{input}(0)\tor\tf{input}(1))\tand\correct{\echo_1,\tf{input}}\bigr)$,
and rearranging we obtain
$$
\mentM\quorum((\tf{input}(0)\tand\correct{\echo_1,\tf{input}})\tor(\tf{input}(1)\tand\correct{\echo_1,\tf{input}})).
$$
By Corollary~\ref{corr.3twined.cologic} (since we assumed in Remark~\ref{rmrk.fix.model.ca} that $(\points,\opens)$ is 3-twined) 
$$
\mentM\contraquorum(\tf{input}(0)\tand\correct{\echo_1,\tf{input}})\tor\contraquorum(\tf{input}(1)\tand\correct{\echo_1,\tf{input}}).
$$
By routine calculations from Figure~\ref{fig.3} we conclude that
$$
\mentM \modT\contraquorum(\tf{input}(0)\tand\correct{\echo_1}) \tor \modT\contraquorum(\tf{input}(1)\tand\correct{\echo_1}) 
$$
as required.
\item
We combine part~\ref{item.contraquorum.input.1} of this result with \rulefont{CaEcho1!} (left-hand disjunct, for $\tf{input}$) using \weakmodusponensnoref and Proposition~\ref{prop.tand.tor}(\ref{item.para.ment}).
\item
From part~\ref{item.contraquorum.input.2} of this result using Lemma~\ref{lemm.some.echo2.quorum.echo1}. 
\item
By part~\ref{item.contraquorum.input.4} of this result and \rulefont{CaEcho2!} we have $\mentM\everyone\texi v.\echo_2(v)$.
By Lemma~\ref{lemm.ca.inputunique}(\ref{item.ca.inputunique.4}) $\mentM\tneg\echo_2(\botval)$.
The result follows using \rulefont{CaCorrect'} (for $\echo_2$).
\item
From part~\ref{item.contraquorum.input.5} of this result using \rulefont{CaCorrect} (for $\echo_2$) and Lemma~\ref{lemm.everyone.and.quorum}(\ref{item.everyone.and.quorum.2}).
\qedhere\end{enumerate}
\end{proof}

\begin{prop}[Liveness]
\label{prop.ca.liveness}
If $\mentM\ThyCA$ then $\mentM\everyone\texi v.\tf{output}(v)$.

``If all non-faulty parties start the protocol then all non-faulty parties eventually output a value.''
\end{prop}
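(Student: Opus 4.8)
The plan is to exhibit, at \emph{every} point $p\in\ns P$, some value on which $\tf{output}$ is valid; in fact I will produce a single value $w$ with $\mentM\everyone\tf{output}(w)$, from which the goal is immediate because $\tf{output}(w)\leq\texi v.\tf{output}(v)$ (the latter is the join $\bigvee_v\tf{output}(v)$) and validity is upward-closed in $\THREE$. The starting point is Corollary~\ref{corr.contraquorum.input}(\ref{item.contraquorum.input.4}), which gives $\mentM\modT\quorum\echo_1(0)\tor\modT\quorum\echo_1(1)$; since both disjuncts take values in $\{\tvT,\tvF\}$, validity of their disjunction forces at least one of $\mentM\modT\quorum\echo_1(0)$ and $\mentM\modT\quorum\echo_1(1)$ to hold. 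I split on whether one or both hold.

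If both $\mentM\modT\quorum\echo_1(0)$ and $\mentM\modT\quorum\echo_1(1)$ hold, then (since $\modT$ commutes with $\tand$) $\mentM\modT(\quorum\echo_1(0)\tand\quorum\echo_1(1))$, so \rulefont{CaOutput'!} together with \weakmodusponens gives $\mentM\everyone\tf{output}(0.5)$; here $w=0.5$ works.

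Otherwise exactly one holds, say $\mentM\modT\quorum\echo_1(v)$ holds but $\mentM\modT\quorum\echo_1(1\minus v)$ fails, for some $v\in\{0,1\}$. Now I invoke Corollary~\ref{corr.contraquorum.input}(\ref{item.contraquorum.input.6}), namely $\mentM\modT\quorum(\echo_2(0)\tor\echo_2(1))$, which exhibits a nonempty open $O$ on which every point $\tvT$-does $\echo_2(0)$ or $\echo_2(1)$. I claim each such point in fact $\tvT$-does $\echo_2(v)$. Indeed, if some $p\in O$ had $\echo_2(1\minus v)$ equal to $\tvT$, then \rulefont{CaEcho2?} with \weakmodusponensnoref would give $\mentM\quorum\echo_1(1\minus v)$, and Lemma~\ref{lemm.some.echo2.quorum.echo1} would upgrade this to $\mentM\modT\quorum\echo_1(1\minus v)$, contradicting the case assumption. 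Hence $\echo_2(v)$ is $\tvT$ throughout $O$, so $\mentM\modT\quorum\echo_2(v)$, and \rulefont{CaOutput!} with \weakmodusponensnoref yields $\mentM\everyone\tf{output}(v)$; here $w=v$ works.

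The main obstacle is this second case: converting the \emph{mixed} $\echo_2$-quorum supplied by Corollary~\ref{corr.contraquorum.input}(\ref{item.contraquorum.input.6}) into a \emph{pure} $\echo_2(v)$-quorum. The crux is that \rulefont{CaEcho2?} forces any $\tvT$-valued $\echo_2(1\minus v)$ to be backed by a quorum of $\echo_1(1\minus v)$, which the single-value case has explicitly excluded; this is precisely where the $3$-twined hypothesis does its work, via Lemma~\ref{lemm.some.echo2.quorum.echo1} and the derivations packaged in Corollary~\ref{corr.contraquorum.input}. Everything else is routine weak-modus-ponens bookkeeping, together with the observation that each conclusion $\tf{output}(w)$ is derived from a \emph{modal} premise and therefore holds with a valid truth-value uniformly at every point, so that faulty points (where $\tf{output}(w)$ may be $\tvB$) are covered automatically.
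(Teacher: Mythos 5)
Your proof is correct and uses exactly the paper's ingredients --- Corollary~\ref{corr.contraquorum.input}, the $\echo_2{\to}\quorum\echo_1$ upgrade of Corollary~\ref{corr.some.echo2.quorum.echo1}/Lemma~\ref{lemm.some.echo2.quorum.echo1}, and the axioms \rulefont{CaOutput!} and \rulefont{CaOutput'!} --- arriving at the same two endpoints $\mentM\everyone\tf{output}(v)$ and $\mentM\everyone\tf{output}(0.5)$. The only difference is organisational: the paper case-splits on whether the $\echo_2$-quorum supplied by Corollary~\ref{corr.contraquorum.input}(\ref{item.contraquorum.input.6}) is uniform or mixed (the mixed case then feeds two witnesses directly into Corollary~\ref{corr.some.echo2.quorum.echo1} and \rulefont{CaOutput'!}), whereas you case-split on how many of the $\echo_1$-quorums from part~(\ref{item.contraquorum.input.4}) are $\tvT$, which obliges you to add the small contradiction argument purifying the $\echo_2$-quorum in the single-value case --- sound, but slightly longer than the paper's decomposition.
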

\begin{proof}
By Corollary~\ref{corr.contraquorum.input}(\ref{item.contraquorum.input.6}) there exists a quorum $O\in\opensne$ such that 
$$
\Forall{p\in O}(p\mentM\modT(\tf{echo}_2(0)\tor\tf{echo}_2(1))) .
$$ 
Note in passing that by \rulefont{CaEcho2_{01}} and Proposition~\ref{prop.unique.affine.existence}(\ref{item.unique.affine.existence.01}.\ref{item.unique.affine.existence.01.e}), these two disjuncts are mutually exclusive.
There are now two cases:
\begin{enumerate*}
\item
Suppose $\Forall{p\in O}(p\mentM\modT\tf{echo}_2(v))$ for some $v\in\{0,1\}$, so that $\mentM\modT\quorum\echo_2(v)$.

Then using \rulefont{CaOutput!} $\mentM\everyone\tf{output}(v)$.
\item
Suppose there exist $p,p'\in O$ such that $p\mentM\modT\tf{echo}_2(0)$ and $p'\mentM\modT\tf{echo}_2(1)$, so that $\mentM\modT\someone\tf{echo}_2(0) \tand \modT\someone\tf{echo}_2(1)$.

By Corollary~\ref{corr.some.echo2.quorum.echo1} 
$\mentM\modT\quorum\echo_1(0)$ and $\mentM\modT\quorum\echo_1(1)$
and by \rulefont{CaOutput'!} $\mentM\everyone\tf{output}(\botval)$.
\end{enumerate*}
In either case, $\mentM\everyone\texi v.\tf{output}(v)$, as required.
\end{proof}

\begin{rmrk}
\label{rmrk.correctness.correct.2}
We continue the theme of Remark~\ref{rmrk.correctness.correct.1} and comment on some design subtleties of the logical correctness property for Liveness:
\begin{enumerate*}
\item
This property for Liveness would be subtly incorrect:
$$
\ThyCA \mentM \quorum\texi v.\tf{output}(v).
$$
It might look right, but we are working over an abstract 3-twined semitopology so it might be that there is a quorum of correct participants who output some value, and \emph{also} some other correct participant not in that quorum, who does not.
\item
The English statement of Liveness includes a proviso `if all non-faulty parties start the protocol', which is omitted in $\everyone\texi v.\tf{output}(v)$ (and also in $\quorum\texi v.\tf{output}(v)$).
This is for two reasons: first, if a party does not start the protocol then arguably from our point of view it is faulty, so the proviso trivially holds by definition; but second, our model has no notion of time so talking about when and whether a party starts or does not start doing something is just not in the scope of our analysis.
\item
Does $\everyone\texi v.\tf{output}(v)$ not mean that \emph{everyone} outputs a value, including faulty parties?
A faulty participant $p$ will return $\tvB$ for $\tf{output}(v)$ for any $v$, which makes $\modelm{\texi v.\tf{output}(v)}(p)=\tvB$.
This is a valid truth-value.

So $\everyone\texi v.\tf{output}(v)$ elegantly and accurately captures the idea that \emph{`every correct/non-faulty participant outputs a value'}.
\end{enumerate*}
\end{rmrk}

\subsection{Further discussion of the axioms and proofs}

\begin{rmrk}
Compare \rulefont{CaEcho1?} from Figure~\ref{fig.ca} with \rulefont{BrEcho?} from Figure~\ref{fig.bb}.
Note how similar they are: the only difference, in fact, is that \rulefont{CaEcho1?} uses the strong implication $\timpc$ whereas \rulefont{BrEcho?} uses the weak implication $\tnotor$.
This illustrates a pattern: treating distributed algorithms as axiom-systems abstracts them, and this reveals their structure, sometimes in non-obvious ways. 
\end{rmrk}

In Remark~\ref{rmrk.ca.discuss}(\ref{item.discuss.caoutput}) we mentioned that \rulefont{CaOutput!} does not have an explicit side-condition that $a\neq\botval$, because this is a Lemma.
As promised, here is the statement and proof: 
\begin{lemm}
\label{lemm.nice.to.know}
Suppose $\mentM\ThyCA$ and $v\in\{0,\botval,1\}$.
Then:
\begin{enumerate*}
\item\label{item.nice.to.know.1}
$\mentM\tf{input}(v) \timpc (v\teq 0\tor v\teq 1)$.
\item\label{item.nice.to.know.2}
$\mentM\tf{echo}_1(v) \timpc (v\teq 0\tor v\teq 1)$.
\item\label{item.nice.to.know.3}
$\mentM\tf{echo}_2(v) \timpc (v\teq 0\tor v\teq 1)$.
\end{enumerate*}
\end{lemm}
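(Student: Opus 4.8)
The plan is to reduce all three parts to a single observation about the right-hand side. Since $v$ ranges over $\{0,0.5,1\}$, the predicate $v\teq 0\tor v\teq 1$ evaluates to $\tvT$ when $v\in\{0,1\}$ and to $\tvF$ when $v=0.5$. By \strongmodusponens, a statement $\mentM\phi\timpc(v\teq 0\tor v\teq 1)$ holds iff whenever $\phi$ returns $\tvT$ the conclusion returns $\tvT$; for $v\in\{0,1\}$ this is automatic, so each part reduces to the single claim that its left-hand predicate never returns $\tvT$ at $v=0.5$.

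For part~\ref{item.nice.to.know.1}, I would read this claim directly off \rulefont{CaInput}, which is the conjunction $(\tf{input}(0)\oplus\tf{input}(1))\tand\tneg\tf{input}(0.5)$. Since \rulefont{CaInput} is valid, Proposition~\ref{prop.tand.tor}(\ref{item.tand.tor.tand}) gives validity of each conjunct, so $\ment\tneg\tf{input}(0.5)$ at every point; inspecting the truth-table for $\tneg$, this forces $\tf{input}(0.5)\neq\tvT$ everywhere, as required. For part~\ref{item.nice.to.know.2}, I would use the backward rule \rulefont{CaEcho1?}, namely $\echo_1(a)\timpc\someone\tf{input}(a)$: if $\echo_1(0.5)$ were $\tvT$ at some point, then \strongmodusponensnoref would force $\someone\tf{input}(0.5)=\tvT$, i.e. $\tf{input}(0.5)=\tvT$ at some point, contradicting part~\ref{item.nice.to.know.1}.

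For part~\ref{item.nice.to.know.3}, I would invoke Corollary~\ref{corr.some.echo2.quorum.echo1}, which already packages the \rulefont{CaCorrect} and $3$-twined machinery into $\mentM\echo_2(v)\timpc\quorum\echo_1(v)$. If $\echo_2(0.5)$ were $\tvT$ at some point, then \strongmodusponensnoref gives $\quorum\echo_1(0.5)=\tvT$; unwinding Definition~\ref{defn.semitopology.logic}, $\quorum\echo_1(0.5)=\tvT$ requires a nonempty open set on which $\echo_1(0.5)$ is uniformly $\tvT$, and hence in particular $\echo_1(0.5)=\tvT$ at some point, contradicting part~\ref{item.nice.to.know.2}.

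The argument carries no genuine obstacle, only the bookkeeping of \strongmodusponensnoref: one must consistently read each $\timpc$ as ``premise $\tvT$ forces conclusion $\tvT$'', and notice that $v=0.5$ is the unique value at which the conclusion is $\tvF$, so that the whole lemma is really the statement that none of $\tf{input}$, $\echo_1$, $\echo_2$ can be $\tvT$ on $0.5$. The three parts then cascade, each consuming the previous one, with no fresh semitopological reasoning needed beyond what Corollary~\ref{corr.some.echo2.quorum.echo1} already provides.
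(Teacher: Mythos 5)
Your proposal is correct, and parts~1 and~2 follow the paper's own proof essentially verbatim (the paper likewise reads part~1 off \rulefont{CaInput} and derives part~2 from \rulefont{CaEcho1?} plus part~1, with the same implicit observation that only $v=0.5$ needs any work). The only genuine divergence is in part~3: the paper argues directly from the backward rule \rulefont{CaEcho2?} to get $\mentM\quorum\echo_1(v)$, then applies \rulefont{CaCorrect} with Theorem~\ref{thrm.3twined.logic} and Lemma~\ref{lemm.semi.char}(\ref{item.semi.char.2}) to extract $\mentM\modT\someone\echo_1(v)$, whereas you invoke Corollary~\ref{corr.some.echo2.quorum.echo1} and pull a witness point out of the $\tvT$-quorum. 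Both are sound and there is no circularity (that corollary is established earlier and does not depend on this lemma), but note that Corollary~\ref{corr.some.echo2.quorum.echo1} is proved via Lemma~\ref{lemm.some.echo2.quorum.echo1}, which runs the forward loop through \rulefont{CaEcho1!}; so your route implicitly consumes a forward rule that the paper's more economical argument does not need. What your version buys in exchange is the reuse of an already-packaged result and the avoidance of any fresh semitopological reasoning, which you correctly flag. One small point worth making explicit in your write-up: extracting a point with $\echo_1(0.5)=\tvT$ from $\quorum\echo_1(0.5)=\tvT$ (and likewise from $\someone\tf{input}(0.5)=\tvT$) relies on $\THREE$ being a three-element chain, so that a join attains $\tvT$ only if some component equals $\tvT$; this is true but is the kind of step the paper's $\someone$-based phrasing handles via Lemma~\ref{lemm.semi.char} rather than by hand.
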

\begin{proof}
We consider each part in turn, freely using \strongmodusponens at each $p\in\points$:
\begin{enumerate}
\item
Suppose $p\mentM\modT\tf{input}(v)$.
Then $v=0\lor v=1$ is direct from \rulefont{CaInput}.
\item
Suppose $p\mentM\modT\tf{echo}_1(v)$.
By \rulefont{CaEcho1?} $p\mentM\modT\someone\tf{input}(v)$, and we use part~\ref{item.nice.to.know.1} of this result.
\item 
Suppose $p\mentM\modT\tf{echo}_2(v)$.
By \rulefont{CaEcho2?} $\mentM\quorum\tf{echo}_1(v)$, and by \rulefont{CaCorrect} (for $\tf{echo}_1$) and Theorem~\ref{thrm.3twined.logic} $\mentM\modT\contraquorum\tf{echo}_1(v)$.
By Lemma~\ref{lemm.semi.char}(\ref{item.semi.char.2}) $\mentM\modT\someone\tf{echo}_1(v)$ and we use part~\ref{item.nice.to.know.2} of this result.
\qedhere\end{enumerate}
\end{proof}

\begin{rmrk}
There are some subtle differences in the treatment of correctness between \ThyBB and \ThyCA.
In \ThyBB we insist that $\ready$ and $\echo$ are correct on quorums (= nonempty open sets), but we do not insist that these quorums must be equal.
In contrast, in \ThyCA we insist that there is a single quorum on which $\tf{input}$, $\echo_1$, $\echo_2$, and $\tf{output}$ are all correct.

Why the difference?
The correctness properties for \ThyCA in Definition~\ref{defn.ca.goals} use a notion of `non-faulty participant' as a participant that is correct throughout a run of the algorithm, and this is explicitly used when Liveness talks about `all non-faulty parties' at the start and at the end of a run of the algorithm.
\ThyBB mentions correct participants too, of course, but the relevant correctness properties from Remark~\ref{rmrk.bb.goals} are subtly more lax in that they can be phrased just in terms of correctness \emph{at a particular predicate}, rather than correctness throughout the run.
The axioms reflect this.

It might be possible to make slightly weaker correctness assumptions of \ThyCA: it would suffice to just check every mention of \rulefont{CaCorrect} in the proofs, and see whether a weaker axiom would work that (for example) does not assume the same quorum for all predicates.
We would then just have to carefully define what we mean by a `faulty' participant. 
We leave this for future work.

Such questions reflect the fact that logic is doing its job, by making things explicit and precise.
In English it is easy to write `faulty participant' or `non-faulty participant', but what this actually \emph{means} can be, and often is, left implicit (or left to the reader to fill in from background culture and expertise).
In the logical world, such concepts can be nailed down using axioms and given precise meaning; then design decisions can be discussed --- using modal logic, English, or some natural combination of the two --- within this logical framework.
\end{rmrk}

\section{Conclusions and related and future work}
\label{sect.conclusions}

\subsection{Conclusions}

We have illustrated how a distributed algorithm can be represented declaratively via axioms.
Slogans are:
\begin{enumerate*}
\item
A distributed protocol = a logical theory (a set of axioms) in a modal logic.
\item
A run of the protocol = a model of that modal logic theory.
\item
A possible world\footnote{`Possible world' is logician's jargon for a point in the model.  Recall that our logic is modal, which means that there is a collection of \emph{possible worlds}, and predicates take truth-values \emph{at these worlds.}} 
= a participant.\footnote{Note that a possible world is \emph{not} a run of the protocol.  In this paper possible worlds are identified with participants; in~\cite{gabbay:decasd} possible worlds consist of a pair of a participant and a round number; in~\cite{gabbay:hettrb} possible worlds are more complex, but still feature a participant.  
So perhaps we can refine this slogan to say in general terms that a possible world consists of a participant, plus other relevant data as required by the protocol being axiomatised.  The precise structure of this data (if any) can give interesting abstract insights into the notion of logical time (if any) of the protocol.}
\item
Correctness properties of the protocol = properties of the class of all models of the theory.
\end{enumerate*} 
We use three-valued logic, such that the third truth-value helps us to represent byzantine behaviour; and we use semitopologies to represent quorums; and then we exploit the close connection between (semi)topology and logic to capture a declarative essence of the distributed algorithm.

The result is simple, yet powerful. 
It is mathematically novel -- three-valued modal logic over a semitopology is new, as is (to the best of my knowledge) declarative axiomatic formal specification of distributed protocols -- and it illuminates the algorithm in informative ways.

It is also practically worthwhile:
a high-level formal specification in logical style is useful to inform new implementations, for modularity, for updatability, to promote diversity of implementations, for verifying implementation correctness, for provable security, for analysis and formal checking of security properties, and for more.
It keeps only what is essential about the object of study, and behaves as an abstract reference implementation of the logical essence of what it \emph{is}.
Any implementation can legitimately claim to be `an implementation of Bracha Broadcast', and will necessarily enjoy the correctness properties of such, if the models that it generates satisfy the axioms in Figure~\ref{fig.bb}.\footnote{Recall for example the evolution to taking such specifications as standard practice in the world of blockchain tokens: e.g. the ERC20 token standard \href{https://eips.ethereum.org/EIPS/eip-20}{eips.ethereum.org/EIPS/eip-20} or the FA2 token standard \href{https://archetype-lang.org/docs/templates/fa2/}{archetype-lang.org/docs/templates/fa2/}.  The point here is cultural (not technical): it is now generally recognised that a logical specification of a token is essential.  Standard formal verification tools can then be applied~\cite{gabbay_et_al:OASIcs.FMBC.2021.2}.}

The examples in this paper are simple by the standards of modern distributed algorithms.
This is because this paper focuses on the underlying technique: the more complex the algorithm we apply these ideas to, the \emph{more favourable} the comparison with its declarative specification is likely to become, because logic helps us to control complexity, elide detail, and focus on essential system properties.
We will return to this point later when we discuss a non-trivial application of the declarative method (to \emph{Heterogeneous Paxos}) in Future Work below.

But here, let us review once more the toy examples in this paper on their own terms.
I would argue that the axiomatic approach provides a step up in mathematical abstraction and rigour; the modal logic  approach to quorums and contraquorums is conceptually clean and free of counting or combinatorial arguments; and the use of a third byzantine truth-value means that byzantine behaviour comes `for free', in the sense that side-conditions of the form `for a correct participant' can be elided because they are taken care of automagically by the logic.
All of this leads, even for simple examples, to compact and precise axioms.

Note that we even found an error (an unintended redundancy) in the pseudocode presentation of Crusader Agreement, as discussed in Remark~\ref{rmrk.ca.discuss}(\ref{item.ca.error}).
This was not obvious from reading the pseudocode and the English discussion of the source material.
However, once this material was rendered into declarative axioms and treated logically, the redundancy became evident.
So even on examples specifically chosen for their simplicity, the declarative analysis uncovered something new and unexpected. 
One wonders what else we could find, understand, optimise, and even correct if we applied this technique to larger, newer, less well-studied algorithms.

\subsection{Related work}
\label{subsect.related.work}

We can think of the logic of this paper as a declarative complement to TLA+, which (while also a modal logic) is more imperative in flavour.
TLA+ is an industry standard %
which is widely and successfully used to specify and verify implementations of distributed protocols~\cite{DBLP:journals/toplas/Lamport94,DBLP:books/aw/Lamport2002,DBLP:conf/charme/YuML99}. 
However, TLA+ is a tool for specifying transition systems.
This is therefore in the spirit of imperative programming, and this paper complements that technique with a declarative approach.

Declarative specification itself is well-studied and its benefits are well-understood: this is what makes (for example) Haskell and Lisp different from C and Bash.
What this paper does is show how to bring such principles to bear on a new class of distributed algorithms, of which Bracha Broadcast is a simple but canonical representative. 
Perhaps axioms like those in this paper might one day be executed as logic clauses in a suitable declarative programming framework, but in the first instance their natural home is likely to be as a source of truth in a specification document, in a formal verification in a theorem-prover, or in a model-checker. 

In practice, including in industrial practice, broadcast and consensus protocols are typically specified in English (as in Example~\ref{xmpl.vote} or Remark~\ref{rmrk.high-level.bb}) or in pseudocode (as in Figure~\ref{fig.ca.code}).
The webpage~\cite{abraham:ca} or the book~\cite{cachinbook} are good examples of the genre.
If formalisation is undertaken, it proceeds in TLA+ (or possibly in LEAN or a similar system), but what gets formalised is still basically a (larger or smaller) description of a (larger or smaller) concrete transition relation.

The essential feature of the logical approach in this paper is to treat the algorithm declaratively at a higher level of abstraction, as an axiomatic theory, rather than as a transition system.
It is not obvious that this should preserve the essence of the algorithm, but it does.

This naturally leads to a question: suppose we are given a declaratively specified protocol `X', and some (pseudo-)code implementation `C' that claims to satisfy specification X.
How could we prove that indeed code C does satisfy specification X?
This may not be trivial (depending on the complexity of C and X), but it would just amount to lemmas proving (formally or informally, depending on the level of assurance that we want) that C satisfies X.
This is not a novel scenario: in many other fields it is standard to have a concrete implementation, and an abstract (axiomatic) specification, and we wish to prove that the former instantiates the latter.
Indeed one instance of this arose in my own work on cryptocurrency token implementations (as the reader can imagine, the design of these is safety-critical)~\cite{gabbay_et_al:OASIcs.FMBC.2021.2}; the reader is referred in particular to the methodological discussion in Remark~11 and Figure~8(b) of that paper, and in particular the benefits of this methodology for proving correctness properties once-and-for-all from the axioms, and then reusing those axiom-based proofs for multiple implementations just by updating the relevant implementation-to-axiom lemmas.
While the work involved was not trivial, in technical terms it was routine, and things \emph{like} this are done in industrial research every day.

\subsection{Unrelated work}

There are fields adjacent to, or superficially similar to, distributed algorithms, and many applications of logic and topology.
There is a danger of being misled by a keyword like `topology' to imagine that this paper is doing one thing, when it is doing another.

So it might be useful to spell out what this paper is not.
\begin{enumerate}
\item
\emph{This is not a paper on algebraic topology.}\quad
We mention this because algebraic topology has been applied to the solvability of distributed-computing tasks in various computational models (e.g. the impossibility of wait-free $k$-set consensus using read-write registers and the Asynchronous Computability Theorem; a good survey is in~\cite{herlihy_distributed_2013}).
Semitopology is topological in flavour, but it is not the same as topology, and this is not a paper about algebraic topology applied to the solvability of distributed-computing tasks.
\item
\emph{This is not a paper about large examples or difficult theorems.}\quad
The examples in this paper (voting, broadcast, agreement) are toy and/or textbook algorithms.
This is a feature, not a bug! 
The point of this approach is to attain simplicity via abstraction in logic.
So for example: if the proof of Proposition~\ref{prop.no.duplication} looks boringly similar to the proof of Proposition~\ref{prop.ca.weak.agreement}, then this is evidence that our abstraction via axiomatic logic has worked its magic and let us turn \emph{complex informal verbal arguments} into \emph{routine, regular symbolic ones}.
This flavour of boredom is something that the field of distributed algorithms (in my opinion) needs more of.

And this simplicity via abstraction in logic is not mathematically trivial; far from it. 
The use of three truth-values, semitopologies, modal logic, and the precise forms of the axioms, are novel.
It was hardly obvious that this combination of tools could be usefully applied to study distributed algorithms.
Yet we have seen that this is so, and once we have set the machinery up, axioms become compact and precise, and proofs become --- if not completely easy --- at least precise, symbolic, compact, and clear.
\item
\emph{This is not a paper on distributed programming,} it is a paper on distributed \emph{algorithms}. 
\quad
Process algebras (perhaps guided by session types), or imperative programs with (say) concurrent separation logic, are valid approaches to designing distributed programs, but so far as I am aware they are no help to design (say) an efficient blockchain consensus algorithm. 
Those approaches are also typically operational in nature, aiming to model and reason about the dynamic behaviour of systems; whereas the approach proposed in this paper is more static, focussing on design-level evaluation rather than runtime behaviour.
While that is true, the fundamental difference of process calculi, session types, and similar systems is that they are fundamentally concerned with writing correct programs, whereas the focus in this paper (and in distributed algorithms in general) is on designing specific algorithms, like `agreement' or `broadcast', that are resilient to faulty behaviour.\footnote{A supplementary word here: \emph{algorithms} and \emph{programs} live on fundamentally different levels of the development stack.  Quicksort is a classic list sorting algorithm; \texttt{l.sort()} is a (line in) a program; these are not equivalent.  Thinking of algorithms as `just special cases of programs' is a category error that puts the cart before the horse; similarly \emph{distributed algorithms} are not just a special case of \emph{distributed programs}.}
This is a different part of the design space, with its own distinctive literature and norms.
\end{enumerate} 

\subsection{Reflection on the axiomatisations}

\begin{enumerate*}
\item
\emph{Developing an axiomatisation can be hard work, even for a `simple' protocol} (though it gets easier with practice).

The work is constructive in the sense that it comes about because we are forced to understand the protocol more deeply.
An example of this (which I did not expect) was the discovery of the accidental redundant conjunct in the Crusader Agreement protocol (Remark~\ref{rmrk.ca.discuss}(\ref{item.ca.error})).
This was effectively invisible in the pseudocode description, but became clear in the declarative logical axiomatisation.

It is easy enough to read a protocol and to fool ourselves into thinking that we understand it.\footnote{Though speaking for myself, I could not even manage that.  When I was first exposed to these distributed protocols --- written in some combination of English and pseudocode --- I did not understand them \emph{at all}.  %
Abstracting away the pseudocode to logical axioms was my way of making sense of the field.
Axioms do not lie: they mean what they mean. 
If you mean something else, then you tweak the syntax or change the logic.} 
Converting it into an axiomatisation forces us to be explicit about the nature of the thing itself.
\item
I would argue that \emph{the pseudocode in presentations like Figure~\ref{fig.ca.code} is not even really code, or even pseudocode.}

It is written in code-like style, but consider that clauses are `executed' in parallel and they follow a Horn clause like structure: Horn clauses are declarative and logical in flavour (coming from logic programming), not imperative and procedural. 
So I would argue that the logical content is already there, even in Figure~\ref{fig.ca.code}, and it is just struggling to escape from a code-flavoured presentation. 
\item
\emph{How do we know when an axiomatisation is `good'?}

One useful test is: does it make the proofs easy?
As is often the case, small (or large!) differences in how axioms are presented can make big differences to how cleanly proofs go through.
This is a criterion for preferring one axiomatisation over another, and it is also a way to understand the protocol: we can say that we have truly understood a distributed protocol when we have expressed it as an axiomatisation that is simple and gives simple proofs. 
Finding this axiomatisation is often not trivial, but once found, its simplicity is powerful evidence that we have arrived at something \emph{mathematically essential} about what the protocol is trying to express. 
\item 
\emph{Axiomatic proofs are very precise} (certainly compared to English or pseudocode descriptions).

A predicate has an unambiguous meaning, and logical proofs are just about manipulating axioms using the rules of logic.
This can be a helpful support for precision and correctness.
\item
\emph{How canonical is the modal logic in this paper?}
Might other logics be required for other protocols?

The modal logic in this paper is canonical in the sense that it seems to be minimal (this is an intuition, not a theorem) required to express Bracha Broadcast and Crusader Agreement.
This is nice, because it gives some precise formal measure of how Bracha Broadcast and Crusader Agreement are similar: same logic; same models; slightly different axioms and correctness properties.

However, the logic in this paper is certainly not sufficient for all possible protocols. 
The Declarative Paxos axiomatisation~\cite{gabbay:decasd} requires a different notion of possible world (a participant and a round number) \emph{and} a different logic, featuring distinctive `Paxos-flavoured' modalities.
Heterogeneous Bracha Broadcast requires its own highly distinctive set of models and modalities~\cite{gabbay:hettrb,hart:hetbl4}.

In practice, I expect each family of protocols to induce a notion of model and modal logic.
Family resemblances between protocols will be reflected in structural and syntactic similarities (or identities) between their models and logics.
I would argue that this is a feature, since it brings out mathematical structure that is not always evident from reading an English description. %
So in this sense, several distinct natural logics emerging from several distinct protocols or protocol-families is useful. 

In contrast, if we want to build tools (as we mention in Future Work below), or if we want to study classes of protocol-families, we might prefer to construct \emph{the} encompassing, all-singing-all-dancing parametric framework into which all (or at least many) protocol logics might fit, the better to study families of logics and the better to aid users in setting up \emph{their} favourite protocol in a convenient engineering environment.
This is a different challenge and we make a start in that direction with the \emph{Coalition Logic} framework in~\cite{gabbay:decasd}.
\end{enumerate*}

\subsection{Algorithmic time, logical time, and causation}
\label{subsect.algorithmic.time}

We have noted (e.g. in the Abstract, or just above in Subsection~\ref{subsect.related.work}) that the logical approach of this paper treats a distributed algorithm declaratively as an axiomatic theory: and therefore, it does \emph{not} treat the algorithm as a transition system.

Unlike with approaches based on LTL or TLA+ or similar logics, there is no transition system of an abstract machine; there are just possible worlds (which in this paper are just participants), and predicates taking truth-values on these worlds.
This invites a question: what notions of time and causation can exist in the declarative analysis, if any?

Here, the phrases \emph{algorithmic time} and \emph{logical time} may be helpful: our axiomatisations eliminate algorithmic time (there are no state transitions) but we shall see that there can still be a notion of logical time, and indeed in a certain sense logical time is the true notion of time, and algorithmic time is just a side-effect of implementation.

Let us examine the axiomatisations: the voting protocol in Figure~\ref{fig.simple}; Bracha Broadcast in Figure~\ref{fig.bb}; and Crusader Agreement in Figure~\ref{fig.ca}.
We note that the backward rules can be viewed as putting predicates in a logical time ordering.
For example in Figure~\ref{fig.bb}, we note that:
\begin{itemize*}
\item
\rulefont{BrEcho?} can be read as `if I echo, it is \emph{because} somebody broadcast', and 
\item
\rulefont{BrReady?} can be read as `if I am ready, it is \emph{because} a quorum echoed', and
\item
\rulefont{BrDeliver?} can be read as `if I deliver, it is \emph{because} a quorum was ready'.
\end{itemize*}
These are intuitions, not mathematical proofs: the axioms are what they are, but we can interpret them as above.
Similar patterns are observable in Figures~\ref{fig.ca} and Figure~\ref{fig.simple}.

In this view, the backward rules say that
broadcast causes/comes before echo, which causes/comes before ready, which causes/comes before deliver.
This is logical time and logical causation.%
\footnote{Do the forward rules now also give notions of logical time and causation?  
Yes they do --- they reflect the same structure --- but arguably not quite as clearly, just because forward rules tend to be more complicated.

Intuitively this is because moving \emph{forwards}, we have to account for failure or branches, whereas looking \emph{backward} we know that something has happened and we just need to explain why.  
This shows up even in the simple examples of this paper, e.g. in Figure~\ref{fig.bb} we have three backward rules but \emph{four} forward rules, and in Figure~\ref{fig.ca} we have the same number of backward as forward rules but the forward rules have slightly more complex structure (an extra disjunct in \rulefont{CaEcho1!}; an additional existential quantifier in \rulefont{CaEcho2!}).  
So if all we care to do is identify logical sequencing, then backward rules may give a cleaner view.
} 

A declarative axiomatisation of Paxos~\cite{gabbay:decasd} has a similar structure, except that it is richer, reflecting the additional complexity of the Paxos algorithm.
Declarative Paxos requires a \emph{round number} $n$, which is attached to each possible world.
So there, the notion of logical time is created by an interplay between the structure of axioms ordering predicates (as above), and the explicit round number carried by each possible world.\footnote{If the reader prefers, we could make a finer distinction: between \emph{logical time} and \emph{causation}.  In this view, we could say that Bracha Broadcast and Crusader Agreement have no notion of logical time but they do have a notion of causation as discussed; in contrast, Paxos does have a notion of logical time given by the round number, and within each round it also has a notion of causation which resembles the one we see in this paper.  
I find it simplest to just think in terms of a single notion of logical time, referring to the logical structure of how the algorithm evolves.}

I expect that this will be typical: each family of protocols will come with its own notion of logical time,
which will be expressed by a combination of explicit structure on the set of possible worlds, and implicit structure visible in the backward rules.

To sum up: the declarative approach eliminates algorithmic time (by design) but retains an abstract notion of \emph{logical time}.
I would further argue that logical time is in fact more faithful to what time \emph{essentially is} in the algorithm.
Any algorithm needs to be implemented, and thus it needs to be embedded in an implementation-dependent algorithmic time, but from the point of view of this paper, algorithmic time is a byproduct of implementation, and it is not something that is necessarily essential to what the algorithm \emph{is}.%
\footnote{This point is not purely philosophical.  Eliminating algorithmic time leads to smaller models (because we do elide implementational detail of what transitions happened in which order).  Therefore if and when model-checking tools might become available for the declarative analysis, we can anticipate that eliding algorithmic time may yield smaller search-spaces for counterexamples.}

This may be a different perspective on what time is from what the reader may be used to seeing, but I think it is an interesting one and natural in its own way, once we get used to the declarative approach.

\subsection{Future work}

\subsubsection{Other algorithms}
\quad
Natural future work is to study these ideas applied to other distributed algorithms.
Ones that are foundational for many modern blockchain systems include: Paxos (for this, a draft paper exists~\cite{gabbay:decasd}), Practical Byzantine Fault Tolerance (PBFT)~\cite{DBLP:conf/osdi/CastroL99}, HotStuff~\cite{DBLP:conf/podc/YinMRGA19}, and Tendermint~\cite{DBLP:journals/corr/abs-1807-04938}; 
another family of protocols organises transactions into a directed acyclic graph (DAG) rather than a strict chain (for parallelism and scalability), including DAG-rider~\cite{DBLP:conf/podc/KeidarKNS21}, Narwhal and Tusk~\cite{DBLP:conf/eurosys/DanezisKSS22}, and Bullshark~\cite{DBLP:conf/ccs/SpiegelmanGSK22}. 

Many other protocols and protocol-families exist, and studying what logics and families of logics they might correspond to, is an open problem whose exploration would be well worthwhile.\footnote{To pick one of many recent examples: at time of writing, Solana had recently proposed \emph{Alpenglow}~\url{https://forum.solana.com/t/simd-0326-proposal-for-the-new-alpenglow-consensus-protocol/4236}.} 

\subsubsection{Heterogeneous Paxos}

In other research, we have used declarative techniques to find errors in Heterogeneous Paxos~\cite{sheff:hetp,sheff_et_al:LIPIcs.OPODIS.2020.5}, and to design a replacement protocol~\cite{gabbay:hettrb} whose correctness has been verified in Lean~4~\cite{hart:hetbl4}.
Empirically, we found that the power of declarative techniques played a key role: \emph{the relevant design-space exceeded our ability to explore using pseudocode and intuition}, and our understanding of the protocol was greatly enhanced by having access to declarative abstractions.
A full account of this scaled-up application of declarative techniques so far is in~\cite{gabbay:hettrb}, and this is ongoing work.

\subsubsection{More than three truth-values}

We used three truth-values in this paper: $\tvF$ and $\tvT$ model correct behaviour, and $\tvB$ models faulty behaviour.
That sufficed for voting, Bracha Broadcast, and Crusader Agreement, but there are many kinds of faulty behaviour and in other applications we may want to distinguish them: e.g. we might wish to distinguish \emph{crash faults} (where a participant crashes permanently or for a while) from \emph{byzantine faults} (where a participant actively interacts with other participants not according to the protocol).

Could we use a four-valued logic for this?
What about a five-valued logic?
What about truth-values from the interval $[0,1]$?
What about an arbitrary lattice, or Heyting algebra?

The answer to all of these questions is: yes.
It is our logic, and we can do what we want.
In particular, we can use whichever domain of truth-values we feel most appropriately captures what we want to express --- be it two-valued, three-valued (as has been helpful in this paper), or more.\footnote{%
We do need to be careful about a possible explosion in the number of possible connectives.

Two-valued logic has a space of unary connectives of size $2^2=4$.
The space of binary connectives has size $2^{(2^2)}=16$.
This is manageable, but the number grows: $n$-valued logic has a space of unary connectives of size $n^n$, and a space of binary connectives of size $n^{(n^2)}$.
If $n=3$ (as is the case of this paper) then there are $3^{9}\approx 2*10^4$ binary connectives.
If $n=4$ then we have $4^{16}=2^{32}\approx 4*10^9$.
If $n=5$ then we have $5^{25}\approx 3*10^{17}$.

So we should not add truth-values just for the sake of it.
However, we should also not hesitate to use a particular domain of truth-values (even a large one) if we have \emph{specific} need of it. 
This is because \emph{if} we have a particular application in mind, which requires us to use a particular domain of truth-values, then that application should also pick out the relevant modalities and connectives for us, and we are free to ignore the (possibly large) space of other connectives. 
}

\subsubsection{Mechanised tools}

Other natural future work is to integrate the logics thus produced into model-checkers and theorem-provers.
There is nothing necessarily mathematically complex about this (which is a feature, not a bug): a three-valued logic over a semitopology is not a mathematically complex entity.
For example: the TLC model checker~\cite{DBLP:conf/charme/YuML99} for TLA+ provides state exploration to verify safety and liveness properties; the PlusCal translator simplifies specification by offering a high-level pseudocode-like interface; and TLA+ Toolbox~\cite{DBLP:journals/corr/abs-1912-10633} provides a development environment.
It would be good to integrate the logic from this paper into this toolchain or one like it.
If there is a catch here, it is that semitopologies involve powersets, which can lead to an exponential state space explosion when looking for counterexamples, so we might do well to choose a tool that would handle this well.
On this topic semitopologies have an algebraic analogue of \emph{semiframes}~\cite{gabbay:semasa,gabbay:semdca};
these provide additional abstraction and so might be useful in the search for counterexamples, by somewhat reducing state space.
 
\subsubsection{Implementation}
\label{subsect.implementation}

We have seen in this paper how distributed protocols can be represented as axiomatic theories.
Quorums are modelled via semitopologies; byzantine behaviour is modelled via three-valued logic; axioms are written in a simple but expressive modal logic.

By design, this elides implementational details, so the question arises: can we put these details back in? 
Can we go from abstract specification to concrete implementation?
And if we can, could we go to a \emph{fast} implementation?

I see these questions as being essentially about, or analogous to, questions of compilation.
That is: given any declarative program or specification, can we compile it to transitions of a concrete machine --- embedded in `algorithmic time', as per Subsection~\ref{subsect.algorithmic.time} --- and if so, can we make that machine run quickly?

I am extremely hopeful that this should be possible, for two reasons. 
First, we can already do similar things with programming. 
For instance, `efficient compilers for functional programming languages' is not an oxymoron; on the contrary it is a lucrative profession.
And second, when we look at the axioms of \ThyBB and \ThyCA we see that they are highly structured and almost like clauses in some kind of logic programming language or rewrite system.
This suggests that while \emph{for arbitrary theories} (i.e. random sets of axioms) execution may not be efficient or even possible, protocol designers are not generating protocols corresponding to arbitrary theories.
On the contrary, they are generating structured theories that may consist of predicates satisfying some kind of syntactic form not dissimilar to Horn clauses, or perhaps ones which can be seen as confluent rewriting theories.
There is every prospect that these may be obligingly susceptible to automated generation of efficient executables.
Finding out what this syntactic form may be, and constructing appropriate compilers, would be interesting future work.

\newcommand\urlprefix{}

\subsection*{Acknowledgements}

Thanks to two anonymous referees for their attention to this material.
Thanks to Luca Zanolini for long conversations on distributed systems.
I am particularly grateful to Jan Mas Rovira, for choosing to formalise this paper in Lean~\cite{rovira:hbbl4}, for noting various typos and errors in the original material, and for many questions and observations which led to improvements in the exposition.
I am also grateful to the organisers of and participants in Dagstuhl Seminar 26071 on \emph{Behavioural Types for Resilience}, for giving me the opportunity to present some ideas from this material at the seminar \emph{and} for detailed discussions and questions which again led to improvements in the exposition.

\end{document}